\pgfplotsset{compat=1.18}
\newcommand{\E}{\text{E}}
\newcommand{\Bern}{\text{Bern}}
\newcommand{\KL}{\text{KL}}
\newcommand{\Var}{\text{Var}}
\newtheorem{theorem}{Theorem}
\newtheorem{proposition}{Proposition}
\newtheorem{lemma}{Lemma}
\newtheorem{corollary}{Corollary}
\theoremstyle{definition}
\newtheorem{definition}{Definition}
\newtheorem{assumption}{Assumption}
\theoremstyle{remark}
\title{Optimal Use of Preferences in Artificial Intelligence Algorithms}
\author{Joshua S. Gans\thanks{Rotman School of Management, University of Toronto and NBER. Thanks to Philipp Strack for helpful comments, the SSHRC for financial assistance, and to Refine.ink, ChatGPT 5.2 and Claude Opus 4.5 for valuable research assistance. Responsibility for all errors remains my own.}}
\date{\today}
\begin{document}
\maketitle

\begin{abstract}
Machine learning systems embed preferences either in training losses or through post-processing of calibrated predictions. Applying information design methods from Strack and Yang (2024), this paper provides decision-problem-agnostic conditions under which separation—training preference-free and applying preferences ex post is optimal. Unlike prior work that requires specifying downstream objectives, the welfare results here apply uniformly across decision problems. The key primitive is a diminishing-value-of-information condition: relative to a fixed (normalised) preference-free loss, preference embedding makes informativeness less valuable at the margin, inducing a mean-preserving contraction of learned posteriors. Because the value of information is convex in beliefs, preference-free training weakly dominates for any expected-utility decision problem. This provides theoretical foundations for modular AI pipelines that learn calibrated probabilities and implement asymmetric costs through downstream decision rules. However, separation requires users to implement optimal decision rules. When cognitive constraints bind—as documented in human-AI decision-making—preference embedding can dominate by automating threshold computation. These results provide design guidance: preserve optionality through post-processing when objectives may shift; embed preferences when decision-stage frictions dominate.\textit{Journal of Economic Literature} Classification Numbers: C45, C53, D81, D82, D83.
\\

\noindent \textit{Keywords}: loss functions, proper scoring rules, Bayesian decision theory, information design, convex order, cost-sensitive learning, post-processing.
\end{abstract}

\newpage

\section{Introduction}\label{sec:intro}

Prediction is rarely an end in itself. Predictions are inputs into decisions: lending and insurance, admissions and hiring, medical triage, and the targeting of scarce public resources. This creates a fundamental architectural choice that appears in almost every applied pipeline, but is most salient in the deployment of artificial intelligence (AI). When building a prediction algorithm, should the training objective directly encode the eventual user's specific preferences (e.g., asymmetric costs or safety priorities), or should the algorithm be trained to learn the underlying truth (calibrated probabilities), leaving preferences to be implemented afterwards? This is not merely a technical question of hyperparameter tuning; it is a choice between building a specialised tool tied to a single, static objective or a general capability that preserves the option value of the information for diverse and changing downstream tasks.

A classical decision-theoretic view suggests that preferences belong in the loss function. In particular, \cite{grangerMachina2006} show how a decision problem induces an evaluation loss: the utility shortfall from acting on a forecast rather than on perfect information. When the forecasting method is taken as fixed, this logic is compelling: changing the evaluation loss changes how forecasts are ranked and how actions are chosen, but it does not change what information the forecast contains.\footnote{This approach has its roots in the work of \cite{varian1990goodness, zellner1986bayesian}; see \cite{gans2025} for a discussion in the context of AI and machine learning.}

To illustrate the main findings of the paper, it is useful to put this in the context of a simple statistical decision problem. Suppose that there is an unknown state of the world, $Y$, that can take on values of $0$ or $1$. There are two actions, $a \in \{0,1\}$, where the decision maker receives $0$ if they match the action to the state but face costs $c_{\mathrm{FP}}$ if $(a,Y) = (1,0)$ and $c_{\mathrm{FN}}$ if $(a,Y) = (0,1)$. Suppose the prior probability of $Y=1$ is $\frac12$. 

There are two ways to approach making this decision. Each involves generating a signal from available data with posterior, $q = \Pr[Y=1|\cdot]$. First, you could generate a signal that \textit{embeds preferences} in the signal itself. For instance, that signal could use class-weighted cross-entropy, $L^{w}(p,1)=-w_1\log p$ and $L^{w}(p,0)=-w_0\log(1-p)$, that generates a Bayes-optimal report at belief $q$ of:
\[
s(q)=\frac{w_1 q}{w_1 q+w_0(1-q)}.
\]
That is, the report is a cross-weighting of the posterior. In this case, the decision-maker will choose $a=1$ if and only if $s(q) \ge \frac12$ yielding an expected payoff of:
\[
\mathbb E[u_{\mathrm{emb}}]
=
-\mathbb E\!\left[\, q\,c_{\mathrm{FN}}\mathbf 1\!\left\{s(q)<\tfrac12\right\}
+(1-q)\,c_{\mathrm{FP}}\mathbf 1\!\left\{s(q)\ge\tfrac12\right\}\right].
\]
The second approach would be for the decision-maker to obtain a signal without any weights. The decision-maker then \textit{post-processes preferences}. The agent forms posterior belief $q$ and will optimally choose $a=1$ if and only if $q\ge \tau$, where
\[
\tau=\frac{c_{\mathrm{FP}}}{c_{\mathrm{FP}}+c_{\mathrm{FN}}}.
\]
This yields an expected payoff of:
\[
\mathbb E[u_{\mathrm{post}}]
=
-\mathbb E\!\left[\, q\,c_{\mathrm{FN}}\mathbf 1\{q<\tfrac{c_{\mathrm{FP}}}{c_{\mathrm{FP}}+c_{\mathrm{FN}}}\}+(1-q)c_{\mathrm{FP}}\mathbf 1\{q\ge\tfrac{c_{\mathrm{FP}}}{c_{\mathrm{FP}}+c_{\mathrm{FN}}}\}\right].
\]
To compare this with preference-embedding note that, since $s(q)\ge \tfrac12 \Longleftrightarrow q\ge \frac{w_0}{w_0+w_1}$, $\mathbb E[u_{\mathrm{emb}}]$ can be written as
\[
\mathbb E[u_{\mathrm{emb}}]
=
-\mathbb E\!\left[\, q\,c_{\mathrm{FN}}\mathbf 1\!\left\{q<\tfrac{w_0}{w_0+w_1}\right\}
+(1-q)\,c_{\mathrm{FP}}\mathbf 1\!\left\{q\ge\tfrac{w_0}{w_0+w_1}\right\}\right].
\]
Note that $\mathbb E[u_{\mathrm{post}}] = \mathbb E[u_{\mathrm{emb}}]$ if, following the \cite{grangerMachina2006} recommendation, $w_0:w_1=c_{\mathrm{FP}}:c_{\mathrm{FN}}$.

This equivalence highlights the challenge associated with preference-embedding. The problem is that, in practice, in machine learning, the loss is used not only to \emph{evaluate} predictions but to \emph{train} the algorithm that produces them. When learning is endogenous, changing the loss changes what the algorithm learns. This wedge is central in the cognitive-economic view of learning as endogenous information acquisition \citep{caplinMartinMarx2024}. Recently, \citet{autorCaplinMartinMarx2025} (ACMM) have sharpened this critique by identifying a specific incentive failure: they show that utility-weighted objectives ``flatten" the marginal incentives to learn in parts of the belief space, producing predictors that are, ironically, less useful even for the objective they were trained on. While they characterise this empirically as an incentive failure, operationally, training a preference-free probability model and then adjusting decisions ex post can outperform preference-embedded training even under the preference-weighted evaluation criterion.

Suppose, for instance, that preference embedding is implemented by training with a \emph{weighted} classification loss (e.g.\ class-weighted log loss) and then using a default cutoff on the model’s output (often $0.5$), on the theory that the weights have already encoded the preference asymmetry. The crucial difference from the frictionless illustrative above is that modern ML training almost never minimises the weighted loss alone: it minimises a weighted loss \emph{plus} standard controls on complexity such as weight decay, early stopping, or other regularisation. These controls penalise extreme decision boundaries and therefore shrink predicted probabilities toward the middle. Importantly, the regularisation term does \emph{not} scale with the cost weights in the same way as the data-fit term, so the mapping from “desired cost ratio” to “implemented decision rule” becomes unstable and model-dependent: changing the weights changes not just the implied cutoff but also what distinctions the model bothers to represent.

To see the consequence in the simplest possible way, suppose the underlying signal sometimes yields a low posterior $q_L$ and sometimes a high posterior $q_H$, with $q_L<\tau<q_H$. Under post-processing or \textit{separation}, a calibrated learner can preserve this spread and the decision-maker acts exactly in the high-posterior cases, as the threshold rule prescribes. Under weighted training with regularisation, the learned outputs are systematically pulled inward (the high cases become less extreme and the low cases become less low). In the extreme, the learner can collapse the two cases to nearly the same score, or at least eliminate the gap relative to the economically relevant threshold. Once that happens, no ex post thresholding can restore the lost option value: if the model output no longer separates the high-$q$ states from the low-$q$ states, then every downstream rule based only on that output must treat them (almost) the same. Separation avoids this failure mode because it targets the stable object (calibrated beliefs) and postpones the preference choice to a transparent, easily adjusted post-processing step.

This paper generalises these insights into a robust separation principle, framing the core architectural question as one of learning versus choosing: should the algorithm learn calibrated beliefs about the world, or should it directly learn which actions to choose? While ACMM demonstrate the risks of preference embedding for a fixed objective, this paper establishes conditions under which preference-free training (learning) is globally dominant across uncertain objectives. This is practically important because downstream objectives are often uncertain, contested, or changing: regulatory constraints evolve, safety policies are revised, organisational priorities shift, and heterogeneous users have differing needs.\footnote{This theme appears in \cite{agrawal2018prediction, agrawal2022prediction}, who distinguish the preference component in decision-making—\textit{judgment}, coming from the decision-maker—from the prediction component provided by AI.}

The approach models training as the choice of an information structure. Following Bayesian persuasion and information design \citep{kamenicaGentzkow2011} and drawing on the convex-order machinery of \citet{strackYang2024}, a trained predictor is summarised by the distribution of posterior beliefs it induces. The key primitive is a diminishing value of information condition: embedding preferences reduces the marginal gain from increasing informativeness. Combined with a one-dimensional comparability condition, this yields a clear comparative static: preference embedding induces a mean-preserving contraction of the learned posterior distribution (Theorem~\ref{thm:contraction}); a ``capability tax" that destroys information which cannot be recovered ex post.

The welfare consequence follows from the convexity of expected-utility decision problems in beliefs: more informative posteriors are weakly better, so the contraction result implies a separation principle (Theorem~\ref{thm:separation}). Training with a preference-free strictly proper scoring rule and implementing preferences only ex post weakly dominates preference-embedded training uniformly across expected-utility decision problems. This provides a decision-problem-agnostic foundation for modular AI design: separating knowledge acquisition from judgment application preserves the option value of the predictor.

Beyond the main separation result, the paper clarifies three further areas that matter for practice. First, ``preference-free'' is not a full specification of a training objective. The relevant class is that of (strictly) proper scoring rules, which elicit calibrated probabilities \citep{gneitingRaftery2007}. In binary settings, many strictly proper losses are available (locality does not pin down a unique one), so the choice among preference-free losses can depend on how Bayes-risk curvature interacts with the learning frictions. Section~\ref{sec:optimalLoss} develops a tractable ordering of preference-free losses and explains how to interpret ``optimality'' within a design class.

Second, it is possible to consider considerations that mitigate the preference for separation. The benchmark results assume that the decision maker can costlessly implement the optimal ex post decision rule given the learned probabilities. Section~\ref{sec:cognitive} introduces a decision-stage friction grounded in the rational inattention literature: using a rich probabilistic signal can itself be cognitively costly, and those costs scale with the information processed. In that case, embedding preferences upstream can be welfare-improving by compressing information and reducing decision-stage cognitive burden. Theorem~\ref{thm:sec9_RI_reversal} characterises when this ``reversal'' occurs.

Finally, Section~\ref{sec:rlhf} applies the same lens to a contemporary setting where ``preference embedding'' is literal: reinforcement learning from human feedback (RLHF) in large language model (LLM) deployment. This is the process by which LLM outputs are made honest, helpful and harmless.\footnote{For instance, OpenAI uses thousands of human reviewers to provide the feedback function; see \cite{stiennon2020learning, ouyang2022training}.} RLHF can be interpreted as an exponential tilting of an underlying posterior-quality distribution. When the objective is stable and aligned, RLHF improves welfare by shifting the level of quality (Theorem~\ref{thm:rlhf_fixed}). But when deployment objectives are uncertain or multi-dimensional, concentrating the generator on one fixed weighting reduces flexibility across objectives. This imposes a capability tax on foundation models (Theorem~\ref{thm:rlhf_separation}) and can amplify reward misspecification when the learned reward is misaligned with true quality (Proposition~\ref{prop:goodhart}). In this sense, the separation principle provides a rationale for modular pipelines that preserve optionality through post-processing when objectives may shift.

The remainder of the paper proceeds as follows. Section~\ref{sec:setup} introduces the environment and the distribution-over-posteriors representation. Section~\ref{sec:loss} reviews proper scoring rules and formalises the reporting/post-processing margin emphasised in ACMM. Section~\ref{sec:learning} models training as information acquisition with learning frictions and states the diminishing-value condition. Section~\ref{sec:mainresults} presents the contraction result and the robust separation principle. Section~\ref{sec:optimalLoss} discusses the choice among preference-free losses within common design restrictions and translates the theory into implementation guidance, worked examples, and diagnostics. Section~\ref{sec:cognitive} studies when cognitive constraints make preference embedding optimal. Section~\ref{sec:rlhf} analyses RLHF through the same embed-versus-post-process lens. Section~\ref{sec:conclusion} concludes.

\section{Model Setup and Preliminary Results}\label{sec:setup}

The core analysis focuses on a binary outcome to keep the convex-order comparisons clean and interpretable. Many AI prediction tasks can be cast in this scalar-belief form (for example, risk scoring). (Extensions beyond scalar beliefs require additional structure on information orders and are considered in Section \ref{sec:rlhf} and Appendix \ref{app:mv}.)

Let $Y \in \{0,1\}$ denote the outcome of interest. Let $\mu \in (0,1)$ denote the prior probability $\mathbb{P}(Y=1)=\mu$. A posterior belief about $Y$ is identified with a scalar $q \in [0,1]$, interpreted as $q=\mathbb{P}(Y=1\mid \cdot)$. A decision problem is a set of feasible actions $\mathcal{A}$ and a payoff function $u:\mathcal{A}\times\{0,1\}\to\mathbb{R}$. After observing information, the decision maker chooses an action to maximise expected payoff. Given belief $q$, the decision maker's indirect value is
\begin{equation}
V(q) \coloneqq \sup_{a\in\mathcal{A}} \left\{ q\,u(a,1) + (1-q)\,u(a,0) \right\}.
\end{equation}

\begin{lemma}[Convexity of indirect value]\label{lem:convexV}
For any decision problem $(\mathcal{A},u)$, the function $V:[0,1]\to\mathbb{R}$ is convex.
\end{lemma}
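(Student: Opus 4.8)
The plan is to exploit the fact that $V(q)$ is defined as a supremum (pointwise maximum) over a family of functions that are each linear in $q$, and to invoke the standard result that the pointwise supremum of convex functions is convex. Concretely, for each fixed action $a \in \mathcal{A}$, define the affine function
\[
\ell_a(q) \coloneqq q\,u(a,1) + (1-q)\,u(a,0).
\]
Each $\ell_a$ is linear (hence convex) in $q$, since $u(a,1)$ and $u(a,0)$ are constants once $a$ is fixed. Then $V(q) = \sup_{a \in \mathcal{A}} \ell_a(q)$ by definition.

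The key step is to verify the convexity inequality directly from this representation. I would fix $q_0, q_1 \in [0,1]$ and $\lambda \in [0,1]$, and show
\[
V(\lambda q_0 + (1-\lambda) q_1) \le \lambda V(q_0) + (1-\lambda) V(q_1).
\]
For each action $a$, linearity of $\ell_a$ gives $\ell_a(\lambda q_0 + (1-\lambda) q_1) = \lambda \ell_a(q_0) + (1-\lambda)\ell_a(q_1)$, and then bounding each term by the corresponding supremum yields
\[
\ell_a(\lambda q_0 + (1-\lambda) q_1) \le \lambda V(q_0) + (1-\lambda) V(q_1).
\]
Since this holds for every $a$, taking the supremum over $a$ on the left-hand side preserves the inequality and delivers the claim.

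The only genuine subtlety is whether the supremum is attained or merely approached, which matters if $\mathcal{A}$ is infinite. I would handle this by working with the supremum directly rather than assuming a maximiser exists: the chain of inequalities above uses only that $\ell_a(q_j) \le V(q_j)$ for each $a$ and each $j \in \{0,1\}$, which holds by definition of $V$ as a supremum regardless of attainment. Thus no compactness of $\mathcal{A}$ or continuity of $u$ in $a$ is needed, and the argument goes through verbatim whether $\mathcal{A}$ is finite or infinite. I do not anticipate any real obstacle here; the result is a textbook consequence of the affine-in-belief structure of expected payoffs, and the proof is essentially a one-line application of the ``sup of convex functions is convex'' principle, spelled out for the reader's convenience.
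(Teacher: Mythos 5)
Your proposal is correct and follows exactly the paper's argument: $V$ is the pointwise supremum of the affine maps $q \mapsto q\,u(a,1)+(1-q)\,u(a,0)$, hence convex. The only difference is that you unpack the standard ``supremum of affine functions is convex'' fact and note that attainment of the supremum is irrelevant, whereas the paper simply cites it; both are fine.
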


\begin{proof}
For each fixed action $a$, the map $q \mapsto q\,u(a,1)+(1-q)\,u(a,0)$ is affine. The pointwise supremum of affine functions is convex.
\end{proof}

\noindent Lemma \ref{lem:convexV} is the workhorse behind the welfare comparison. It formalises a simple point: when a decision maker can optimise actions after observing beliefs, the value of improving information is convex in beliefs. Lemma~\ref{lem:convexV} has an option-value interpretation. Information is valuable because it expands the set of contingent actions available after beliefs are realised. Convexity of $V$ implies that mean-preserving spreads in posterior beliefs raise expected indirect utility: dispersion in beliefs is a source of valuable flexibility, even before specifying any particular application.

A trained predictor is modelled as a signal about $Y$. Let $S$ denote the signal produced by the predictor. The distribution of posterior beliefs induced by $S$ is the distribution of the random variable
\begin{equation}
Q \coloneqq \mathbb{P}(Y=1\mid S).
\end{equation}
Bayes plausibility implies $\mathbb{E}[Q]=\mu$.

\begin{definition}[Bayes-plausible posterior distributions]
Let $\mathcal{Q}(\mu)$ denote the set of random variables $Q$ taking values in $[0,1]$ such that $\mathbb{E}[Q]=\mu$.
\end{definition}

\noindent This distribution-over-posteriors representation is standard in Bayesian persuasion and information design \citep{kamenicaGentzkow2011} and is used explicitly in \citet{strackYang2024}. It is also consistent with the emphasis in ACMM on separating what is learned (the induced posterior distribution) from what is reported (the predictor's output).

The relevant information order in the binary case is the convex order on $[0,1]$ random variables with fixed mean.

\begin{definition}[Convex order]\label{def:cx}
For $Q,Q' \in \mathcal{Q}(\mu)$, write $Q \succeq_{\mathrm{cx}} Q'$ if
\begin{equation}
\mathbb{E}[\varphi(Q)] \ge \mathbb{E}[\varphi(Q')]
\quad \text{for all convex } \varphi:[0,1]\to\mathbb{R}.
\end{equation}
Equivalently, $Q'$ is a mean-preserving contraction of $Q$.
\end{definition}

\noindent The interpretation is that $Q$ is more dispersed, hence more informative, than $Q'$ while preserving the same mean. This is the same order that underlies the mean-preserving contraction results in \citet{strackYang2024}.

A standard implication, combining Definition \ref{def:cx} with Lemma \ref{lem:convexV}, is the following.

\begin{lemma}[More informative posteriors improve expected value]\label{lem:infoValue}
If $Q \succeq_{\mathrm{cx}} Q'$, then for any decision problem $(\mathcal{A},u)$,
\begin{equation}
\mathbb{E}[V(Q)] \ge \mathbb{E}[V(Q')].
\end{equation}
\end{lemma}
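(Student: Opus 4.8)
The plan is to read off the conclusion almost immediately from the two results already established in the excerpt. The essential observation is that Lemma~\ref{lem:convexV} guarantees that the indirect value function $V:[0,1]\to\mathbb{R}$ is convex, and Definition~\ref{def:cx} characterises the convex order $Q \succeq_{\mathrm{cx}} Q'$ precisely by the requirement that $\mathbb{E}[\varphi(Q)] \ge \mathbb{E}[\varphi(Q')]$ for \emph{every} convex function $\varphi$. So the proof is essentially a matter of instantiating the universally quantified $\varphi$ in the definition of the convex order with the particular convex function $V$ supplied by the lemma.

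Concretely, I would proceed as follows. First, fix an arbitrary decision problem $(\mathcal{A},u)$ and form its indirect value $V(q) = \sup_{a\in\mathcal{A}}\{q\,u(a,1)+(1-q)\,u(a,0)\}$. Second, invoke Lemma~\ref{lem:convexV} to conclude that $V$ is convex on $[0,1]$. Third, since $Q \succeq_{\mathrm{cx}} Q'$ holds by hypothesis and both random variables take values in $[0,1]$, apply the defining inequality of the convex order (Definition~\ref{def:cx}) with the test function $\varphi = V$ to obtain $\mathbb{E}[V(Q)] \ge \mathbb{E}[V(Q')]$. Because the decision problem was arbitrary, this holds for all $(\mathcal{A},u)$, which is exactly the claim.

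There is essentially no substantive obstacle here; the lemma is a direct corollary of the two preceding results, and the only things to verify are bookkeeping matters of hypothesis-matching. The one point deserving a word of care is measurability and integrability: one must ensure that $\mathbb{E}[V(Q)]$ and $\mathbb{E}[V(Q')]$ are well defined. This is handled by noting that $V$ is a pointwise supremum of affine (hence continuous) functions on the compact interval $[0,1]$, so $V$ is lower semicontinuous and bounded below on $[0,1]$; in the cases of practical interest (finite action sets, or continuous payoffs) $V$ is in fact continuous and bounded, so both expectations are finite and the inequality is an inequality between real numbers rather than between extended reals. I would state this integrability remark briefly and then close, since the economic content of the lemma lies entirely in the conjunction of convexity of $V$ with the convex-order characterisation, both of which are already in hand.
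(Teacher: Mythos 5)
Your proposal is correct and follows exactly the paper's own argument: invoke Lemma~\ref{lem:convexV} for convexity of $V$ and then apply Definition~\ref{def:cx} with $\varphi=V$. The added remark on measurability and integrability is a harmless (and reasonable) elaboration that the paper omits.
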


\begin{proof}
By Lemma \ref{lem:convexV}, $V$ is convex, so Definition \ref{def:cx} implies $\mathbb{E}[V(Q)]\ge \mathbb{E}[V(Q')]$.
\end{proof}

\noindent Lemma \ref{lem:infoValue} is a Blackwell-type result specialised to binary beliefs. It will allow the paper to translate contraction results for posterior distributions into welfare comparisons for a broad class of decision problems.

\subsection{Loss Functions and Post-Processing}\label{sec:loss}

We now clarify what it means for a training loss to be ``preference-free'' in a rigorous sense, and it explains the ACMM-style decomposition between reporting and learning. 

A \textit{probabilistic prediction} is a reported probability $p \in [0,1]$, interpreted as the predictor's stated probability that $Y=1$. A \textit{training loss} is a function $L:[0,1]\times\{0,1\}\to\mathbb{R}$. Given belief $q$, the expected loss from reporting $p$ is
\begin{equation}
\bar L(p;q) \coloneqq q\,L(p,1) + (1-q)\,L(p,0).
\end{equation}

\begin{definition}[Strictly proper scoring rule]\label{def:proper}
A loss function $L$ is a strictly proper scoring rule if, for every $q\in[0,1]$, the unique minimiser of $\bar L(p;q)$ is $p=q$.
\end{definition}

\noindent Strict propriety captures the formal sense in which a loss is ``preference-free'' for learning probabilities: when the predictor's internal belief is $q$, the loss is minimised by truthfully reporting $q$.\footnote{See \citet{gneitingRaftery2007} for a comprehensive discussion.}

A key object associated with a loss is its Bayes risk.

\begin{definition}[Bayes risk]
Given a loss $L$, define the Bayes risk $H_L:[0,1]\to\mathbb{R}$ by
\begin{equation}
H_L(q) \coloneqq \min_{p\in[0,1]} \bar L(p;q).
\end{equation}
\end{definition}

\noindent For strictly proper scoring rules, the Bayes risk is attained at $p=q$, so $H_L(q)=\bar L(q;q)$. Proper scoring rules are in one-to-one correspondence with concave Bayes risk functions under standard regularity conditions \citep{gneitingRaftery2007}. This concavity is the channel through which proper losses reward informative posterior variation.

A common approach to embedding preferences is to modify the training loss to penalise some errors more than others. For binary classification, a canonical example is class-weighted cross-entropy with weights $w_1,w_0>0$:
\begin{equation}\label{eq:weightedCE}
L^{w}(p,1) = -w_1 \log(p),
\qquad
L^{w}(p,0) = -w_0 \log(1-p).
\end{equation}
This loss is not strictly proper with respect to the true posterior $q$ unless $w_1=w_0$, because the Bayes-optimal report is a transformation of $q$:
\begin{equation}\label{eq:bayesActWeighted}
p^{w}(q)=\frac{w_1 q}{w_1 q + w_0 (1-q)}.
\end{equation}
The transformation in \eqref{eq:bayesActWeighted} is exactly the kind of post-processing map highlighted in ACMM: conditional on belief $q$, preference embedding changes what is reported, but the reported object contains no more information than $q$ and may contain strictly less when inversion is not feasible or when reporting is compressed.\footnote{In multi-class settings, ACMM provide a general characterisation: for a strictly proper base loss, an optimal preference-weighted prediction corresponds to a deterministic transformation of the posterior based on expected utility components. The central message for the present paper is the same in binary and multi-class cases: conditional on the posterior, embedding preferences in the prediction is a post-processing step.}

\begin{proposition}[Bayes risk and curvature for class-weighted cross-entropy]
\label{prop:weightedCE_curvature}
For the class-weighted cross-entropy loss in \eqref{eq:weightedCE}, the Bayes risk is
\begin{equation}
\label{eq:weightedCE_bayesrisk}
H_{w}(q)
=
-w_1 q \log\!\big(w_1 q\big)
-w_0 (1-q)\log\!\big(w_0(1-q)\big)
+\big(w_1 q + w_0(1-q)\big)\log\!\big(w_1 q + w_0(1-q)\big),
\end{equation}
(with the usual convention $0\log 0:=0$). It is concave on $(0,1)$, with
\begin{equation}
\label{eq:weightedCE_secondderivative}
H_w''(q)
=
-\frac{w_0 w_1}{q(1-q)\big(w_0 + q(w_1-w_0)\big)}.
\end{equation}
If $H_{\log}$ denotes the Bayes risk of the unweighted log loss ($w_0=w_1=1$), then
\begin{equation}
\label{eq:weightedCE_increment_curvature}
\big(H_w-H_{\log}\big)''(q)
=
\frac{w_0 + q(w_1-w_0) - w_0 w_1}{q(1-q)\big(w_0 + q(w_1-w_0)\big)}.
\end{equation}
In particular, $\big(H_w-H_{\log}\big)$ is convex on $(0,1)$ if and only if $\max\{w_0,w_1\}\le 1$.
Thus, for common cost-sensitive normalisations in which one of $(w_0,w_1)$ exceeds $1$, the
diminishing-value condition in Assumption~\ref{ass:DV} must be checked rather than taken for granted.
\end{proposition}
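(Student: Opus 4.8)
The plan is to proceed by direct computation, organised so that the curvature step stays clean. First I would verify the Bayes-optimal report and the Bayes-risk formula. Writing $\bar L(p;q) = -w_1 q \log p - w_0(1-q)\log(1-p)$ and setting $\partial_p \bar L = 0$ gives $w_1 q (1-p) = w_0(1-q)\,p$, hence $p = p^{w}(q)$ as in \eqref{eq:bayesActWeighted}. Substituting $p^{w}(q)$ back, and abbreviating $D \coloneqq w_1 q + w_0(1-q) = w_0 + q(w_1 - w_0)$ so that $p^{w} = w_1 q/D$ and $1-p^{w} = w_0(1-q)/D$, the two $\log D$ contributions combine into a single $D\log D$ term and one recovers \eqref{eq:weightedCE_bayesrisk}.

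For concavity, note that $H_w$ is immediately concave for the same reason $V$ is convex in Lemma~\ref{lem:convexV}: for each fixed $p$ the map $q \mapsto \bar L(p;q)$ is affine, so $H_w = \inf_p \bar L(\,\cdot\,;q)$ is a pointwise infimum of affine functions. To obtain the explicit second derivative in \eqref{eq:weightedCE_secondderivative} I would avoid differentiating the log-sum expression directly and instead invoke the envelope theorem: since $p^{w}(q)$ is interior and optimal, $H_w'(q) = \partial_q \bar L(p;q)\big|_{p=p^{w}} = -w_1 \log p^{w} + w_0 \log(1-p^{w})$. Differentiating once more, and using the simplifications $w_1/p^{w} = D/q$ and $w_0/(1-p^{w}) = D/(1-q)$ together with $dp^{w}/dq = w_0 w_1/D^2$ (the numerator collapsing because $D - q(w_1-w_0) = w_0$), yields $H_w''(q) = -\frac{D}{q(1-q)}\cdot\frac{w_0 w_1}{D^2} = -\frac{w_0 w_1}{q(1-q)\,D}$, which is \eqref{eq:weightedCE_secondderivative} and is strictly negative on $(0,1)$.

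The increment formula then follows by setting $w_0=w_1=1$ (so $D=1$) to get $H_{\log}''(q) = -1/(q(1-q))$ and subtracting; factoring out $1/(q(1-q))$ gives \eqref{eq:weightedCE_increment_curvature} with numerator $N(q) \coloneqq D - w_0 w_1 = w_0 + q(w_1-w_0) - w_0 w_1$. The crux—and the only step that is more than mechanical—is the final convexity characterisation. Since the denominator $q(1-q)\,D$ is strictly positive on $(0,1)$, the sign of $(H_w-H_{\log})''$ is the sign of $N(q)$, and the key observation is that $N$ is \emph{affine} in $q$. An affine function is nonnegative throughout $(0,1)$ if and only if it is nonnegative at both endpoints (by continuity the open-interval condition is equivalent to the closed-interval one), so I would simply evaluate $N(0) = w_0(1-w_1)$ and $N(1) = w_1(1-w_0)$. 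Because $w_0,w_1>0$, both are nonnegative exactly when $w_1\le 1$ and $w_0\le 1$, i.e.\ $\max\{w_0,w_1\}\le 1$, which gives the stated equivalence. The main thing to get right is recognising this affine structure, which turns a ``for all $q$'' convexity requirement into two endpoint inequalities rather than a genuine calculus problem.
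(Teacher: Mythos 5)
Your proposal is correct and follows essentially the same route as the paper's proof: derive $p^{w}(q)$, substitute to get the Bayes risk, differentiate twice, and exploit the fact that the numerator of $(H_w-H_{\log})''$ is affine in $q$ so that convexity reduces to the endpoint checks $N(0)=w_0(1-w_1)\ge 0$ and $N(1)=w_1(1-w_0)\ge 0$. The envelope-theorem shortcut for $H_w'$ is a tidy way to organise the differentiation but does not change the substance of the argument.
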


\begin{proof}
The Bayes-optimal report is $p^w(q)$ in \eqref{eq:bayesActWeighted}, so
$H_w(q)=q\,L^{w}(p^{w}(q),1)+(1-q)\,L^{w}(p^{w}(q),0)$, which simplifies to \eqref{eq:weightedCE_bayesrisk}.
Differentiating twice yields \eqref{eq:weightedCE_secondderivative} and hence
\eqref{eq:weightedCE_increment_curvature}. Since the numerator in \eqref{eq:weightedCE_increment_curvature}
is affine in $q$, the convexity condition reduces to checking it at $q=0$ and $q=1$, giving
$\max\{w_0,w_1\}\le 1$.
\end{proof}

\noindent The normalisation here matters. The class weights $(w_0,w_1)$ are identified only up to a common scale: multiplying both weights by a constant rescales the loss without changing the Bayes-optimal report in \eqref{eq:bayesActWeighted}. In the learning model \eqref{eq:learningProblem}, however, such rescalings change learning incentives relative to the friction $C(\cdot)$. Accordingly, curvature comparisons like Proposition~\ref{prop:weightedCE_curvature} should be interpreted holding the overall scale of the loss fixed. Under alternative cost-sensitive normalisations with $\max\{w_0,w_1\}>1$, the diminishing-value condition in Assumption~\ref{ass:DV} can fail and preference embedding can in principle increase informativeness rather than contract it.

Any welfare difference between preference-embedded and preference-free pipelines must come from a difference in the posterior distribution induced by training, not from the existence of a reporting transformation. This is the motivation for the information acquisition model in the next section.

\subsection{Training and Learning Frictions}\label{sec:learning}

The paper models training as an endogenous choice of informativeness, consistent with the cost-based approach in \citet{caplinMartinMarx2024}. The model is intentionally reduced-form. It is not assumed that a particular optimisation algorithm solves a particular stochastic control problem. Instead, the aim is to capture a robust implication of many practical training pipelines: changing the loss changes the incentives to expend modelling capacity, optimisation effort, and representation power on learning informative distinctions.

\begin{assumption}[Cost-based learning as posterior choice]\label{ass:costBased}
For each training loss $L$, the trained algorithm induces a posterior random variable $Q_L \in \mathcal{Q}(\mu)$ that solves
\begin{equation}\label{eq:learningProblem}
Q_L \in \arg\min_{Q \in \mathcal{Q}(\mu)} \left\{ \mathbb{E}\!\left[ H_L(Q) \right] + C(Q) \right\},
\end{equation}
where $C:\mathcal{Q}(\mu)\to\mathbb{R}\cup\{+\infty\}$ is a learning friction function.
\end{assumption}

\noindent The interpretation is straightforward. More informative posteriors are beneficial because they reduce expected Bayes risk, and the loss $L$ determines how valuable those reductions are. The function $C(Q)$ captures optimisation frictions, regularisation, limited capacity, or any other features that make more informative predictors more costly to obtain. This is consistent with the pseudo-cost interpretation in \citet{caplinMartinMarx2024}.

\paragraph{Existence.} Before proceeding, we need a basic well-posedness result: for each training objective index $t$, the reduced-form “learning problem” in \eqref{eq:learningProblem} actually has a solution. Proposition~\ref{prop:existence} provides mild conditions under which an optimal (Bayes-plausible) posterior distribution exists, so later comparative-statics arguments are not about empty sets.

\begin{proposition}[Existence of an optimal posterior]\label{prop:existence}
Fix $t\in[0,1]$. Suppose the Bayes risk function $H_t$ is continuous on $[0,1]$. Suppose the learning friction $C$ is \emph{law-invariant} (it depends only on the distribution of $Q$), bounded below on $\mathcal{Q}(\mu)$, and lower semicontinuous under weak convergence of the induced distributions on $[0,1]$. Assume there exists at least one $Q\in\mathcal{Q}(\mu)$ with $C(Q)<\infty$. Then the minimisation problem in \eqref{eq:learningProblem} admits at least one minimiser.
\end{proposition}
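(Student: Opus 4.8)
The plan is to establish existence via the direct method of the calculus of variations, exploiting compactness of the space of posterior distributions together with lower semicontinuity of the objective. Since $C$ is law-invariant, the objective in \eqref{eq:learningProblem} depends on $Q$ only through its distribution, so I would recast the problem as minimising over the set $\mathcal{M}(\mu)$ of Borel probability measures $\nu$ on $[0,1]$ with barycentre $\int_0^1 x\,d\nu(x)=\mu$. The objective becomes $J(\nu)\coloneqq \int_0^1 H_t\,d\nu + \tilde C(\nu)$, where $\tilde C$ is the induced friction. I would first note that $J$ is proper: by hypothesis there exists a feasible $\nu_0$ with $\tilde C(\nu_0)<\infty$, and since $H_t$ is continuous on the compact interval $[0,1]$ it is bounded, so $\int H_t\,d\nu_0$ is finite and $J(\nu_0)<\infty$; combined with $\tilde C$ bounded below and $H_t$ bounded below, the infimum is finite.

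Next I would verify the two ingredients of the direct method. For compactness: $[0,1]$ is compact, so by Prokhorov's theorem the whole space of Borel probability measures on $[0,1]$ is weak-$*$ (weakly) compact and sequentially compact in the topology of weak convergence. The mean constraint cuts out a closed subset, since the map $\nu\mapsto\int_0^1 x\,d\nu(x)$ is continuous under weak convergence (the integrand $x\mapsto x$ is bounded and continuous), so $\mathcal{M}(\mu)$ is itself weakly compact. For lower semicontinuity of $J$: along any weakly convergent sequence $\nu_n\to\nu$, continuity of $H_t$ gives $\int H_t\,d\nu_n\to\int H_t\,d\nu$ by the definition of weak convergence, while $\tilde C$ is lower semicontinuous by assumption; hence $J$ is lower semicontinuous as the sum of a continuous term and a lower semicontinuous term.

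With these in hand the argument is routine. I would take a minimising sequence $\nu_n\in\mathcal{M}(\mu)$ with $J(\nu_n)\to\inf J$, extract a weakly convergent subsequence $\nu_{n_k}\to\nu^\star$ with $\nu^\star\in\mathcal{M}(\mu)$ by compactness, and conclude by lower semicontinuity that $J(\nu^\star)\le\liminf_k J(\nu_{n_k})=\inf J$, so $\nu^\star$ attains the infimum. Recovering a random variable $Q_L$ whose law is $\nu^\star$ (e.g.\ via $Q_L=F_{\nu^\star}^{-1}(U)$ with $U$ uniform) gives a minimiser of \eqref{eq:learningProblem} in $\mathcal{Q}(\mu)$.

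The main obstacle, and the step deserving the most care, is matching the paper's hypotheses to the abstract framework. The assumption states lower semicontinuity and law-invariance of $C$ directly, which does most of the work; the only subtlety is confirming that the $H_t$ term is genuinely continuous (not merely lower semicontinuous) under weak convergence so that it cannot spoil the lower semicontinuity of the sum, and that the mean constraint passes to the limit. Continuity of $H_t$ on $[0,1]$ is exactly the stated hypothesis, and boundedness of the test function $x\mapsto x$ secures the constraint, so no compactness is lost. I would flag that no convexity of $C$ is needed for existence (only for uniqueness, which is not claimed here), and that the argument is insensitive to the value of $t$, so it holds uniformly for every $t\in[0,1]$.
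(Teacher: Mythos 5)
Your proposal is correct and follows essentially the same route as the paper's proof: identify $Q$ with its law, note that the set of Bayes-plausible measures on $[0,1]$ is weakly compact, observe that the objective is lower semicontinuous (continuous $H_t$-integral plus lower semicontinuous friction), and conclude by the direct method. Your additional details (explicit appeal to Prokhorov, the minimising-sequence extraction, and recovery of $Q$ via the quantile transform) merely flesh out steps the paper states more tersely.
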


\begin{proof}
Identify each $Q\in\mathcal{Q}(\mu)$ with its distribution $P_Q$ on $[0,1]$, and let $\mathcal{P}(\mu)$ denote the set of probability measures on $[0,1]$ with mean $\mu$. Because $[0,1]$ is compact and the mean constraint defines a closed set, $\mathcal{P}(\mu)$ is compact under weak convergence.

Define the objective on $\mathcal{P}(\mu)$ by
\[
F_t(P) \coloneqq \int_{0}^{1} H_t(q)\, dP(q) + \widetilde C(P),
\]
where $\widetilde C(P)$ is the common value of $C(Q)$ for any $Q$ with distribution $P$ (well-defined by law-invariance). Continuity of $H_t$ implies $P\mapsto \int H_t\, dP$ is continuous, and by assumption $\widetilde C$ is lower semicontinuous. Hence $F_t$ is lower semicontinuous on a compact set, so it attains a minimum. Any minimiser $P^\star$ can be realised as some $Q^\star\in\mathcal{Q}(\mu)$, yielding the claim.
\end{proof}

\noindent Proposition~\ref{prop:existence} shows the designer’s choice set is compact once we identify a posterior $Q$ with its induced law on $[0,1]$ and impose Bayes plausibility $\mathbb{E}[Q]=\mu$, and it says the objective is lower semicontinuous under weak convergence, so a minimiser must exist. With existence established, we can treat training outcomes as distributions over posteriors (an information-design move), because the optimisation can be carried out directly on $\mathcal{P}(\mu)$ without worrying that the infimum is only approached but never attained. This allows us to focus on how the \emph{shape} of $H_t$ and the \emph{order properties} of $C$ affect the optimal posterior(s). 

\paragraph{Learning Friction.} The next assumption expresses that producing more informative posteriors is, weakly, more costly.

\begin{assumption}[Convex-order monotone learning frictions]\label{ass:cxCost}
For $Q,Q' \in \mathcal{Q}(\mu)$, if $Q \succeq_{\mathrm{cx}} Q'$, then $C(Q) \ge C(Q')$.
\end{assumption}

\noindent Assumption \ref{ass:cxCost} is a learning-technology analogue of the idea that garbling a signal weakly reduces constraints or costs; see Figure \ref{fig:contraction}. It mirrors the Strack and Yang perspective that feasible objects under restrictions form a set closed under mean-preserving contractions \citep{strackYang2024}. Here, rather than their hard feasibility constraint, the restriction is encoded as a cost.

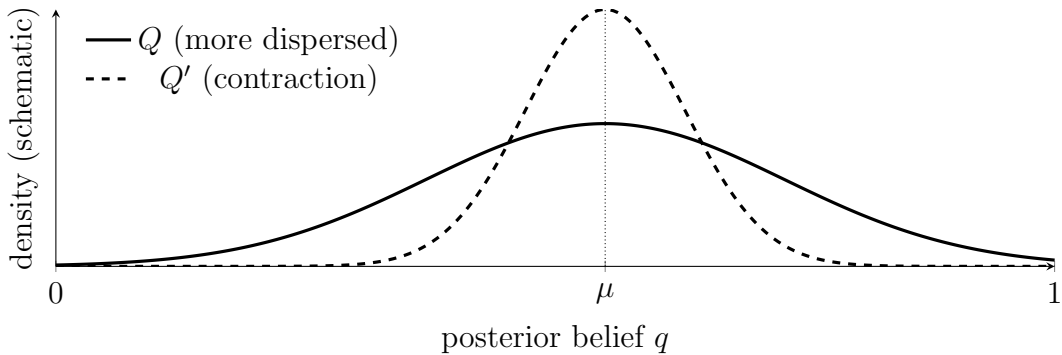
\begin{figure}[t]
\centering
\begin{tikzpicture}
\pgfmathsetmacro{\m}{0.55}   % location of the mean (schematic)
\pgfmathsetmacro{\siga}{0.18} % spread of Q
\pgfmathsetmacro{\sigb}{0.08} % spread of Q'

\begin{axis}[
  width=0.9\linewidth,
  height=5cm,
  xmin=0, xmax=1,
  ymin=0,
  axis lines=left,
  xlabel={posterior belief $q$},
  ylabel={density (schematic)},
  ytick=\empty,
  xtick={0,\m,1},
  xticklabels={0,$\mu$,1},
  domain=0:1,
  samples=250,
  legend style={draw=none, at={(0.02,0.98)}, anchor=north west}
]
% Q: more dispersed
\addplot[very thick] {exp(-((x-\m)^2)/(2*\siga^2))};
\addlegendentry{$Q$ (more dispersed)}

% Q': contraction
\addplot[very thick, dashed] {1.8*exp(-((x-\m)^2)/(2*\sigb^2))};
\addlegendentry{$Q'$ (contraction)}

% Mean marker
\draw[densely dotted] (axis cs:\m,0) -- (axis cs:\m,2.2);
\end{axis}
\end{tikzpicture}
\caption{Illustration of $Q \succeq_{\mathrm{cx}} Q'$: $Q'$ is a mean-preserving contraction of $Q$ (schematic).}
\label{fig:contraction}
\end{figure}

Many plausible learning frictions satisfy Assumption \ref{ass:cxCost}.\footnote{$C(Q)$ captures learning frictions that plausibly arise in real-world machine learning models. For instance, regularisation techniques (such as $L_1$/$L_2$ weight decay or dropout) impose penalties on model complexity, effectively making highly dispersed posteriors more ``expensive'' to learn. This formulation also aligns with the Information Bottleneck principle, where the training objective explicitly penalises the mutual information retained about the input. Beyond explicit penalties, $C(Q)$ can represent implicit resource constraints---such as limited training data or early stopping budgets---that prevent the model from reaching fully informative (but potentially overfitted) configurations. While the assumption of law-invariance abstracts away architectural inductive biases (e.g., a convolutional neural network's structural preference for spatial features), the functional robustly captures the general trade-off between predictive accuracy and the costs of representation or optimisation.} A particularly clean microfoundation is information-theoretic capacity: producing a more informative signal is costly because it requires more ``bits'' of representation, optimisation effort, or data. Let $S$ denote the signal (representation) produced by the trained algorithm, and let $Q=\mathbb{P}(Y=1\mid S)$ be the induced posterior belief. Consider the mutual-information learning friction
\begin{equation}\label{eq:C_MI}
C_{\mathrm{MI}}(Q)\;:=\;\kappa\, I(Y;S), \qquad \kappa>0,
\end{equation}
as in rational inattention models \citep{sims2003}. In the binary case, this cost can be written purely as a functional of the posterior random variable $Q$:
\begin{equation}\label{eq:MI_as_Q}
I(Y;S)
=
h_{\mathrm{bin}}(\mu)\;-\;\mathbb{E}\!\left[h_{\mathrm{bin}}(Q)\right]
=
\mathbb{E}\!\left[\mathrm{KL}\!\left(\mathrm{Bern}(Q)\,\|\,\mathrm{Bern}(\mu)\right)\right],
\end{equation}
where $h_{\mathrm{bin}}(q)\coloneqq -q\log q-(1-q)\log(1-q)$ is binary entropy.
Here $\mathrm{Bern}(q)$ denotes the Bernoulli distribution on $\{0,1\}$ with success probability $q$
(i.e.\ $\mathbb{P}(1)=q$ and $\mathbb{P}(0)=1-q$), so $\mathrm{Bern}(Q)$ means the Bernoulli distribution
with success probability equal to the realised value of the posterior $Q$ (equivalently, $Y\mid S \sim \mathrm{Bern}(Q)$).
Finally, $\mathrm{KL}(P\|R)$ denotes the Kullback--Leibler divergence (relative entropy) from a distribution $P$ to $R$:
\[
\mathrm{KL}(P\|R)\;:=\;\sum_{y\in\{0,1\}} P(y)\log\!\left(\frac{P(y)}{R(y)}\right).
\]
In particular, for Bernoulli distributions,
\[
\mathrm{KL}\!\left(\mathrm{Bern}(q)\,\|\,\mathrm{Bern}(\mu)\right)
=
q\log\!\left(\frac{q}{\mu}\right) + (1-q)\log\!\left(\frac{1-q}{1-\mu}\right).
\]
Because $h_{\mathrm{bin}}$ is concave, $Q\succeq_{\mathrm{cx}}Q'$ implies $\mathbb{E}[h_{\mathrm{bin}}(Q)]\le \mathbb{E}[h_{\mathrm{bin}}(Q')]$, hence $C_{\mathrm{MI}}(Q)\ge C_{\mathrm{MI}}(Q')$. Therefore \eqref{eq:C_MI} satisfies Assumption~\ref{ass:cxCost}.%
\footnote{The same ``rate--distortion'' logic underlies information-theoretic regularisation and bottleneck-style objectives in ML. Related interpretations of $C(\cdot)$ include sample-size/data acquisition costs, capacity and bottleneck constraints (as in \cite{strackYang2024}), regularisation/MDL penalties, PAC--Bayes complexity bounds, and optimisation/early-stopping budgets.}

\paragraph{Key Condition.} The key step is to turn the idea ``preference embedding reduces information'' into a primitive condition. The model permits a continuous index of preference embedding, which allows a monotone comparative statics argument.

Let $t \in [0,1]$ index a family of training objectives. The case $t=0$ corresponds to preference-free training. The case $t=1$ corresponds to a preference-embedded variant, such as class-weighted cross-entropy or utility-weighted scoring. Formally, the index $t$ enters through the Bayes risk. Let $H_t$ denote the Bayes risk function associated with the $t$-indexed training objective.

\begin{assumption}[Diminishing value of information]\label{ass:DV}
The family $\{H_t\}_{t\in[0,1]}$ satisfies increasing differences with respect to convex order: for any $t_1>t_0$ and any $Q,Q' \in \mathcal{Q}(\mu)$ with $Q \succeq_{\mathrm{cx}} Q'$,
\begin{equation}\label{eq:increasingDiff}
\mathbb{E}\!\left[ H_{t_1}(Q) - H_{t_0}(Q) \right]
\ge
\mathbb{E}\!\left[ H_{t_1}(Q') - H_{t_0}(Q') \right].
\end{equation}
Moreover, the inequality in \eqref{eq:increasingDiff} is \emph{strict} whenever $Q \succeq_{\mathrm{cx}} Q'$ and $Q \not\stackrel{d}{=} Q'$.
\end{assumption}

\noindent Assumption \ref{ass:DV} can be read as a \emph{diminishing marginal value of informativeness} in terms of Bayes-risk reduction. For each $t$, define the (ex ante) value of an information structure $Q$ under the $t$-indexed training objective as the reduction in Bayes risk relative to receiving no information:
\begin{equation}\label{eq:voi}
\mathrm{VoI}_t(Q)\coloneqq H_t(\mu)-\mathbb{E}[H_t(Q)] .
\end{equation}
Because $H_t$ is concave in $q$, $\mathrm{VoI}_t(Q)\ge 0$, and it increases with informativeness in convex order. Assumption \ref{ass:DV} states that when $t$ increases (preferences are more heavily embedded), this value falls \emph{more} for more informative posteriors: for any $t_1>t_0$ and $Q\succeq_{\mathrm{cx}}Q'$,
\begin{equation}\label{eq:voiID}
\big(\mathrm{VoI}_{t_1}(Q)-\mathrm{VoI}_{t_0}(Q)\big)
\le
\big(\mathrm{VoI}_{t_1}(Q')-\mathrm{VoI}_{t_0}(Q')\big).
\end{equation}
Equivalently, for each $t_1>t_0$ the increment $\Delta_{t_1,t_0}(q)\coloneqq H_{t_1}(q)-H_{t_0}(q)$ is a convex function of $q$ (so $H_{t_1}$ is ``less concave'' than $H_{t_0}$). In this curvature sense, preference embedding flattens learning incentives.

In the binary setting, Assumption~\ref{ass:DV} reduces to a one-dimensional shape restriction.
Fix $t_1>t_0$ and define the increment $\Delta_{t_1,t_0}(q):=H_{t_1}(q)-H_{t_0}(q)$.
Assumption~\ref{ass:DV} holds if and only if $\Delta_{t_1,t_0}$ is convex on $[0,1]$
(and strictly convex for the strictness clause). When $H_t$ is twice differentiable, this is equivalent to checking $\Delta_{t_1,t_0}''(q)\ge 0$ for all $q$.
For many commonly used training objectives, $H_t$ is available in closed form; otherwise, the convexity check can be performed numerically on a fine grid.

This assumption is the precise primitive counterpart to the incentive-flattening intuition in ACMM. It is also the learning analogue of \cite{strackYang2024}'s result that privacy-preserving signals correspond to mean-preserving contractions of a benchmark posterior-mean distribution.

A sufficient condition for Assumption \ref{ass:DV} is particularly transparent.

\begin{proposition}[A convenient sufficient condition]\label{prop:sufficientDV}
If $H_t(q)=H_0(q)+t\,h(q)$ for some convex function $h:[0,1]\to\mathbb{R}$, then Assumption \ref{ass:DV} holds. If, in addition, $h$ is strictly convex, then the strictness clause in Assumption \ref{ass:DV} holds as well.
\end{proposition}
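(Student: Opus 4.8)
The plan is to exploit the additive-separable structure in the hypothesis, which makes the increment appearing in Assumption~\ref{ass:DV} collapse to a single scaled copy of $h$. First I would fix $t_1 > t_0$ and compute the increment directly: under $H_t(q) = H_0(q) + t\,h(q)$, the difference $\Delta_{t_1,t_0}(q) := H_{t_1}(q) - H_{t_0}(q) = (t_1 - t_0)\,h(q)$ is a nonnegative multiple of the convex function $h$, hence itself convex. Taking expectations, the two sides of \eqref{eq:increasingDiff} become $(t_1 - t_0)\,\mathbb{E}[h(Q)]$ and $(t_1 - t_0)\,\mathbb{E}[h(Q')]$.

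Second, I would invoke the definition of convex order directly. Since $Q \succeq_{\mathrm{cx}} Q'$ and $h$ is convex, Definition~\ref{def:cx} gives $\mathbb{E}[h(Q)] \ge \mathbb{E}[h(Q')]$. Multiplying through by the positive scalar $t_1 - t_0 > 0$ preserves the inequality and yields exactly \eqref{eq:increasingDiff}. This settles the non-strict claim with essentially no work beyond recognising the separable structure.

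The substantive step is the strictness clause, which is where I expect the only real obstacle. I need: if $h$ is strictly convex, $Q \succeq_{\mathrm{cx}} Q'$, and $Q \not\stackrel{d}{=} Q'$, then $\mathbb{E}[h(Q)] > \mathbb{E}[h(Q')]$ strictly. The clean route is Strassen's representation of the convex order: $Q \succeq_{\mathrm{cx}} Q'$ means there is a coupling $(\tilde Q, \tilde Q')$ with $\tilde Q \stackrel{d}{=} Q$, $\tilde Q' \stackrel{d}{=} Q'$, and $\mathbb{E}[\tilde Q \mid \tilde Q'] = \tilde Q'$, so that $Q'$ is a genuine mean-preserving contraction obtained by a martingale projection. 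Conditional Jensen then gives $\mathbb{E}[h(\tilde Q) \mid \tilde Q'] \ge h(\mathbb{E}[\tilde Q \mid \tilde Q']) = h(\tilde Q')$ pointwise, and integrating recovers $\mathbb{E}[h(Q)] \ge \mathbb{E}[h(Q')]$. Strict convexity of $h$ makes conditional Jensen an equality only where the conditional law of $\tilde Q$ given $\tilde Q'$ is degenerate almost surely, which forces $\tilde Q = \mathbb{E}[\tilde Q \mid \tilde Q'] = \tilde Q'$ a.s. and hence $Q \stackrel{d}{=} Q'$. Contrapositively, $Q \not\stackrel{d}{=} Q'$ rules out equality, so the inequality is strict; scaling by $t_1 - t_0 > 0$ then delivers the strict version of \eqref{eq:increasingDiff}.

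A brief word on the obstacle: the non-strict part is immediate, but the strictness argument requires a characterisation of the equality case in the convex order, not merely the defining inequality. Strassen's coupling is the tool that converts ``strictly convex integrand, distinct distributions'' into a strict gap, and I would flag that it is precisely the degeneracy of the martingale increment under strict convexity that does the work—without strict convexity (for instance, affine $h$) equality can persist even for distinct distributions, so the separable structure alone is not enough and the coupling is essential.
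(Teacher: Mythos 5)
Your proposal is correct and follows essentially the same route as the paper: the additive structure collapses the increment to $(t_1-t_0)\,\mathbb{E}[h(Q)]-(t_1-t_0)\,\mathbb{E}[h(Q')]$ and the definition of convex order does the rest. The one place you go beyond the paper is the strictness clause, which the paper merely asserts; your Strassen-coupling/conditional-Jensen argument (strict convexity forces the martingale increment to be degenerate, hence $Q \stackrel{d}{=} Q'$) is the right way to justify it and is a welcome addition rather than a deviation.
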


\begin{proof}
For $t_1>t_0$, $\mathbb{E}[H_{t_1}(Q)-H_{t_0}(Q)]=(t_1-t_0)\mathbb{E}[h(Q)]$. If $Q \succeq_{\mathrm{cx}} Q'$, then $\mathbb{E}[h(Q)]\ge \mathbb{E}[h(Q')]$ because $h$ is convex. This implies \eqref{eq:increasingDiff}. If $h$ is strictly convex and $Q \succeq_{\mathrm{cx}} Q'$ with $Q \not\stackrel{d}{=} Q'$, then the inequality is strict.
\end{proof}

\noindent Proposition \ref{prop:sufficientDV} shows that rather than assuming that preference embedding yields a contraction of learned posteriors, the proposition assumes a shape restriction on how preference embedding changes the training objective in the space of posterior distributions. The contraction emerges as a comparative statics implication.

A final assumption to operationalise our results is the following:

\begin{assumption}[Comparability of optimal posteriors]\label{ass:comparable}
For any $t_1>t_0$, and for any minimisers $Q_{t_1}\in\arg\min_{Q\in\mathcal{Q}(\mu)}\{\mathbb{E}[H_{t_1}(Q)]+C(Q)\}$ and $Q_{t_0}\in\arg\min_{Q\in\mathcal{Q}(\mu)}\{\mathbb{E}[H_{t_0}(Q)]+C(Q)\}$, the pair $(Q_{t_0},Q_{t_1})$ is comparable in convex order: either $Q_{t_0}\succeq_{\mathrm{cx}} Q_{t_1}$ or $Q_{t_1}\succeq_{\mathrm{cx}} Q_{t_0}$.
\end{assumption}

\noindent Assumption \ref{ass:comparable} is a substantive restriction. Convex order is only a partial order on $\mathcal{Q}(\mu)$, so in general two optimal posterior distributions need not be comparable. The assumption rules out these selection issues by requiring that optimisers move along a convex-order chain (a one-dimensional ``informativeness frontier''). It is used only to translate the inequality implied by Assumption \ref{ass:DV} into a monotone comparative-statics statement in convex order.\footnote{Assumption~\ref{ass:comparable} is best viewed as a \emph{single-index} restriction on the learning technology: optimisers move along a one-dimensional ``informativeness frontier.'' This is the natural case in many ML pipelines where learnability is governed by a small number of monotone knobs (sample size, training time, model size, regularisation strength, quantisation level). When a single knob tightens or relaxes a constraint, feasible predictors are nested by garbling or refinement, which induces a convex-order chain of posterior beliefs. The assumption is used only to obtain the convex-order contraction in Theorem~\ref{thm:contraction} from the increasing-differences inequality in Assumption~\ref{ass:DV}. Once a convex-order contraction between posteriors is established (by any means), the welfare step in Lemma~\ref{lem:infoValue} and hence the welfare comparison in Theorem~\ref{thm:separation} does not rely on comparability; we maintain Assumption~\ref{ass:comparable} in Theorem~\ref{thm:separation} only because its proof appeals to Theorem~\ref{thm:contraction}.}

Assumption~\ref{ass:comparable} is what turns the increasing-differences inequality in Assumption~\ref{ass:DV} into a global statement about \emph{all} minimisers. Without it, the argument behind Theorem~\ref{thm:contraction} still yields a weaker directional restriction: if a minimiser at $t_1$ is comparable in convex order to a minimiser at $t_0<t_1$, the only possible ordering is $Q_{t_0}\succeq_{\mathrm{cx}}Q_{t_1}$ (a higher-$t$ minimiser cannot be a strict mean-preserving spread of a lower-$t$ minimiser). Incomparability can arise when learning frictions are genuinely multi-dimensional, so the assumption should be read as a single-index restriction on the learning technology rather than a generic implication of arbitrary $C(\cdot)$. Unique minimisers and nested (Blackwell-ordered) signal families are two common cases in which comparability is automatically satisfied.

A sufficient condition for comparability is that the learning technology admits a nested family of attainable signals ordered by garbling (Blackwell order). In that case, the induced posterior means are ordered in convex order:

\begin{lemma}[Garbling implies convex-order contraction]\label{lem:garbling}
Let $S_0$ be a signal about $Y$ and let $S_1$ be a garbling of $S_0$ (that is, $S_1$ is generated from $S_0$ via a stochastic map independent of $Y$). Let $Q_i\coloneqq \mathbb{P}(Y=1\mid S_i)$. Then $Q_0 \succeq_{\mathrm{cx}} Q_1$.
\end{lemma}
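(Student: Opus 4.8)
The plan is to show that garbling makes the coarser posterior a conditional expectation of the finer one, and then to invoke conditional Jensen. The garbling hypothesis says that $S_1$ is generated from $S_0$ through a channel that does not depend on $Y$; formally this is the conditional-independence (Markov) statement $Y \perp S_1 \mid S_0$, i.e.\ the chain $Y \to S_0 \to S_1$. I would state this explicitly at the outset, since everything else follows from it mechanically.

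First I would compute $Q_1$ by iterating expectations through $S_0$. Writing $Q_1 = \mathbb{E}[\mathbf{1}\{Y=1\}\mid S_1]$ and inserting the finer conditioning via the tower property,
\[
Q_1 = \mathbb{E}\big[\,\mathbb{E}[\mathbf{1}\{Y=1\}\mid S_0,S_1]\;\big|\;S_1\big].
\]
The Markov property $Y \perp S_1 \mid S_0$ collapses the inner conditional expectation to $\mathbb{E}[\mathbf{1}\{Y=1\}\mid S_0] = Q_0$, giving the key identity $Q_1 = \mathbb{E}[Q_0 \mid S_1]$. This is the crux: the garbled posterior is literally the conditional mean of the original posterior given the coarser signal. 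In passing this also re-derives Bayes plausibility, since $\mathbb{E}[Q_1] = \mathbb{E}[\mathbb{E}[Q_0\mid S_1]] = \mathbb{E}[Q_0] = \mu$, so both posteriors lie in $\mathcal{Q}(\mu)$ and the convex-order comparison is well posed.

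With the identity in hand, the convex-order inequality is immediate from conditional Jensen. For any convex $\varphi:[0,1]\to\mathbb{R}$,
\[
\varphi(Q_1) = \varphi\big(\mathbb{E}[Q_0\mid S_1]\big) \le \mathbb{E}\big[\varphi(Q_0)\mid S_1\big],
\]
and taking expectations yields $\mathbb{E}[\varphi(Q_1)] \le \mathbb{E}[\varphi(Q_0)]$ for every convex $\varphi$. By Definition~\ref{def:cx}, this is exactly $Q_0 \succeq_{\mathrm{cx}} Q_1$, which is the claim.

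The only real subtlety---the step I would flag as the main obstacle---is the legitimacy of collapsing the inner conditional expectation, i.e.\ turning the informal phrase ``stochastic map independent of $Y$'' into the precise Markov statement $Y \perp S_1 \mid S_0$ and then applying it inside a nested conditional expectation. Once that conditional-independence structure is pinned down, the identity $Q_1 = \mathbb{E}[Q_0\mid S_1]$ and the Jensen step are routine. If one prefers to sidestep measure-theoretic conditional independence, the same identity can be obtained by direct computation on the joint law when $S_0,S_1$ take finitely many values: writing the garbling kernel as $\mathbb{P}(S_1\mid S_0)$, one checks that $\mathbb{E}[Q_0\mid S_1=s_1]$ coincides with the Bayes posterior $\mathbb{P}(Y=1\mid S_1=s_1)=Q_1$, after which the Jensen argument proceeds unchanged.
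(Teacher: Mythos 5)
Your proof is correct and follows essentially the same route as the paper's: establish the Markov chain $Y \to S_0 \to S_1$, derive the key identity $Q_1 = \mathbb{E}[Q_0 \mid S_1]$ via the tower property, and conclude by conditional Jensen. The additional observations (re-deriving Bayes plausibility, the finite-support sanity check) are fine but not needed.
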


\begin{proof}
Because $S_1$ is generated from $S_0$ via a stochastic map independent of $Y$, we have the Markov chain $Y \to S_0 \to S_1$ (equivalently, $Y\perp S_1\mid S_0$), so $\mathbb{E}[Y\mid S_0,S_1]=\mathbb{E}[Y\mid S_0]$. The tower property therefore gives
\[
Q_1=\mathbb{E}[Y\mid S_1]=\mathbb{E}\!\left[\mathbb{E}[Y\mid S_0]\mid S_1\right]=\mathbb{E}[Q_0\mid S_1].
\]
For any convex $\varphi$, conditional Jensen implies $\varphi(Q_1)=\varphi(\mathbb{E}[Q_0\mid S_1])\le \mathbb{E}[\varphi(Q_0)\mid S_1]$. Taking expectations yields $\mathbb{E}[\varphi(Q_1)]\le \mathbb{E}[\varphi(Q_0)]$, which is exactly $Q_0\succeq_{\mathrm{cx}} Q_1$ by Definition \ref{def:cx}.
\end{proof}

\section{Main Results: Contraction and Separation}\label{sec:mainresults}

The goal is to characterise whether embedding preferences can be optimal when post-processing can be conducted; see Figure \ref{fig:pipelines}. It will be demonstrated that it cannot under the assumptions just stated. We first establish a contraction result which, when combined with results already stated, produces a separation result.

\begin{figure}[t]
\centering
\begin{tikzpicture}[
  font=\small,
  node distance=5mm and 6mm,
  block/.style={draw, rounded corners, align=center, minimum height=7mm, minimum width=16mm},
  wide/.style={block, minimum width=26mm},
  arrow/.style={-{Latex[length=2mm]}, thick},
  frame/.style={draw, rounded corners, inner sep=3pt}
]
% --- Pipeline A (embedded) ---
\node[block] (dataA) {Data};
\node[wide, right=of dataA] (trainA) {Training\\{\scriptsize pref.-weighted loss}};
\node[block, right=of trainA] (predA) {Prediction};
\node[block, right=of predA] (decA) {Decision};
\draw[arrow] (dataA) -- (trainA);
\draw[arrow] (trainA) -- (predA);
\draw[arrow] (predA) -- (decA);
\node[above=2mm of trainA, font=\bfseries\small] {Pipeline A: Embedded};
% Frame A
\node[frame, fit=(dataA)(trainA)(predA)(decA)] (frameA) {};

% --- Pipeline B (separated), below ---
\begin{scope}[yshift=-20mm]
\node[block] (dataB) {Data};
\node[wide, right=of dataB] (trainB) {Training\\{\scriptsize pref.-free proper loss}};
\node[block, right=of trainB] (calB) {Calibrated\\$\hat q$};
\node[wide, right=of calB] (postB) {Post-processing\\{\scriptsize decision rule}};
\node[block, right=of postB] (decB) {Decision};
\draw[arrow] (dataB) -- (trainB);
\draw[arrow] (trainB) -- (calB);
\draw[arrow] (calB) -- (postB);
\draw[arrow] (postB) -- (decB);
\node[above=2mm of trainB, font=\bfseries\small] {Pipeline B: Separated};
\node[frame, fit=(dataB)(trainB)(calB)(postB)(decB)] (frameB) {};
\end{scope}

% Note moved below the figure
\node[align=center, font=\footnotesize, below=8mm of frameB] 
  {Under Theorem 2, separation weakly dominates (given Assumptions 1, 3, 4).};
\end{tikzpicture}
\caption{Two locations for preferences in a prediction pipeline: embed them in training (A) or apply them after learning calibrated probabilities (B).}
\label{fig:pipelines}
\end{figure}

\subsection{Preference embedding reduces learned informativeness}

The first theorem is a Strack-Yang-style contraction result. Increasing differences (Assumption~\ref{ass:DV}) pins down the direction of the comparative static whenever optimal posteriors are comparable, and Assumption~\ref{ass:comparable} ensures the optimiser correspondence lies on a convex-order chain so that this comparative static is well-defined.

\begin{theorem}[Preference embedding induces a mean-preserving contraction]\label{thm:contraction}
Suppose Assumptions \ref{ass:costBased}, \ref{ass:DV}, and \ref{ass:comparable} hold. For each $t\in[0,1]$, let $Q_t$ be a solution to \eqref{eq:learningProblem} with Bayes risk $H_t$.

If $t_1>t_0$, then every minimiser $Q_{t_0}$ at $t_0$ convex-order dominates every minimiser $Q_{t_1}$ at $t_1$:
\begin{equation}
Q_{t_0} \succeq_{\mathrm{cx}} Q_{t_1}.
\end{equation}
\end{theorem}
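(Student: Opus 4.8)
The plan is to prove the contraction via a standard monotone-comparative-statics (revealed-preference) argument built on the increasing-differences property in Assumption~\ref{ass:DV}, using Assumption~\ref{ass:comparable} only to know the two optimisers are convex-order comparable. Fix $t_1>t_0$ and let $Q_{t_0},Q_{t_1}$ be arbitrary minimisers of the respective problems in \eqref{eq:learningProblem}. By Assumption~\ref{ass:comparable} exactly one of $Q_{t_0}\succeq_{\mathrm{cx}}Q_{t_1}$ or $Q_{t_1}\succeq_{\mathrm{cx}}Q_{t_0}$ holds; the whole task is to rule out the second case (strictly), which would correspond to embedding preferences \emph{increasing} informativeness.

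First I would write down the two optimality (revealed-preference) inequalities. Since $Q_{t_0}$ minimises the $t_0$-objective and $Q_{t_1}$ is feasible, and symmetrically at $t_1$, I get
\begin{align}
\mathbb{E}[H_{t_0}(Q_{t_0})]+C(Q_{t_0}) &\le \mathbb{E}[H_{t_0}(Q_{t_1})]+C(Q_{t_1}),\label{eq:rp0}\\
\mathbb{E}[H_{t_1}(Q_{t_1})]+C(Q_{t_1}) &\le \mathbb{E}[H_{t_1}(Q_{t_0})]+C(Q_{t_0}).\label{eq:rp1}
\end{align}
Adding \eqref{eq:rp0} and \eqref{eq:rp1} cancels the friction terms $C(Q_{t_0})$ and $C(Q_{t_1})$ and leaves the clean inequality
\begin{equation}\label{eq:idineq}
\mathbb{E}\!\left[H_{t_1}(Q_{t_1})-H_{t_0}(Q_{t_1})\right]
\le
\mathbb{E}\!\left[H_{t_1}(Q_{t_0})-H_{t_0}(Q_{t_0})\right].
\end{equation}
This is precisely a statement about the increment $\Delta_{t_1,t_0}=H_{t_1}-H_{t_0}$: it says $\mathbb{E}[\Delta_{t_1,t_0}(Q_{t_1})]\le \mathbb{E}[\Delta_{t_1,t_0}(Q_{t_0})]$. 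The cancellation of $C$ is the reason this argument needs no assumption on $C$ beyond what makes the minimisers exist; the friction matters only through Assumption~\ref{ass:comparable}.

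Next I would invoke Assumption~\ref{ass:DV} to extract the direction. Suppose, for contradiction, that the comparable pair satisfies $Q_{t_1}\succeq_{\mathrm{cx}}Q_{t_0}$ with $Q_{t_1}\not\stackrel{d}{=}Q_{t_0}$. Applying the increasing-differences inequality \eqref{eq:increasingDiff} with the more-dispersed variable $Q_{t_1}$ and the less-dispersed $Q_{t_0}$, and using the \emph{strict} clause of Assumption~\ref{ass:DV}, gives
\begin{equation}
\mathbb{E}\!\left[H_{t_1}(Q_{t_1})-H_{t_0}(Q_{t_1})\right]
>
\mathbb{E}\!\left[H_{t_1}(Q_{t_0})-H_{t_0}(Q_{t_0})\right],
\end{equation}
which directly contradicts \eqref{eq:idineq}. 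Hence the case $Q_{t_1}\succeq_{\mathrm{cx}}Q_{t_0}$ with a nondegenerate spread is impossible, and by Assumption~\ref{ass:comparable} the only surviving ordering is $Q_{t_0}\succeq_{\mathrm{cx}}Q_{t_1}$ (the degenerate case $Q_{t_1}\stackrel{d}{=}Q_{t_0}$ trivially satisfies $Q_{t_0}\succeq_{\mathrm{cx}}Q_{t_1}$ as well). Since $Q_{t_0},Q_{t_1}$ were arbitrary minimisers, every $t_0$-minimiser dominates every $t_1$-minimiser, as claimed.

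The main obstacle is conceptual rather than computational: the delicate point is that the summed revealed-preference inequality \eqref{eq:idineq} constrains the increment-expectation in one direction, while Assumption~\ref{ass:DV} constrains it in the \emph{opposite} direction along any fixed convex-order pair, so the two are compatible only for the correct orientation of the chain. I must be careful to apply the increasing-differences inequality with the variables in the right roles (dominant variable as the ``$Q$'' in \eqref{eq:increasingDiff}) and to use the strictness clause to kill the wrong-direction nondegenerate case outright, rather than merely obtaining a weak inequality that both orderings could satisfy. Assumption~\ref{ass:comparable} does the essential work of guaranteeing that \emph{some} convex-order relation holds, so that the contradiction argument has only one alternative to eliminate; without it the optimisers could be incomparable and \eqref{eq:idineq} alone would not pin down a direction.
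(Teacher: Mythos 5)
Your proposal is correct and follows essentially the same route as the paper's own proof: the two revealed-preference inequalities, summation to cancel $C$, and then using Assumption~\ref{ass:comparable} plus the strict clause of Assumption~\ref{ass:DV} to rule out the wrong-direction nondegenerate ordering, with the degenerate case handled trivially. (One cosmetic point: ``exactly one of'' the two orderings holds is slightly imprecise since both hold when $Q_{t_0}\stackrel{d}{=}Q_{t_1}$, but your later treatment of that case makes this harmless.)
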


\begin{proof}
Fix $t_1>t_0$. Let $Q_{t_1}$ and $Q_{t_0}$ be minimisers at $t_1$ and $t_0$, respectively. Optimality implies
\begin{equation}\label{eq:opt1}
\mathbb{E}[H_{t_1}(Q_{t_1})] + C(Q_{t_1})
\le
\mathbb{E}[H_{t_1}(Q_{t_0})] + C(Q_{t_0}),
\end{equation}
and
\begin{equation}\label{eq:opt0}
\mathbb{E}[H_{t_0}(Q_{t_0})] + C(Q_{t_0})
\le
\mathbb{E}[H_{t_0}(Q_{t_1})] + C(Q_{t_1}).
\end{equation}
Adding \eqref{eq:opt1} and \eqref{eq:opt0} cancels the cost terms and yields
\begin{equation}\label{eq:keyIneq}
\mathbb{E}\!\left[H_{t_1}(Q_{t_1}) - H_{t_0}(Q_{t_1})\right]
\le
\mathbb{E}\!\left[H_{t_1}(Q_{t_0}) - H_{t_0}(Q_{t_0})\right].
\end{equation}

By Assumption \ref{ass:comparable}, either $Q_{t_0}\succeq_{\mathrm{cx}} Q_{t_1}$ (in which case we are done) or $Q_{t_1}\succeq_{\mathrm{cx}} Q_{t_0}$. Suppose $Q_{t_1}\succeq_{\mathrm{cx}} Q_{t_0}$. Then Assumption \ref{ass:DV} implies
\[
\mathbb{E}\!\left[H_{t_1}(Q_{t_1}) - H_{t_0}(Q_{t_1})\right]
\ge
\mathbb{E}\!\left[H_{t_1}(Q_{t_0}) - H_{t_0}(Q_{t_0})\right],
\]
with strict inequality whenever $Q_{t_1}\not\stackrel{d}{=}Q_{t_0}$. This contradicts \eqref{eq:keyIneq} unless $Q_{t_1}\stackrel{d}{=}Q_{t_0}$. In that knife-edge case, $Q_{t_0}\succeq_{\mathrm{cx}} Q_{t_1}$ holds as well. Hence $Q_{t_0}\succeq_{\mathrm{cx}} Q_{t_1}$ in all cases.
\end{proof}

\noindent Imposing, in addition, Assumption \ref{ass:cxCost} regarding learning frictions yields the following:

\begin{corollary}[Contraction weakly reduces learning cost]\label{cor:cost}
In addition to the assumptions of Theorem \ref{thm:contraction}, suppose Assumption \ref{ass:cxCost} holds. Then for any $t_1>t_0$ and any corresponding minimisers $Q_{t_0}$ and $Q_{t_1}$,
\[
C(Q_{t_0}) \ge C(Q_{t_1}).
\]
If, in addition, the inequality in Assumption \ref{ass:cxCost} is strict whenever $Q\succeq_{\mathrm{cx}}Q'$ and $Q\not\stackrel{d}{=}Q'$, then $C(Q_{t_0})>C(Q_{t_1})$ whenever $Q_{t_0}\not\stackrel{d}{=}Q_{t_1}$.
\end{corollary}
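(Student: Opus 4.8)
The plan is to obtain this as an immediate composition of the contraction result with the cost-monotonicity hypothesis: essentially no new argument is required, because all the substantive content has already been absorbed into Theorem~\ref{thm:contraction}. First I would fix $t_1>t_0$ and let $Q_{t_0},Q_{t_1}$ be arbitrary corresponding minimisers of \eqref{eq:learningProblem}. Since the hypotheses of Theorem~\ref{thm:contraction} (Assumptions~\ref{ass:costBased}, \ref{ass:DV}, and \ref{ass:comparable}) are assumed to hold, I would invoke that theorem verbatim to conclude $Q_{t_0}\succeq_{\mathrm{cx}}Q_{t_1}$, the key point being that the theorem delivers this ranking unconditionally (``in all cases''), not merely up to a selection of comparable optimisers.

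The second and final step for the weak claim is to feed this convex-order relation into Assumption~\ref{ass:cxCost}, taking care over orientation: the assumption reads ``$Q\succeq_{\mathrm{cx}}Q'\implies C(Q)\ge C(Q')$,'' so applying it with $Q=Q_{t_0}$ and $Q'=Q_{t_1}$ yields exactly $C(Q_{t_0})\ge C(Q_{t_1})$. For the strict refinement, I would use that Theorem~\ref{thm:contraction} establishes $Q_{t_0}\succeq_{\mathrm{cx}}Q_{t_1}$ even when the two laws differ, so in the case $Q_{t_0}\not\stackrel{d}{=}Q_{t_1}$ the convex-order domination is between genuinely distinct distributions. The strict form of Assumption~\ref{ass:cxCost} then applies with the same orientation and gives $C(Q_{t_0})>C(Q_{t_1})$.

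I do not expect any real obstacle here; the proof is a two-line appeal to previously established results. The only things to watch are bookkeeping points. First, one must keep the direction of the convex order straight: the \emph{lower}-$t$ (less embedded, more informative) minimiser is the convex-order \emph{dominant} one, hence the more costly one, which is why the inequality runs $C(Q_{t_0})\ge C(Q_{t_1})$ rather than the reverse. Second, the strictness hypothesis I would invoke on $C$ is stated for ``$Q\succeq_{\mathrm{cx}}Q'$ and $Q\not\stackrel{d}{=}Q'$,'' which matches precisely the distinct-law case that Theorem~\ref{thm:contraction} already covers, so no additional nondegeneracy or comparability argument is needed. All the genuine difficulty---deriving the convex-order ranking from increasing differences (Assumption~\ref{ass:DV}) together with comparability (Assumption~\ref{ass:comparable})---has been discharged upstream in Theorem~\ref{thm:contraction}.
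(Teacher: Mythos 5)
Your proposal is correct and matches the paper's own proof exactly: invoke Theorem~\ref{thm:contraction} to obtain $Q_{t_0}\succeq_{\mathrm{cx}}Q_{t_1}$, then apply Assumption~\ref{ass:cxCost} (and its strict variant for the distinct-law case). No gaps; the orientation bookkeeping you flag is handled correctly.
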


\begin{proof}
By Theorem \ref{thm:contraction}, $Q_{t_0} \succeq_{\mathrm{cx}} Q_{t_1}$. Assumption \ref{ass:cxCost} then implies $C(Q_{t_0}) \ge C(Q_{t_1})$. The strictness clause follows similarly.
\end{proof}

\noindent Theorem \ref{thm:contraction} is the formal counterpart to the ACMM-style claim that preference embedding weakens learning incentives and yields less informative predictors. It is also the direct analogue of Strack and Yang's mean-preserving contraction characterisation, with ``preference embedding'' playing the role of an additional restriction that induces a contraction \citep{strackYang2024}.

\subsection{Separation principle: learn probabilities, then decide}

The next theorem uses the contraction result to establish a robust welfare comparison.

\begin{theorem}[Robust separation principle]\label{thm:separation}
Suppose Assumptions \ref{ass:costBased}, \ref{ass:DV}, and \ref{ass:comparable} hold, and let $Q_0$ and $Q_1$ denote the posterior-belief random variables induced by preference-free and preference-embedded training, respectively.

Then, for any decision problem $(\mathcal{A},u)$ as defined in Section~\ref{sec:setup},
\begin{equation}
\mathbb{E}[V(Q_0)] \ge \mathbb{E}[V(Q_1)].
\end{equation}
Moreover, the welfare under the preference-free pipeline is achieved by applying preferences only ex post, through the Bayes-optimal action rule $a^\star(q)\in\arg\max_{a\in\mathcal{A}} \{q u(a,1)+(1-q)u(a,0)\}$.
\end{theorem}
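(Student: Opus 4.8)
The plan is to obtain the welfare inequality as a direct composition of the contraction result with the Blackwell-type comparison, and then to read off the ``moreover'' clause from the very definition of the indirect value $V$. First I would instantiate Theorem~\ref{thm:contraction} at the endpoints $t_0=0$ and $t_1=1$. Since Assumptions~\ref{ass:costBased}, \ref{ass:DV}, and~\ref{ass:comparable} are in force, the theorem delivers $Q_0\succeq_{\mathrm{cx}}Q_1$; that is, preference-free training induces a posterior distribution that convex-order dominates (is a mean-preserving spread of) the preference-embedded one.

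Next I would apply Lemma~\ref{lem:infoValue} to this convex-order ranking. Because $V$ is convex for every decision problem $(\mathcal{A},u)$ by Lemma~\ref{lem:convexV}, the defining property of the convex order in Definition~\ref{def:cx} immediately yields
\[
\mathbb{E}[V(Q_0)]\ \ge\ \mathbb{E}[V(Q_1)],
\]
which holds uniformly across $(\mathcal{A},u)$ and establishes the main inequality. The uniformity is the substantive payoff of routing the argument through the convex order: the contraction $Q_0\succeq_{\mathrm{cx}}Q_1$ is a statement about the induced posterior distributions alone, so the welfare comparison never invokes the particular downstream objective.

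For the ``moreover'' clause I would unpack $V(q)=\sup_{a\in\mathcal{A}}\{q\,u(a,1)+(1-q)\,u(a,0)\}$. By construction, $\mathbb{E}[V(Q_0)]$ is exactly the expected payoff obtained when, at each realised belief $q$, the decision maker selects an action attaining the supremum, i.e.\ $a^\star(q)\in\arg\max_{a\in\mathcal{A}}\{q\,u(a,1)+(1-q)\,u(a,0)\}$. Since $a^\star$ depends on the preference primitive $u$ but is applied only after the calibrated belief $q$ has been learned, this is precisely the ex-post (post-processing) implementation of preferences in Pipeline~B of Figure~\ref{fig:pipelines}. Hence the welfare level attributed to the preference-free pipeline is achieved by the separated architecture.

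There is essentially no hard obstacle here, as the statement is a corollary of the two earlier lemmas together with Theorem~\ref{thm:contraction}; the only point requiring a word of care is attainment of the supremum defining $V$, so that the argmax correspondence $a^\star(q)$ is nonempty and the ``moreover'' clause refers to an actual decision rule rather than an approximating sequence. For finite action sets, or compact $\mathcal{A}$ with $u$ continuous, attainment is automatic; otherwise one replaces $a^\star$ by an $\varepsilon$-optimal selection and the welfare identity holds up to arbitrarily small slack.
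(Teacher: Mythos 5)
Your proposal is correct and follows exactly the paper's own argument: invoke Theorem~\ref{thm:contraction} at the endpoints to get $Q_0\succeq_{\mathrm{cx}}Q_1$, pass through Lemma~\ref{lem:infoValue} (convexity of $V$ from Lemma~\ref{lem:convexV}) for the welfare inequality, and read the ``moreover'' clause off the definition of $V$. Your added remark on attainment of the supremum defining $a^\star(q)$ is a sensible point of care but does not change the route.
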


\begin{proof}
By Theorem \ref{thm:contraction}, $Q_0 \succeq_{\mathrm{cx}} Q_1$. Lemma \ref{lem:infoValue} then implies $\mathbb{E}[V(Q_0)] \ge \mathbb{E}[V(Q_1)]$.

The second statement follows from the definition of $V(q)$ as the value of choosing a Bayes-optimal action given belief $q$.
\end{proof}

\noindent Theorem \ref{thm:separation} formalises a strong version of the practical guidance advocated by ACMM. Under the stated primitives, the optimal location for preferences is the ex post stage, not the training objective. The proof is deliberately transparent: contraction plus convexity implies dominance.

Theorem \ref{thm:separation} applies uniformly across expected-utility decision problems over the binary state $Y$ when the predictor's output is the decision maker's informational input about that state. This is broader than the binary-action classes in \citet{strackYang2024}, because the present model does not impose a privacy constraint that restricts the feasible set of signals. The reason such breadth is possible is that $V$ is convex for essentially all expected utility problems (Lemma \ref{lem:convexV}). If the decision maker also conditions on additional observables or constraints $Z$ at deployment, a uniform welfare comparison generally requires either a \emph{conditional} contraction $Q_0\mid Z\succeq_{\mathrm{cx}}Q_1\mid Z$ (so the result holds state-by-state in $Z$) or a stronger Blackwell comparison of the full signals; convex-order dominance of the marginal distribution of $Q$ alone need not be sufficient.

The binary outcome assumption remains important. In multi-dimensional belief settings, the convex order is partial, and selection issues become more pronounced. The paper's focus is consistent with many practical machine learning settings in economics where the core prediction object is a scalar risk score. In Appendix \ref{app:mv}, a multivariate version of Theorem \ref{thm:separation} is explored, demonstrating that some partial relaxation of the binary outcome assumption is possible.

\subsection{Closed-form illustration}\label{sec:solvedModel}

Theorems \ref{thm:contraction} and \ref{thm:separation} are developed for a general class of learning frictions. The next example shows the contraction mechanism in a simple, fully worked-out case where the optimal posterior distributions $Q_t$ can be written down in closed form.

Fix a prior $\mu\in(0,1)$. Consider the quadratic Bayes-risk family
\begin{equation}\label{eq:solvedHt}
H_t(q)\coloneqq q(1-q)+t\,q^{2}= q-(1-t)q^{2},\qquad t\in[0,1].
\end{equation}
This satisfies Assumption \ref{ass:DV} by Proposition \ref{prop:sufficientDV} with $h(q)=q^2$ (convex). Increasing $t$ makes $H_t$ less concave, capturing the idea that preference embedding flattens learning incentives.

Let learning frictions depend on posterior dispersion via the (law-invariant) functional
\begin{equation}\label{eq:solvedCost}
C(Q)\coloneqq \frac{\lambda}{2}\Big(\mathbb{E}[Q^{2}]-\mu^{2}\Big)^{2},\qquad \lambda>0.
\end{equation}
Because $q\mapsto q^{2}$ is convex and $\mathbb{E}[Q^{2}]\ge \mu^{2}$ for all $Q\in\mathcal{Q}(\mu)$, the cost \eqref{eq:solvedCost} is convex-order monotone (Assumption \ref{ass:cxCost}).

For any $Q\in\mathcal{Q}(\mu)$, write $x\coloneqq \mathbb{E}[Q^{2}]$. Since $\mathbb{E}[Q]=\mu$ is fixed and $H_t$ is quadratic,
\[
\mathbb{E}[H_t(Q)] = \mu-(1-t)\,x,
\]
so the learning problem \eqref{eq:learningProblem} reduces to a one-dimensional optimisation:
\begin{equation}\label{eq:solved1d}
\min_{x\in[\mu^{2},\,\mu]}\left\{\mu-(1-t)\,x+\frac{\lambda}{2}(x-\mu^{2})^{2}\right\}.
\end{equation}
The unique minimiser is
\begin{equation}\label{eq:solvedxstar}
x_t^\star = \Big(\mu^{2}+\frac{1-t}{\lambda}\Big)\wedge \mu,
\end{equation}
which is weakly decreasing in $t$.

To obtain an explicit optimal posterior distribution, pick the two-point posterior $Q_t$ that takes values in $\{0,q_t^{H}\}$ with mean $\mu$ and second moment $x_t^\star$:
\begin{equation}\label{eq:solvedQt}
Q_t=\begin{cases}
0 & \text{with probability } 1-\dfrac{\mu}{q_t^{H}},\\[0.7em]
q_t^{H} & \text{with probability } \dfrac{\mu}{q_t^{H}},
\end{cases}
\qquad
q_t^{H}\coloneqq \frac{x_t^\star}{\mu}\in[\mu,1].
\end{equation}
Then $\mathbb{E}[Q_t]=\mu$ and $\mathbb{E}[Q_t^{2}]=x_t^\star$, so $Q_t$ solves \eqref{eq:learningProblem}.

Finally, the family $\{Q_t\}_{t\in[0,1]}$ is totally ordered in convex order. For any $k\in[0,1]$,
\[
\mathbb{E}[(Q_t-k)_+] = 
\begin{cases}
\mu\left(1-\dfrac{k}{q_t^{H}}\right) & \text{if } k<q_t^{H},\\[0.7em]
0 & \text{if } k\ge q_t^{H},
\end{cases}
\]
which is pointwise increasing in $q_t^{H}$. Therefore, if $t_1>t_0$ then $q_{t_1}^{H}\le q_{t_0}^{H}$ and hence $Q_{t_0}\succeq_{\mathrm{cx}}Q_{t_1}$. Lemma \ref{lem:infoValue} then implies $\mathbb{E}[V(Q_{t_0})]\ge \mathbb{E}[V(Q_{t_1})]$ for every expected-utility decision problem, exactly as in Theorem \ref{thm:separation}.

This example illustrates a more general point. Start from the frictionless benchmark $C\equiv 0$ (or, in the example, $\lambda = 0$). Then training is just choosing a Bayes-plausible posterior distribution to minimise expected Bayes risk. Because Bayes risk is concave, the objective weakly favours more dispersed posteriors, so full revelation is an optimal solution (and it is the unique optimiser when Bayes risk is strictly concave; in the quadratic illustration, at $t=1$ the Bayes risk is linear and every Bayes-plausible $Q$ is optimal). In that benchmark, preference-embedded and preference-free training tie because learning is costless, so preferences only matter for the downstream action rule. Now introduce $C>0$, which captures that informativeness is costly due to capacity, optimisation, regularisation, or data constraints. Training becomes a genuine information-acquisition problem: you choose how dispersed posteriors should be by trading off Bayes-risk reduction against $C(Q)$. The key is that preference embedding changes the curvature of the Bayes risk, making it less concave in beliefs, so the marginal benefit of greater dispersion is lower. With the same learning friction, the optimiser therefore shifts to a mean-preserving contraction of posteriors under preference-embedded training. Since indirect utility is convex in beliefs, that contraction translates into a robust welfare loss, so the optimal place to apply preferences is ex post through the decision rule.

\subsection{A common decision environment}\label{sec:thresholdExample}

We continue with the binary statistical decision problem described in the Introduction. In that environment, the decision-maker observes data, forms a posterior belief $q=\Pr(Y=1\mid\cdot)$, and chooses $a\in\{0,1\}$ facing asymmetric misclassification costs $c_{\mathrm{FP}}$ and $c_{\mathrm{FN}}$. As shown there, the optimal policy is a threshold rule: act iff $q\ge \tau$, where $\tau=c_{\mathrm{FP}}/(c_{\mathrm{FP}}+c_{\mathrm{FN}})$. The Introduction also contrasts two pipelines: \emph{post-processing}, which first learns calibrated probabilities and then applies the threshold, and \emph{preference embedding}, which trains with a cost-sensitive loss (e.g.\ class-weighted cross-entropy) and then applies a default cutoff to the resulting score. In a frictionless benchmark, these two pipelines can be made decision-equivalent by choosing weights so that the induced score threshold implements the same $\tau$.

The point of this section is to explain why this equivalence can fail once learning is imperfect in the way it typically is in practice (finite data, regularisation, limited capacity, optimisation error). The mechanism is not that weights ``pick the wrong threshold''; the Introduction already makes clear how to align thresholds if one wishes. Rather, embedding changes the \emph{learning objective} and thereby changes the marginal value of informativeness. A convenient way to see this is through the curvature of the Bayes risk. Let $H^0(q)$ denote the Bayes risk of the unweighted log loss, and let $H^w(q)$ be the Bayes risk induced by the weighted loss in the Introduction. A direct calculation yields
\begin{equation}\label{eq:curvature_scaling_34}
H^{w\,\prime\prime}(q)=\rho(q)\,H^{0\,\prime\prime}(q),
\qquad
\rho(q):=\frac{w_0w_1}{(1-q)w_0+q w_1}.
\end{equation}
Under a standard scale normalisation of the weights, $\rho(q)\le 1$, so the embedded objective is less concave: additional dispersion in posteriors reduces expected Bayes risk by less. When informativeness is costly (for any of the reasons above), this weaker curvature induces a \emph{mean-preserving contraction} of the learned posterior distribution: compared to separated training, the predictor’s outputs are pulled toward the prior.

A two-point illustration makes the welfare implication transparent. Suppose separated training yields a calibrated posterior that takes values
\[
Q_0\in\{\mu-\delta,\ \mu+\delta\}\quad \text{with probability } \tfrac12 \text{ each},
\]
where $\mu=\mathbb E[Q_0]$ and $\delta>0$ indexes informativeness (so $\text{Var}(Q_0)=\delta^2$). Assume $\mu-\delta<\tau<\mu+\delta$, so the optimal threshold rule acts only when $Q_0=\mu+\delta$. Under preference embedding with learning frictions, let the learned posterior be a mean-preserving contraction
\[
Q_1\in\{\mu-\delta^w,\ \mu+\delta^w\}\quad \text{with probability } \tfrac12 \text{ each},
\qquad 0<\delta^w<\delta,
\]
so $\Var(Q_1)=(\delta^w)^2<\Var(Q_0)$. If the contraction is large enough that $\mu+\delta^w<\tau$, then the posterior produced by the embedded pipeline never crosses the economically relevant threshold, and the decision-maker optimally never acts. This is strictly suboptimal relative to separation because $\mu+\delta>\tau$ implies that acting is optimal in the high-posterior state under $Q_0$ but that state is no longer distinguishable under $Q_1$. Writing losses as negative payoffs, the expected loss under separation is
\[
\mathbb E[\ell_{\mathrm{sep}}]
=\tfrac12\Bigl[(\mu-\delta)c_{\mathrm{FN}}+\bigl(1-(\mu+\delta)\bigr)c_{\mathrm{FP}}\Bigr],
\]
while under embedding (when $\mu+\delta^w<\tau$) the decision-maker never acts, yielding
\[
\mathbb E[\ell_{\mathrm{emb}}]=\mathbb E[Q_1]\,c_{\mathrm{FN}}=\mu\,c_{\mathrm{FN}}.
\]
The welfare loss is therefore
\begin{equation}\label{eq:welfare_gap_34}
\mathbb E[\ell_{\mathrm{emb}}]-\mathbb E[\ell_{\mathrm{sep}}]
=\tfrac12\Bigl[(\mu+\delta)c_{\mathrm{FN}}-\bigl(1-(\mu+\delta)\bigr)c_{\mathrm{FP}}\Bigr]\;>\;0,
\end{equation}
where strict positivity follows from $\mu+\delta>\tau$ (equivalently, acting is optimal at the high posterior). Thus, even when the embedded objective is aligned with the decision-maker’s preferences in the frictionless benchmark, preference embedding can reduce welfare because it reduces informativeness in the learned signal in a way that cannot be repaired by ex post thresholding. Theorems~\ref{thm:contraction} and~\ref{thm:separation} formalise this intuition.

\section{Loss Choice and Implementation Under Separation}\label{sec:optimalLoss}\label{sec:implementation}

The separation principle answers \emph{where} preferences should enter the pipeline. It does not, by itself, identify a unique preference-free loss, because multiple strictly proper losses elicit calibrated probabilities and can interact differently with learning frictions.

This section clarifies what is and is not pinned down once one restricts attention to losses used in practice, and it makes precise the sense in which ``preference-free learning is optimal among variants'' should be interpreted. In this respect, it translates the formal results into a streamlined implementation message and illustrates the key mechanisms in a common example.

\subsection{Design class and the meaning of ``optimal among variants''}\label{sec:designClass}

Theorem~\ref{thm:separation} compares a preference-free family ($t=0$) to a preference-embedded family ($t=1$) holding fixed the underlying learning technology $C(\cdot)$. This corresponds closely to the practical question posed in ACMM: should asymmetric preferences be implemented by modifying the training loss or by post-processing a calibrated probability estimate?

To speak about optimality \emph{among variants}, one must specify a design class. In particular, if the designer is
free to scale the loss arbitrarily, a normalisation is required because scaling changes the trade-off between
expected Bayes risk and learning frictions in \eqref{eq:learningProblem}. In applied work, scaling also interacts
with optimisation choices (learning rates, early stopping, regularisation), so treating $C(\cdot)$ as fixed is most
naturally paired with treating the scale of admissible losses as fixed or normalised.

Two clarifications are useful. First, because posteriors are Bayes-plausible ($\mathbb{E}[Q]=\mu$), adding an affine function of $q$ to a Bayes risk does not affect the learning problem: if $\tilde H(q)=H(q)+a q+b$, then $\mathbb{E}[\tilde H(Q)]=\mathbb{E}[H(Q)]+a\mu+b$, so the set of minimisers of \eqref{eq:learningProblem} is unchanged. Thus, comparisons across losses are naturally understood \emph{up to affine terms} in the Bayes risk.
Second, multiplicative rescalings \emph{do} matter, and should be viewed as part of the design choice: multiplying the loss by a constant is equivalent to changing the effective learning-friction schedule in \eqref{eq:learningProblem}. The ordering results below should therefore be read as guidance within practically relevant classes in which the scale of candidate losses is fixed (for example, by convention, by optimisation hyperparameters, or by an explicit
normalisation).

\subsection{Choosing a preference-free baseline}\label{sec:properLossChoice}

In probabilistic classification, many commonly used losses are \textit{local} in the sense that the loss for an outcome depends only on the probability assigned to that realised outcome. In the binary setting, this restriction is vacuous: any loss $L(p,y)$ can be written in local form by setting $\ell_1(p):=L(p,1)$ and $\ell_0(s):=L(1-s,0)$. Thus, locality does not select a unique strictly proper scoring rule here.\footnote{In genuinely multi-class settings (with at least three outcome categories), locality becomes a substantive restriction and can single out the logarithmic score under standard regularity conditions; see \citet{gneitingRaftery2007} and references therein.}

The substantive content of separation in the binary setting is, therefore, about \emph{location} rather than
\emph{uniqueness}: train with a preference-free strictly proper loss to learn calibrated probabilities, then apply
preferences ex post. When multiple strictly proper losses are admissible, the choice among them can matter because
their Bayes-risk curvature governs the marginal benefit from generating more dispersed posteriors, and hence
interacts with the learning friction. Section~\ref{sec:curvatureOrder} formalises this curvature channel.

\subsection{A curvature order on preference-free losses}\label{sec:curvatureOrder}

When several proper losses are admissible, the model delivers a simple comparison principle: conditional on a scale
normalisation, prefer losses whose Bayes risk provides stronger incentives to generate informative posterior
dispersion. A convenient way to formalise this is an order on Bayes risk functions.

\begin{definition}[Concavity order on Bayes risk]\label{def:concavityOrder}
For two Bayes risk functions $H$ and $\tilde H$, write $H \preceq \tilde H$ if $\tilde H - H$ is convex.
\end{definition}

\noindent If $\tilde H-H$ is convex, then $\tilde H$ is ``less concave'' than $H$: relative to $H$, the objective
$\tilde H$ adds a convex penalty on dispersion. In the binary setting, this order is easy to verify. In particular,
if $H$ and $\tilde H$ are twice differentiable on $(0,1)$, then $H\preceq \tilde H$ is equivalent to
$\tilde H''(q)-H''(q)\ge 0$ for all $q\in(0,1)$.

The next proposition makes the ``optimal loss'' interpretation precise. It translates the curvature order in
Definition~\ref{def:concavityOrder} into an ordering of learned posterior distributions, under the same primitives
used in Section~\ref{sec:mainresults}.

\begin{proposition}[Curvature-ordered proper losses induce more informative learning]\label{prop:properLossOrder}
Fix two strictly proper scoring rules $L^0$ and $L^1$ with Bayes risks $H_0$ and $H_1$, and suppose $H_0 \preceq H_1$.
For $t\in[0,1]$ define the interpolated loss $L^t:=(1-t)L^0+tL^1$ and let $H_t$ denote its Bayes risk.
Maintain Assumption~\ref{ass:costBased}, and suppose Assumption~\ref{ass:comparable} holds for the family
$\{H_t\}_{t\in[0,1]}$. For each $t$, let $Q_t$ be any minimiser of \eqref{eq:learningProblem} with Bayes risk $H_t$.
Then if $t_1>t_0$,
\[
Q_{t_0}\succeq_{cx} Q_{t_1}.
\]
In particular, the posterior induced by training under $L^0$ convex-order dominates that induced by training under
$L^1$.

Moreover, for any decision problem $(\mathcal{A},u)$,
\[
\mathbb{E}\!\left[V(Q_0)\right]\ \ge\ \mathbb{E}\!\left[V(Q_1)\right].
\]
\end{proposition}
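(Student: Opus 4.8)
The plan is to reduce the proposition to the machinery of Section~\ref{sec:mainresults}. The only substantive work is to show that the interpolated loss $L^t=(1-t)L^0+tL^1$ has Bayes risk $H_t=(1-t)H_0+tH_1$; once this is in hand, the hypothesis $H_0\preceq H_1$ places $\{H_t\}$ in exactly the affine-in-$t$ form of Proposition~\ref{prop:sufficientDV}, and Theorem~\ref{thm:contraction} together with Lemma~\ref{lem:infoValue} finish the argument.

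First I would verify the interpolation identity for the Bayes risk. Since the expected loss is linear in the loss function, $\bar{L}^{t}(p;q)=(1-t)\bar{L}^{0}(p;q)+t\bar{L}^{1}(p;q)$. The key observation is that strict propriety of \emph{both} $L^0$ and $L^1$ means each summand is uniquely minimised at $p=q$; hence for $t\in(0,1)$ the convex combination is also uniquely minimised at $p=q$ (and trivially so at the endpoints). Thus $L^t$ is itself strictly proper and $H_t(q)=\bar{L}^{t}(q;q)=(1-t)H_0(q)+tH_1(q)$.

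Writing $h:=H_1-H_0$, this gives $H_t=H_0+t\,h$. By Definition~\ref{def:concavityOrder}, the assumption $H_0\preceq H_1$ is precisely that $h$ is convex, so Proposition~\ref{prop:sufficientDV} delivers Assumption~\ref{ass:DV} for the family $\{H_t\}$. With Assumptions~\ref{ass:costBased} and~\ref{ass:comparable} maintained by hypothesis, Theorem~\ref{thm:contraction} applies directly and yields $Q_{t_0}\succeq_{\mathrm{cx}}Q_{t_1}$ whenever $t_1>t_0$; specialising to $(t_0,t_1)=(0,1)$ gives the stated convex-order dominance of the $L^0$-trained posterior. The welfare statement is then immediate from Lemma~\ref{lem:infoValue}: convex-order dominance plus convexity of $V$ (Lemma~\ref{lem:convexV}) gives $\mathbb{E}[V(Q_0)]\ge\mathbb{E}[V(Q_1)]$ for every decision problem.

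The main obstacle is the interpolation identity, and it is worth isolating exactly where strict propriety enters: if the two base losses had distinct Bayes-optimal reports, the minimum of the convex combination would generically fall strictly below $(1-t)H_0+tH_1$, destroying the additive form that Proposition~\ref{prop:sufficientDV} requires. A secondary point of care is the \emph{direction} of the contraction. Theorem~\ref{thm:contraction} fixes it through the strict increasing-differences clause of Assumption~\ref{ass:DV}, which Proposition~\ref{prop:sufficientDV} supplies precisely when $h=H_1-H_0$ is strictly convex; I would therefore read $H_0\preceq H_1$ as delivering the strict increment needed to rule out the reverse (spread) ordering, with the degenerate case $Q_0\stackrel{d}{=}Q_1$ yielding equality in the welfare comparison.
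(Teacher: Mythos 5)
Your proposal is correct and follows essentially the same route as the paper's proof: establish $H_t=(1-t)H_0+tH_1$ via strict propriety of the interpolated loss, invoke Proposition~\ref{prop:sufficientDV} with $h=H_1-H_0$ convex, and then apply Theorem~\ref{thm:contraction} and Lemma~\ref{lem:infoValue}. Your closing remark about the strictness clause is a fair observation (the definition of $H_0\preceq H_1$ only gives convexity of the increment, while the proof of Theorem~\ref{thm:contraction} invokes the strict increasing-differences clause to rule out a strict reverse spread), but this is a subtlety the paper's own proof shares rather than a defect of your argument.
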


\begin{proof}
Because $L^t$ is a convex combination of strictly proper losses, it is strictly proper and its Bayes risk satisfies
$H_t=(1-t)H_0+tH_1=H_0+t(H_1-H_0)$. Since $H_1-H_0$ is convex, Proposition~\ref{prop:sufficientDV} implies that the family
$\{H_t\}_{t\in[0,1]}$ satisfies Assumption~\ref{ass:DV}. Theorem~\ref{thm:contraction} then yields
$Q_{t_0}\succeq_{cx} Q_{t_1}$ for any $t_1>t_0$. The welfare inequality follows from Lemma~\ref{lem:infoValue}.
\end{proof}

\noindent Proposition~\ref{prop:properLossOrder} provides a tractable notion of ``optimality among preference-free variants'': within a fixed design class, a loss with a more concave Bayes risk induces (weakly) more informative learned posteriors and therefore weakly dominates for \emph{all} expected-utility decision problems. The order $\preceq$ is a partial order, so not every pair of losses is comparable. But within many parametric families, or for common losses used in practice, checking the convexity of the increment is straightforward.

\subsection{Log loss versus Brier loss}\label{sec:logVsBrier} 

Here, Proposition~\ref{prop:properLossOrder} is illustrated using two standard strictly proper losses. The Bayes risk under log loss is Shannon entropy,
\begin{equation}
H_{\log}(q)=-q\log(q)-(1-q)\log(1-q),
\end{equation}
while the Bayes risk under the Brier loss is
\begin{equation}
H_{\mathrm{Brier}}(q)=q(1-q).
\end{equation}
A direct calculation shows that the increment $H_{\mathrm{Brier}}-H_{\log}$ is convex on $(0,1)$:
\[
\left(H_{\mathrm{Brier}}(q)-H_{\log}(q)\right)'' \;=\; \frac{1}{q(1-q)}-2 \;>\;0 \qquad \forall q\in(0,1).
\]
Therefore $H_{\log}\preceq H_{\mathrm{Brier}}$ in the sense of Definition~\ref{def:concavityOrder}. Under the conditions of Proposition~\ref{prop:properLossOrder} (in particular, holding fixed the same learning friction $C(\cdot)$ and assuming comparability along the interpolation), training under log loss induces a posterior distribution that convex-order dominates that induced by the Brier loss:
\[
Q_{\log}\ \succeq_{cx}\ Q_{\mathrm{Brier}}.
\]
By Lemma~\ref{lem:infoValue}, this implies a robust welfare comparison: for any downstream decision problem,
$\mathbb{E}[V(Q_{\log})]\ge \mathbb{E}[V(Q_{\mathrm{Brier}})]$.

The conclusion depends on holding fixed the learning technology in \eqref{eq:learningProblem}, including the effective scale of the loss. Rescaling a loss is equivalent to rescaling the learning friction and can overturn curvature comparisons. This is exactly why Section~\ref{sec:designClass} emphasises that ``optimal among variants'' requires a design class and a scale normalisation.

\subsection{Further thoughts}\label{sec:diagnosticsShort}

The contraction logic has an observable implication at the level of predicted probabilities. Under the conditions of Theorem~\ref{thm:contraction} (preference embedding) or Proposition~\ref{prop:properLossOrder} (switching to a less concave proper loss), the induced distribution of \emph{calibrated} predicted probabilities $\hat q(X)$ is predicted to undergo a mean-preserving contraction. In empirical work this can be implemented by calibrating (and thus aligning means across pipelines); in the strictly proper-loss comparison of Proposition~\ref{prop:properLossOrder} the means coincide by Bayes plausibility, so mean alignment is mainly a practical convenience. Equivalently, for a grid of cutoffs $k\in[0,1]$, one can compare the empirical functions $k\mapsto \mathbb{E}\big[(\hat q^{(t)}(X)-k)_+\big]$, whose pointwise ordering characterises convex order. This implication is falsifiable: systematic crossings of these curves indicate that the contraction mechanism is not the right description of that environment.

The present theory reinforces ACMM in a precise way. Their empirical findings suggest that heavy reweighting can reduce learning quality and that ex post adjustment can outperform preference-embedded training
\citep{caplinMartinMarx2024, autorCaplinMartinMarx2025}. Theorem~\ref{thm:contraction} rationalises this by showing that preference embedding can act like an effective penalty on informativeness, producing a mean-preserving contraction of the learned posterior distribution. Theorem~\ref{thm:separation} then converts this contraction into a welfare
comparison for a broad class of economic decision problems.

Proposition~\ref{prop:properLossOrder} makes an additional point that is implicit in much applied practice: even within the class of preference-free strictly proper losses, curvature matters for learning incentives. When two strictly proper losses differ by a convex increment in Bayes risk (holding fixed scale), the more concave Bayes risk induces more informative learned posteriors and therefore weakly dominates for any downstream decision problem.

At the same time, the results clarify scope. The paper does not claim that preference embedding is always harmful.
When the learning technology violates Assumption~\ref{ass:DV} or Assumption~\ref{ass:cxCost}, preference embedding can in principle be beneficial for specific objectives. These cases are important in practice, but they are distinct from the separation setting analysed here.

\section{When Preference Embedding Can Dominate}\label{sec:cognitive}

The separation principle in Theorem~\ref{thm:separation} is deliberately strong: it assumes that, once calibrated posteriors are available, decision makers can \emph{costlessly} interpret them and implement the Bayes-optimal action rule. This is the standard decision-theoretic benchmark, and it underwrites the canonical implementation advice: train to predict probabilities, then impose preferences ex post.

This section qualifies that benchmark by introducing a \emph{decision-stage} friction grounded in the rational inattention (RI) literature \citep{sims2003}. The key idea is simple: even if a predictive model delivers a valid posterior, \emph{using} that posterior requires scarce cognitive resources (attention, time, working memory, probabilistic sophistication). In RI models, cognitive effort scales with the \emph{information processed}. This makes cognitive costs endogenous to the stochastic environment: using an informative predictor is more demanding than using an uninformative one. Preference embedding can then become optimal because it effectively \emph{compresses} the predictive signal, reducing the cognitive burden on users, even though it sacrifices informativeness as in Theorem~\ref{thm:separation}.

The resulting trade-off is transparent: preference-free training yields more informative posteriors (higher decision value) but imposes higher attention cost; preference-embedded training yields less informative posteriors (lower decision value) but is cheaper to process.

\subsection{Motivating studies}

Empirical work documents systematic failures of post-processing in human--AI pipelines. \citet{agarwalMoehringRajpurkarSalz2023} study radiologists who receive AI-generated probability assessments for medical diagnoses. Although clinicians were informed that the AI scores were well-calibrated, many overrode confident AI predictions with their own less accurate judgments. In aggregate, the human--AI combination performed so poorly that the authors recommend allocating many cases either to humans without AI support or to the AI without human intervention.

This behaviour is a direct violation of the separation benchmark: calibrated probabilities are not automatically translated into optimal actions. A natural interpretation is that the effective cognitive cost of \emph{processing} probabilities is high: users may default to coarse heuristics, ignore probabilistic information, or apply it inconsistently under time pressure.

More broadly, the literature on algorithmic aversion documents that many users underweight or reject machine-generated advice even when it improves accuracy \citep{tejedaKumarSmythSteyvers2022}. Cognitive economics provides a theoretical foundation for such behaviour: using probabilistic information requires scarce cognitive inputs, and those inputs have opportunity costs \citep{caplin2025cognitive}. In this section, we formalise these observations using an RI-based cost of attention.

\subsection{Extension setup}

The extension here maintains the binary outcome environment from Section~\ref{sec:setup}. Let $V(q)$ denote the decision maker's indirect value function for the relevant downstream task. As established in Lemma~\ref{lem:convexV}, $V$ is convex in the belief $q$ for any expected-utility decision problem.

Let $Q_0$ and $Q_1$ denote the posterior-belief random variables induced by preference-free and preference-embedded training, respectively. The separation principle (Theorem~\ref{thm:separation}) implies that preference-free training yields weakly higher value absent decision-stage frictions. We define the \emph{information loss} from embedding as:
\begin{equation}\label{eq:sec9_info_loss_def}
\Delta_I:=\E[V(Q_0)]-\E[V(Q_1)]\ge 0.
\end{equation}
This quantity $\Delta_I$ captures the gross reduction in decision value caused by the contraction (in convex order) of the learned posterior beliefs, before accounting for any cognitive costs of processing that information.

\subsection{Rational inattention at the decision stage}

The cognitive-cost model in Section~\ref{sec:cognitive} is meant to capture a simple fact: extracting decision-relevant content from a probabilistic prediction requires cognitive effort. A natural discipline for such effort is the RI literature \citep{sims2003}, in which attention costs scale with the mutual information between the state and the signal processed.

Concretely, let $S$ denote the (processed) signal the decision maker attends to, and let $Q=\Pr(Y=1\mid S)$ be the induced posterior. Mutual information satisfies:
\begin{equation}\label{eq:sec9_MI_def}
I(Y;S)=H(Y)-H(Y\mid S)=h(\mu)-\E[h(Q)],
\end{equation}
where $\mu=\Pr(Y=1)$ and $h(q)=-q\log q-(1-q)\log(1-q)$ is the binary entropy function.\footnote{Logs can be taken in base $e$ (nats) or base $2$ (bits). This only rescales the attention-price parameter.}
Equivalently,
\begin{equation}\label{eq:sec9_MI_KL}
I(Y;S)=\E\!\left[\KL\!\big(\Bern(Q)\,\|\,\Bern(\mu)\big)\right].
\end{equation}
Because $Q$ is a sufficient statistic for $S$ about $Y$, one may write $I(Y;S)=I(Y;Q)$. We therefore model decision-stage cognitive effort as proportional to $I(Y;Q)$: the more the signal reduces uncertainty about $Y$ (relative to the prior), the more information must be absorbed to exploit it.

In standard RI, an agent can design an optimal information structure subject to an entropy cost. In human--AI settings, users typically face a \emph{menu} of deployed tools (interfaces) rather than the ability to redesign the signal at will. A radiologist does not get to redesign the model's UI; they decide how much to rely on it, whether to consult it, and whether to default to their baseline heuristic.

We capture this product environment with a menu-based RI assumption: the user can either (i) ignore the tool and act on the prior, or (ii) process the tool's signal as provided and then act optimally given the resulting posterior.

\begin{assumption}[Decision-stage RI cost with a limited menu]\label{ass:RI_menu_sec9}
Fix an attention-price parameter $\lambda_{\mathrm{cog}}>0$.
Under pipeline $i\in\{0,1\}$, the user chooses between:
\begin{enumerate}[label=(\alph*),leftmargin=2em]
\item \emph{Ignore the tool} and act on the prior, obtaining value $V(\mu)$; or
\item \emph{Use the tool}: process its output so as to obtain posterior $Q_i$, then choose the Bayes-optimal action $a^\star(Q_i)$, incurring cognitive cost $\lambda_{\mathrm{cog}}\cdot I(Y;Q_i)$.
\end{enumerate}
Users cannot costlessly redesign, garble, or reformat the tool's output beyond this menu.
\end{assumption}

\noindent Assumption~\ref{ass:RI_menu_sec9} is the minimal RI structure that (i) makes cognitive costs endogenous to the informativeness of the deployed prediction, and (ii) captures algorithmic aversion/limited uptake as an explicit outside option (ignore and act on the prior). The ``limited menu'' clause is not a technical trick; it reflects the institutional reality that signal redesign is an engineering choice made upstream (or is itself cognitively costly).

\subsection{The attention-informativeness trade-off}

Under pipeline $i\in\{0,1\}$, the user's welfare is:
\begin{equation}\label{eq:sec9_Wi_def}
W_i(\lambda_{\mathrm{cog}}):=\max\Big\{V(\mu),\ \E[V(Q_i)]-\lambda_{\mathrm{cog}}\,I(Y;Q_i)\Big\}.
\end{equation}
Define the \emph{attention savings} from embedding:
\begin{equation}\label{eq:sec9_MI_gap_def}
\Delta_{MI}:=I(Y;Q_0)-I(Y;Q_1).
\end{equation}
The next lemma links attention savings directly to the contraction result that drives the paper.

\begin{lemma}[Preference embedding reduces mutual information]\label{lem:sec9_MI_monotone}
If $Q_0\succeq_{\mathrm{cx}}Q_1$ (as in Theorem~\ref{thm:contraction}), then $I(Y;Q_0)\ge I(Y;Q_1)$, so $\Delta_{MI}\ge 0$.
\end{lemma}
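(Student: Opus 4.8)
The plan is to show $I(Y;Q_0)\ge I(Y;Q_1)$ directly from the convex-order hypothesis $Q_0\succeq_{\mathrm{cx}}Q_1$ using the closed form for mutual information already derived in equation~\eqref{eq:sec9_MI_def}. The key observation is that mutual information in the binary case is, up to an additive constant, the \emph{negative} of the expected binary entropy of the posterior: $I(Y;Q_i)=h(\mu)-\mathbb{E}[h(Q_i)]$. Since the prior $\mu$ is fixed across pipelines (Bayes plausibility forces $\mathbb{E}[Q_0]=\mathbb{E}[Q_1]=\mu$), the constant $h(\mu)$ cancels and the comparison reduces entirely to comparing $\mathbb{E}[h(Q_0)]$ and $\mathbb{E}[h(Q_1)]$.

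First I would recall that the binary entropy function $h(q)=-q\log q-(1-q)\log(1-q)$ is strictly \emph{concave} on $[0,1]$ (a standard fact, verifiable from $h''(q)=-1/(q(1-q))<0$). Next I would invoke the defining property of the convex order from Definition~\ref{def:cx}: if $Q_0\succeq_{\mathrm{cx}}Q_1$, then $\mathbb{E}[\varphi(Q_0)]\ge\mathbb{E}[\varphi(Q_1)]$ for every convex $\varphi$. Applying this to $\varphi=-h$ (which is convex since $h$ is concave) yields $\mathbb{E}[-h(Q_0)]\ge\mathbb{E}[-h(Q_1)]$, equivalently $\mathbb{E}[h(Q_0)]\le\mathbb{E}[h(Q_1)]$. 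This is the crux: the more dispersed posterior has the \emph{lower} expected entropy, because dispersion pushes mass toward the more certain extremes where entropy is small.

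Finally I would substitute back into the mutual-information expression. Subtracting the two entropies from the common constant $h(\mu)$ reverses the inequality:
\[
I(Y;Q_0)=h(\mu)-\mathbb{E}[h(Q_0)]\ \ge\ h(\mu)-\mathbb{E}[h(Q_1)]=I(Y;Q_1),
\]
so $\Delta_{MI}=I(Y;Q_0)-I(Y;Q_1)\ge 0$, as claimed. This argument is essentially identical in structure to the one already used in the excerpt to verify that the mutual-information learning friction $C_{\mathrm{MI}}$ satisfies Assumption~\ref{ass:cxCost}, so it could alternatively be stated as a direct corollary of that observation.

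I do not anticipate any substantive obstacle here, since every ingredient is in place: the hardest conceptual point—that mutual information is a concave-Bayes-risk functional and thus anti-monotone in convex order—is precisely the content already established around equation~\eqref{eq:MI_as_Q}. The only thing to be careful about is the sign bookkeeping (concavity of $h$ becomes convexity of $-h$, and subtraction from the constant flips the direction), but this is routine rather than delicate.
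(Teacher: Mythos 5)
Your proof is correct and follows exactly the paper's own argument: both use $I(Y;Q_i)=h(\mu)-\mathbb{E}[h(Q_i)]$, the concavity of binary entropy, and the defining property of the convex order to conclude $\mathbb{E}[h(Q_0)]\le\mathbb{E}[h(Q_1)]$ and hence $I(Y;Q_0)\ge I(Y;Q_1)$. Your version simply spells out the sign bookkeeping more explicitly; there is no substantive difference.
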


\begin{proof}
By concavity of $h(\cdot)$, $Q_0\succeq_{\mathrm{cx}}Q_1$ implies $\E[h(Q_0)]\le \E[h(Q_1)]$.
Using \eqref{eq:sec9_MI_def}, $I(Y;Q)=h(\mu)-\E[h(Q)]$ is therefore larger for $Q_0$ than for $Q_1$.
\end{proof}

\noindent Lemma~\ref{lem:sec9_MI_monotone} makes the trade-off immediate. Preference embedding lowers expected decision value (by $\Delta_I\ge 0$) \emph{and} lowers the attention cost of using the tool (by $\lambda_{\mathrm{cog}}\Delta_{MI}\ge 0$). The question is which effect dominates.

\subsection{When embedding dominates under RI}

We are now in a position to characterise the main result from the RI extension.

\begin{theorem}[RI reversal condition]\label{thm:sec9_RI_reversal}
Suppose Assumption~\ref{ass:RI_menu_sec9} holds and $\Delta_{MI}>0$.

\begin{enumerate}[label=(\alph*),leftmargin=2em]
\item (\emph{When is a tool used at all?})
Pipeline $i$ is used (rather than ignored) if and only if
\begin{equation}\label{eq:sec9_use_threshold}
\lambda_{\mathrm{cog}}<\bar\lambda_i,
\qquad
\bar\lambda_i
:=
\begin{cases}
\dfrac{\E[V(Q_i)]-V(\mu)}{I(Y;Q_i)} & \text{if } I(Y;Q_i)>0,\\
0 & \text{if } I(Y;Q_i)=0.
\end{cases}
\end{equation}
When $I(Y;Q_i)=0$, Bayes plausibility implies $Q_i=\mu$ a.s., so the tool delivers $V(\mu)$ and is
never strictly preferred to ignoring.

\item (\emph{Embedding versus post-processing when both tools are used.})
If $\lambda_{\mathrm{cog}}<\min\{\bar\lambda_0,\bar\lambda_1\}$, then preference-embedded training yields higher welfare,
$W_1(\lambda_{\mathrm{cog}})>W_0(\lambda_{\mathrm{cog}})$, if and only if
\begin{equation}\label{eq:sec9_lambda_star}
\lambda_{\mathrm{cog}}>\lambda^\star
:=\frac{\Delta_I}{\Delta_{MI}}.
\end{equation}
Equivalently, embedding dominates exactly when the attention savings exceed the information loss:
\[
\lambda_{\mathrm{cog}}\Delta_{MI}>\Delta_I.
\]

\item (\emph{A pure uptake channel.})
If $\bar\lambda_1>\bar\lambda_0$, then for all $\lambda_{\mathrm{cog}}\in(\bar\lambda_0,\bar\lambda_1)$ the preference-free tool is ignored
while the embedded tool is used, implying $W_1(\lambda_{\mathrm{cog}})>W_0(\lambda_{\mathrm{cog}})$.
\end{enumerate}
\end{theorem}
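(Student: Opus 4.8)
The plan is to treat each part as a direct consequence of unpacking the $\max$ operator in the welfare definition \eqref{eq:sec9_Wi_def} and comparing its two branches; the argument is essentially algebraic once the degenerate no-information case is handled. For part (a), I would note that pipeline $i$ is \emph{used} precisely when the ``use'' branch strictly dominates the ``ignore'' branch, i.e.\ when $\E[V(Q_i)] - \lambda_{\mathrm{cog}}\, I(Y;Q_i) > V(\mu)$. When $I(Y;Q_i)>0$ this rearranges immediately to $\lambda_{\mathrm{cog}} < (\E[V(Q_i)]-V(\mu))/I(Y;Q_i) = \bar\lambda_i$; Bayes plausibility $\E[Q_i]=\mu$ together with convexity of $V$ (Lemma~\ref{lem:convexV}) gives $\E[V(Q_i)]\ge V(\mu)$ by Jensen, so $\bar\lambda_i\ge 0$ and the threshold is meaningful.

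The one point in part (a) requiring care is the boundary case $I(Y;Q_i)=0$. Here I would invoke the identity $I(Y;Q_i)=h(\mu)-\E[h(Q_i)]$ from \eqref{eq:sec9_MI_def}, together with the strict concavity of the binary entropy $h$ and $\E[Q_i]=\mu$, to conclude via the equality case of Jensen's inequality that $Q_i=\mu$ almost surely. Then $\E[V(Q_i)]=V(\mu)$, the ``use'' branch equals $V(\mu)$ for every $\lambda_{\mathrm{cog}}$, and the tool is never strictly preferred to ignoring, consistent with the convention $\bar\lambda_i=0$ (so that ``used iff $\lambda_{\mathrm{cog}}<\bar\lambda_i$'' is vacuously false given $\lambda_{\mathrm{cog}}>0$).

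For part (b), under the hypothesis $\lambda_{\mathrm{cog}}<\min\{\bar\lambda_0,\bar\lambda_1\}$ part (a) implies that both maxima in \eqref{eq:sec9_Wi_def} are attained at their ``use'' branches, so $W_i(\lambda_{\mathrm{cog}})=\E[V(Q_i)]-\lambda_{\mathrm{cog}}\,I(Y;Q_i)$ for $i\in\{0,1\}$. Subtracting and substituting the definitions \eqref{eq:sec9_info_loss_def} and \eqref{eq:sec9_MI_gap_def} gives
\[
W_1(\lambda_{\mathrm{cog}})-W_0(\lambda_{\mathrm{cog}}) = \lambda_{\mathrm{cog}}\,\Delta_{MI}-\Delta_I .
\]
Since $\Delta_{MI}>0$ by hypothesis, this is strictly positive exactly when $\lambda_{\mathrm{cog}}>\Delta_I/\Delta_{MI}=\lambda^\star$, which is both the claimed equivalence and its restatement $\lambda_{\mathrm{cog}}\Delta_{MI}>\Delta_I$.

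For part (c), I would apply part (a) twice on the interval $\lambda_{\mathrm{cog}}\in(\bar\lambda_0,\bar\lambda_1)$: because $\lambda_{\mathrm{cog}}>\bar\lambda_0$ the preference-free tool is not used, so its ``use'' branch does not exceed $V(\mu)$ and hence $W_0(\lambda_{\mathrm{cog}})=V(\mu)$; because $\lambda_{\mathrm{cog}}<\bar\lambda_1$ the embedded tool is strictly preferred to ignoring, so $W_1(\lambda_{\mathrm{cog}})=\E[V(Q_1)]-\lambda_{\mathrm{cog}}\,I(Y;Q_1)>V(\mu)=W_0(\lambda_{\mathrm{cog}})$. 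I expect the only genuine subtlety anywhere to be the $I(Y;Q_i)=0$ degeneracy in part (a); everything else reduces to signing a single expression that is linear in $\lambda_{\mathrm{cog}}$, so there is no real analytic obstacle.
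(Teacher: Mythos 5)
Your proposal is correct and follows essentially the same route as the paper's proof: unpack the $\max$ in \eqref{eq:sec9_Wi_def} to get the use/ignore threshold in part (a), handle the $I(Y;Q_i)=0$ case via strict concavity of binary entropy forcing $Q_i=\mu$ a.s., compare the two ``use'' branches linearly in $\lambda_{\mathrm{cog}}$ for part (b), and read part (c) off part (a). The only additions beyond the paper's argument are harmless elaborations (the Jensen remark that $\bar\lambda_i\ge 0$ and the explicit identification $W_0=V(\mu)$ in part (c)).
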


\begin{proof}
Part (a) follows from \eqref{eq:sec9_Wi_def}: pipeline $i$ is used iff
\[
\E[V(Q_i)]-\lambda_{\mathrm{cog}}I(Y;Q_i)>V(\mu).
\]
If $I(Y;Q_i)=0$, then \eqref{eq:sec9_MI_def} implies $\E[h(Q_i)]=h(\mu)$, and strict concavity of $h$
yields $Q_i=\mu$ a.s.; hence the inequality fails (it holds as equality) and the tool is never
strictly used. If $I(Y;Q_i)>0$, rearranging yields \eqref{eq:sec9_use_threshold}.

For part (b), when both are used we compare net values:
\begin{align*}
    \E[V(Q_1)]&-\lambda_{\mathrm{cog}}I(Y;Q_1)>\E[V(Q_0)]-\lambda_{\mathrm{cog}}I(Y;Q_0)\\
&\iff \lambda_{\mathrm{cog}}(I(Y;Q_0)-I(Y;Q_1))>\E[V(Q_0)]-\E[V(Q_1)].
\end{align*}
Substitute $\Delta_{MI}$ and $\Delta_I$.

Part (c) is immediate from part (a).
\end{proof}

\noindent Theorem~\ref{thm:sec9_RI_reversal} delivers a clean ``value per bit'' interpretation. The ratio
\[
\lambda^\star=\frac{\Delta_I}{\Delta_{MI}}
\]
is the attention price at which the user is indifferent. The numerator $\Delta_I$ is how much decision value is lost by embedding (less informative posteriors). The denominator $\Delta_{MI}$ is how many bits/nats of state information the user is spared from processing. Embedding dominates when the marginal value of those extra bits is low relative to their cognitive price.

The main comparative statics are immediate from $\lambda^\star=\Delta_I/\Delta_{MI}$. First, as $\lambda_{\mathrm{cog}}$ rises (time pressure, multitasking, limited statistical training), the preference-free tool is both less likely to be \emph{used} (part (a)) and, conditional on use, less likely to be \emph{preferred} (part (b)). This provides a natural microfoundation for algorithmic aversion and low uptake: highly informative probability tools can be ignored because they are cognitively expensive to exploit.

Second, when learning frictions are severe, both $Q_0$ and $Q_1$ may be close to uninformative, and the information loss $\Delta_I$ is small. In that case, even modest attention savings can rationalise embedding. This aligns with the paper's broader message: when the achievable informativeness frontier is low, the opportunity cost of compressing predictions is low.

Third, holding fixed the information loss $\Delta_I$, preference embedding is more likely to dominate when the attention savings $\Delta_{MI}$ is large, since the threshold $\lambda^\star=\Delta_I/\Delta_{MI}$ falls. More generally, what matters for the reversal is the ratio $\Delta_I/\Delta_{MI}$: when changes to the upstream learning problem make the preference-free signal more informative, both $\Delta_I$ and $\Delta_{MI}$ typically move, so the net effect on $\lambda^\star$ is not signed without further restrictions.

Finally, individual differences in probabilistic sophistication can be represented as heterogeneity in $\lambda_{\mathrm{cog}}$. For instance, \citet{caplinDeanLeahy2024} document stable heterogeneity in calibration skill; in the present framework, such heterogeneity maps naturally to differences in the effective attention price, and therefore to differences in which pipeline users prefer.

\subsection{Why the limited-menu assumption matters}

Assumption~\ref{ass:RI_menu_sec9} restricts the user to a menu of deployed tools rather than allowing them to freely redesign the signal. This restriction is essential. If a decision maker could costlessly garble any received signal, then a more informative signal could never be worse: the user could always ignore or coarsen it optimally, replicating any outcome achievable under a less informative signal.

In particular, if $Q_1$ is a garbling of $Q_0$ (as in the contraction interpretation underlying Theorem~\ref{thm:separation}), then the feasible set of posteriors achievable from $Q_0$ contains that from $Q_1$, and an RI agent could weakly dominate the embedded pipeline by starting from $Q_0$ and (if desired) optimally compressing it. Thus, an embedding reversal is fundamentally a statement about \emph{rigidity in deployment}: users do not generally have access to costless, optimally designed garblings of model outputs; they face interfaces.

This observation also clarifies an important design implication: if a designer can keep preference-free training \emph{and} provide a cognitively light interface (for example, discretised risk buckets or explicit recommendations), then one can often obtain the benefits of attention savings without sacrificing training informativeness. Preference embedding is one way to shift cognitive burden upstream, but it is not the only way.

\section{RLHF as Preference Embedding}
\label{sec:rlhf}

This paper studies a simple design question: when should asymmetric preferences be built \emph{into}
the learning objective, and when they should be implemented \emph{after} learning through
post-processing and decision rules? The formal results are developed for scalar posteriors about
binary outcomes, but the same ``location of preferences'' question arises repeatedly in modern AI
systems.

This section answers two questions that connect the theory to current practice in large language
models (LLMs).  
\begin{enumerate}
    \item \textbf{How does the model relate to LLMs?} Many deployed LLM systems contain scalar--binary components that are \emph{exactly} of the form studied in Sections~\ref{sec:setup}--\ref{sec:loss}: safety and moderation classifiers, quality or acceptability scorers, and routing rules that map a score into a discrete action (respond, refuse, regenerate, or escalate). For these modules, the mapping is literal, and the separation principle has a direct implementation meaning: learn calibrated probabilities with a preference-free proper loss, then implement asymmetric trade-offs via downstream thresholds, routing, or best-of-$n$ selection.
    \item \textbf{Where does RLHF sit in these results?} Reinforcement learning from human feedback (RLHF) is a generator-side alignment method that modifies the \emph{output distribution} of an LLM so that outputs scored as ``preferred'' become more likely. In the present lens, RLHF is a form of preference embedding: it shifts the burden of implementing a particular weighting of downstream criteria upstream into training. This can be beneficial when the deployment objective is stable, and the learned reward is well-specified (Theorem~\ref{thm:rlhf_fixed}). But it can impose a capability tax when objectives vary, are multi-dimensional, or drift over time (Theorem~\ref{thm:rlhf_separation}), and it can amplify reward misspecification when the learned reward is misaligned with true quality (Proposition~\ref{prop:goodhart}). The practical distinction is the same as in the main text:
\emph{embedding} trades robustness and optionality for lower inference-time costs.
\end{enumerate}
Answering these questions involves the same design question but with different formal primitives. Theorems~\ref{thm:contraction}--\ref{thm:separation} compare posterior-belief random variables that share a common mean and are ordered by convex order. The RLHF results instead study how training changes a generator's output distribution through exponential tilting, which generally shifts means and is naturally compared using first-order stochastic dominance for a fixed objective. The common thread is therefore not the specific information order, but the location of preferences: baking an objective into training can reduce inference-time search or selection costs, but can reduce robustness when objectives drift or are misspecified.

The discussion proceeds as follows. Section~\ref{subsec:llm_binary} records the strict mapping for binary acceptability scoring. The remainder treats RLHF as an ``embed versus post-process'' choice for generators, emphasising the minimum technical machinery needed to interpret RLHF through the paper's results.

\subsection{Binary acceptability scoring in LLM deployment}
\label{subsec:llm_binary}

While the paper’s formal results are stated for scalar posteriors about binary outcomes, many LLM
deployments contain exactly such scalar--binary modules. Fix a prompt $X$ and a candidate completion
$Z$. Let $Y\in\{0,1\}$ where $Y=1$ if the completion is acceptable under the deployment policy, and $Y=0$ otherwise. A scoring model observes features of $(X,Z)$ and produces a signal $S$, inducing the posterior belief $Q=\mathbb{P}(Y=1\mid S)\in[0,1]$. A deployment policy then chooses an action $a\in\mathcal{A}$ (show $Z$, refuse, regenerate, ask a follow-up, or route to human review) with payoff $u(a,Y)$. This is exactly the environment of Section~\ref{sec:setup}. Table~\ref{tab:llm_mapping} records the correspondence.

\begin{table}[t]
\centering
\begin{tabular}{p{5.4cm} p{8.6cm}}
\toprule
\textbf{Paper object} & \textbf{LLM-system analogue (binary acceptability setting)} \\
\midrule
Outcome $Y\in\{0,1\}$ & Whether a candidate completion is acceptable (safe, policy-compliant, high-quality) \\
\addlinespace
Signal $S$ and posterior $Q$ & Score produced by a safety/reward/quality model and the implied acceptability probability \\
\addlinespace
Preference-free proper loss & Train the scoring model with a strictly proper loss to estimate $q=\mathbb{P}(Y=1\mid \cdot)$ \\
\addlinespace
Preference-embedded loss & Train the scoring model with an asymmetric loss (e.g.\ class-weighted cross-entropy) that hard-codes downstream trade-offs \\
\addlinespace
Decision problem $(\mathcal{A},u)$ & Deployment policy: show/refuse/regenerate/escalate; payoffs reflect safety and user value \\
\addlinespace
Post-processing / decision rule $a^\star(\cdot)$ & Thresholding, routing, best-of-$n$ selection, rejection based on the estimated $q$ \\
\bottomrule
\end{tabular}
\caption{Mapping the paper's framework to a common LLM deployment sub-problem: scoring candidate outputs for binary acceptability.}
\label{tab:llm_mapping}
\end{table}

In this setting, the separation message is straightforward: train the scorer with a preference-free
strictly proper loss so that it estimates $\mathbb{P}(Y=1\mid\cdot)$ (and calibrate if needed), then
implement asymmetric preferences through the downstream policy $a^\star(\cdot)$. In particular,
changes in deployment trade-offs (e.g.\ tighter safety standards or different latency costs) belong
naturally to the \emph{post-processing} and decision layer, not to re-weighting the scorer’s training
loss.

This strict application is intentionally narrow: it applies to the ubiquitous \emph{scoring and
routing} modules inside LLM systems. The remainder of this section turns to a different object:
end-to-end procedures, such as RLHF, that modify the generator itself.

\subsection{An economic primer on RLHF}
\label{subsec:rlhf_primer}

An LLM can be represented as a conditional distribution $\pi(z\mid x)$ over completions $z$ given a
prompt $x$. \emph{Pretraining} produces a baseline generator $\pi_0$ by maximum likelihood on large
text corpora. The result is a capable but undirected distribution over outputs: it reflects patterns
in its training data rather than any particular deployment objective.

RLHF is a widely used method for shifting $\pi_0$ toward outputs that humans judge as more desirable.
It is useful to separate two ingredients.
\begin{enumerate}
    \item \textbf{Reward modelling.} Humans compare pairs of candidate outputs for the same prompt and indicate which is preferable. A \emph{reward model} $r(x,z)$ is trained to predict these preferences, producing a scalar score that summarises ``how good'' an output looks according to the training preference data.\footnote{See \cite{ziegler2019finetuning, ouyang2022training, rafailov2023direct} for canonical treatments.}
    \item \textbf{Policy optimisation with a KL constraint.} The generator is then updated to increase expected reward while remaining close to $\pi_0$:
\begin{equation}
\max_{\pi} \; \mathbb{E}_{x \sim P(x)} \, \mathbb{E}_{z \sim \pi(\cdot\mid x)} \!\left[ r(x,z) \right]
\;-\; \lambda \cdot \KL(\pi \,\|\, \pi_0),
\label{eq:rlhf_objective}
\end{equation}
where $\KL(\pi \,\|\, \pi_0)=\mathbb{E}_{x}\!\left[\KL(\pi(\cdot\mid x)\,\|\,\pi_0(\cdot\mid x))\right]$ is the Kullback--Leibler divergence (relative entropy) from a distribution $\pi_0$ to $\pi$, and $\lambda>0$ controls the strength of the constraint.
\end{enumerate}
For economists, \eqref{eq:rlhf_objective} is a familiar object: it is a regularised expected-utility
problem over \emph{choice probabilities}. The KL term is an adjustment cost (or information cost)
penalising deviations from a baseline distribution. As in logit models, the parameter $\lambda$
governs how concentrated the choice probabilities become: small $\lambda$ corresponds to aggressive
optimisation of the score, while large $\lambda$ keeps the policy closer to the baseline.

The key design question is then immediate. One can implement preferences \emph{at training time} by solving \eqref{eq:rlhf_objective} (RLHF), or implement them \emph{at deployment time} by sampling candidates from $\pi_0$ and selecting among them using a scorer (reranking, rejection sampling, best-of-$n$, or routing). The former embeds a particular reward into the generator; the latter keeps the generator broadly capable and applies preferences ex post, in the spirit of separation.

\subsection{RLHF as preference embedding}
\label{subsec:rlhf_tilting}

To connect RLHF to the paper's objects, interpret each completion $z$ for prompt $x$ as having
a (latent) \emph{acceptability} probability $q(x,z)\in[0,1]$ for the relevant downstream notion of
``good'' output.\footnote{Appendix~\ref{app:rlhf_technical} provides a minimal microfoundation that
generates such a scalar $q(x,z)$ from a binary acceptability variable.}
For any generator $\pi$ and prompt $x$, let $Q_{\pi,x}$ denote the law of $q(x,Z)$ when
$Z\sim \pi(\cdot\mid x)$.
Under a prompt distribution $P(x)$, define the unconditional posterior-quality distribution as the mixture
\[
Q_\pi(\cdot):=\int Q_{\pi,x}(\cdot)\,dP(x).
\]
Equivalently, if $X\sim P$ and $Z\sim \pi(\cdot\mid X)$, then $Q_\pi$ is the law of $q(X,Z)$.
This reduces the generator comparison to a comparison of distributions on $[0,1]$, just as in the
main text.

\begin{assumption}[Aligned reward model]
\label{ass:aligned_reward}
The reward model is (cardinally) aligned with acceptability: $r(x,z)=q(x,z)$ for all $(x,z)$.
\end{assumption}

\noindent Assumption~\ref{ass:aligned_reward} isolates the mechanical effect of RLHF, abstracting from reward
misspecification. Under this assumption, RLHF has a sharp characterisation.

\begin{proposition}[RLHF as exponential tilting]
\label{prop:rlhf_optimal}
Under Assumption~\ref{ass:aligned_reward}, the solution to \eqref{eq:rlhf_objective} satisfies, for
each prompt $x$,
\begin{equation}
\pi_R(z\mid x)=\frac{\pi_0(z\mid x)\exp(q(x,z)/\lambda)}{Z_\lambda(x)},
\label{eq:rlhf_solution}
\end{equation}
where $Z_\lambda(x)=\int \pi_0(\tilde z\mid x)\exp(q(x,\tilde z)/\lambda)\,d\tilde z$.
Moreover, for each prompt $x$, the induced distribution $Q_{R,x}$ is obtained from $Q_{0,x}$ by the reweighting identity
\begin{equation}
\mathbb{E}_{Q_{R,x}}[\varphi(Q)]
=\frac{\mathbb{E}_{Q_{0,x}}\!\left[\varphi(Q)\exp(Q/\lambda)\right]}{\mathbb{E}_{Q_{0,x}}\!\left[\exp(Q/\lambda)\right]}
\qquad \text{for all bounded measurable }\varphi:[0,1]\to\mathbb{R}.
\label{eq:posterior_tilting}
\end{equation}
Consequently, unconditional expectations under $Q_R$ satisfy
\[
\mathbb{E}_{Q_R}[\varphi(Q)]
=
\int
\frac{\mathbb{E}_{Q_{0,x}}\!\left[\varphi(Q)\exp(Q/\lambda)\right]}{\mathbb{E}_{Q_{0,x}}\!\left[\exp(Q/\lambda)\right]}
\, dP(x).
\]
\end{proposition}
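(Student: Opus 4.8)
The plan is to exploit the separability of \eqref{eq:rlhf_objective} across prompts and then invoke the standard Gibbs variational characterisation. First I would substitute $r=q$ via Assumption~\ref{ass:aligned_reward} and note that, because the reward term is an expectation over $x\sim P$ and the penalty decomposes as $\KL(\pi\,\|\,\pi_0)=\mathbb{E}_x[\KL(\pi(\cdot\mid x)\,\|\,\pi_0(\cdot\mid x))]$, the objective is maximised by solving, for $P$-almost every $x$, the pointwise problem $\max_{\pi(\cdot\mid x)}\{\mathbb{E}_{z\sim\pi(\cdot\mid x)}[q(x,z)]-\lambda\,\KL(\pi(\cdot\mid x)\,\|\,\pi_0(\cdot\mid x))\}$ over distributions $\pi(\cdot\mid x)\ll\pi_0(\cdot\mid x)$. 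The absolute-continuity restriction is without loss: any $\pi$ failing it makes the penalty $+\infty$, and since $q\in[0,1]$ is bounded, such a $\pi$ cannot be optimal.

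For the pointwise problem I would use a completion-of-squares identity rather than Lagrange multipliers. Define the candidate $\pi^\star(z\mid x)$ as the right-hand side of \eqref{eq:rlhf_solution}; because $q(x,\cdot)\in[0,1]$, the normaliser $Z_\lambda(x)=\mathbb{E}_{z\sim\pi_0(\cdot\mid x)}[\exp(q(x,z)/\lambda)]\in[1,e^{1/\lambda}]$ is finite and positive, so $\pi^\star$ is a genuine density. A direct expansion of $\KL(\pi\,\|\,\pi^\star)$ gives $\KL(\pi\,\|\,\pi^\star)=\KL(\pi\,\|\,\pi_0)-\tfrac1\lambda\,\mathbb{E}_\pi[q(x,z)]+\log Z_\lambda(x)$, which rearranges to $\mathbb{E}_\pi[q(x,z)]-\lambda\,\KL(\pi\,\|\,\pi_0)=\lambda\log Z_\lambda(x)-\lambda\,\KL(\pi\,\|\,\pi^\star)$. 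Since $\KL(\pi\,\|\,\pi^\star)\ge 0$ with equality iff $\pi=\pi^\star$, the unique pointwise maximiser is $\pi^\star=\pi_R$, establishing \eqref{eq:rlhf_solution}.

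Next I would push this tilting forward to the posterior-quality distribution. The key observation is that $z$ enters both the tilt $\exp(q(x,z)/\lambda)$ and the test function through $q(x,z)$ alone. Hence for any bounded measurable $\varphi$, writing $\mathbb{E}_{Q_{R,x}}[\varphi(Q)]=\int \varphi(q(x,z))\,\pi_R(z\mid x)\,dz$, substituting \eqref{eq:rlhf_solution}, and changing variables to the law $Q_{0,x}$ of $q(x,Z)$ under $\pi_0$ yields $\mathbb{E}_{Q_{R,x}}[\varphi(Q)]=Z_\lambda(x)^{-1}\,\mathbb{E}_{Q_{0,x}}[\varphi(Q)\exp(Q/\lambda)]$. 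Identifying the denominator as $Z_\lambda(x)=\mathbb{E}_{Q_{0,x}}[\exp(Q/\lambda)]$ gives \eqref{eq:posterior_tilting}. The unconditional formula then follows immediately by integrating the conditional identity against $P(x)$ and interchanging the $x$-integral with the (bounded) expectation via Fubini, using $Q_R=\int Q_{R,x}\,dP(x)$.

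I do not expect a serious obstacle: boundedness of $q$ makes every integral finite and justifies all interchanges, so the only points requiring explicit care are the reduction to pointwise optimisation and the restriction to $\pi\ll\pi_0$. The substantive content is entirely the Gibbs identity of the second paragraph; the forward map to $Q_{R,x}$ and the mixture step are routine changes of variables.
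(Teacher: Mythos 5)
Your proposal is correct, and the second and third paragraphs (the pushforward to $Q_{R,x}$ via the observation that the tilt depends on $z$ only through $q(x,z)$, and the mixture over $x\sim P$) coincide with the paper's argument in Appendix~\ref{app:rlhf_technical}. Where you genuinely diverge is in how you solve the pointwise entropy-regularised problem. The paper forms a Lagrangian with a multiplier on the normalisation constraint and reads off \eqref{eq:rlhf_solution} from the first-order condition $q(z)-\lambda(1+\log(\pi(z)/\pi_0(z)))+\eta=0$; this is the standard calculus-of-variations derivation, but as written it only identifies a stationary point. You instead use the Gibbs variational identity
\[
\mathbb{E}_\pi[q]-\lambda\,\KL(\pi\,\|\,\pi_0)=\lambda\log Z_\lambda(x)-\lambda\,\KL(\pi\,\|\,\pi^\star),
\]
which I have checked and which is exact. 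This buys you three things the paper's proof does not make explicit: global optimality (not merely stationarity) via $\KL\ge 0$, uniqueness of the maximiser, and a clean handling of the domain (the reduction to $\pi\ll\pi_0$ and the finiteness of $Z_\lambda(x)\in[1,e^{1/\lambda}]$ from $q\in[0,1]$). The cost is negligible; if anything your route is shorter once the identity is written down. Both arguments are standard for entropy-regularised objectives, and either would serve as the proof here.
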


\noindent Proposition~\ref{prop:rlhf_optimal} shows that RLHF is a \emph{tilt} of the baseline policy
toward higher-acceptability outputs. The parameter $\lambda$ plays exactly the role of a logit scale:
as $\lambda\downarrow 0$ the policy concentrates on the highest-$q$ completions, while large $\lambda$
keeps the generator close to $\pi_0$. A proof is standard for entropy-regularised problems and is
recorded in Appendix~\ref{app:rlhf_technical} for completeness.

A simple implication is that RLHF increases the likelihood of high-$q$ outputs.

\begin{lemma}[A dominance property of RLHF]
\label{lem:tilting_properties}
Under Assumption~\ref{ass:aligned_reward}, for each prompt $x$ the tilted posterior distribution
$Q_{R,x}$ first-order stochastically dominates $Q_{0,x}$. Consequently, the unconditional mixture
$Q_R$ first-order stochastically dominates $Q_0$. In particular, $\mathbb{E}[Q_R]\ge \mathbb{E}[Q_0]$,
with strict inequality unless $Q_{0,x}$ is degenerate for $P$-almost every $x$.
\end{lemma}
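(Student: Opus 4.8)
The plan is to reduce the first-order stochastic dominance (FOSD) claim for each prompt to a single covariance inequality, then lift it to the mixture and to the means. Recall that $Q_{R,x}$ first-order stochastically dominating $Q_{0,x}$ is equivalent to $\mathbb{E}_{Q_{R,x}}[\varphi(Q)]\ge \mathbb{E}_{Q_{0,x}}[\varphi(Q)]$ for every bounded non-decreasing $\varphi:[0,1]\to\mathbb{R}$ (all expectations exist since $Q\in[0,1]$). First I would substitute the tilting identity \eqref{eq:posterior_tilting} from Proposition~\ref{prop:rlhf_optimal}, and clear the positive normaliser $\mathbb{E}_{Q_{0,x}}[\exp(Q/\lambda)]$, which turns the target inequality into
\[
\mathbb{E}_{Q_{0,x}}\!\left[\varphi(Q)\exp(Q/\lambda)\right]\;\ge\;\mathbb{E}_{Q_{0,x}}[\varphi(Q)]\cdot\mathbb{E}_{Q_{0,x}}\!\left[\exp(Q/\lambda)\right].
\]
This is exactly the statement that, under $Q_{0,x}$, the covariance between $\varphi(Q)$ and $\exp(Q/\lambda)$ is non-negative.

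The crux is this covariance inequality for two monotone functions of the same random variable, which I would establish by the standard coupling argument. Let $Q$ and $Q'$ be i.i.d.\ draws from $Q_{0,x}$ and write
\[
2\,\mathrm{Cov}\big(\varphi(Q),\exp(Q/\lambda)\big)=\mathbb{E}\Big[\big(\varphi(Q)-\varphi(Q')\big)\big(\exp(Q/\lambda)-\exp(Q'/\lambda)\big)\Big].
\]
Because both $\varphi$ and $q\mapsto\exp(q/\lambda)$ are non-decreasing, the two factors always share the same sign, so the integrand is pointwise non-negative and the covariance is $\ge 0$. This delivers the per-prompt FOSD ordering of $Q_{R,x}$ over $Q_{0,x}$.

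Lifting to the unconditional mixture is routine: since $Q_R=\int Q_{R,x}\,dP(x)$ and $Q_0=\int Q_{0,x}\,dP(x)$, integrating the conditional inequality $\mathbb{E}_{Q_{R,x}}[\varphi]\ge\mathbb{E}_{Q_{0,x}}[\varphi]$ over $x$ shows FOSD is preserved under mixing, giving $Q_R$ FOSD $Q_0$. Taking $\varphi(q)=q$ (non-decreasing) then yields $\mathbb{E}[Q_R]\ge\mathbb{E}[Q_0]$. For strictness, observe that with $\varphi(q)=q$ the coupling integrand $(Q-Q')(\exp(Q/\lambda)-\exp(Q'/\lambda))$ is \emph{strictly} positive whenever $Q\ne Q'$, because $\exp(\cdot/\lambda)$ is strictly increasing; hence $\mathbb{E}_{Q_{R,x}}[Q]-\mathbb{E}_{Q_{0,x}}[Q]=\mathrm{Cov}_{Q_{0,x}}(Q,\exp(Q/\lambda))/\mathbb{E}_{Q_{0,x}}[\exp(Q/\lambda)]>0$ exactly when $Q_{0,x}$ is non-degenerate. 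Integrating this non-negative per-prompt gap, $\mathbb{E}[Q_R]-\mathbb{E}[Q_0]>0$ as soon as $Q_{0,x}$ is non-degenerate on a set of prompts of positive $P$-measure, which is the contrapositive of the stated condition.

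I expect no serious obstacle: the tilting identity carries the structural content, and the covariance step is a one-line coupling. The only points requiring minor care are confirming that the test class (bounded non-decreasing $\varphi$) makes the covariance characterisation and the FOSD characterisation coincide, and ensuring the strictness bookkeeping correctly tracks ``degenerate for $P$-almost every $x$'' against its negation, namely non-degeneracy on a positive-measure set.
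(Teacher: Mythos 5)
Your proposal is correct and follows essentially the same route as the paper: per-prompt FOSD via the tilting identity reduced to a non-negative covariance between a monotone test function and the increasing likelihood ratio $\exp(Q/\lambda)$, then lifting to the mixture by integrating over prompts, with strictness from non-degeneracy of $Q_{0,x}$ on a positive-measure set. The only cosmetic difference is that the paper tests FOSD with indicators $\mathbf{1}\{Q\ge t\}$ and cites Chebyshev's covariance inequality, whereas you test with general bounded non-decreasing $\varphi$ and prove the covariance inequality directly via the i.i.d.\ coupling; these are the same argument.
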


\begin{proof}
Fix $x$. Equation \eqref{eq:posterior_tilting} shows that $Q_{R,x}$ is obtained from $Q_{0,x}$ by
exponential tilting with likelihood ratio proportional to $\exp(Q/\lambda)$, which is strictly
increasing in $Q$. This monotone-likelihood-ratio property implies that $Q_{R,x}$ first-order
stochastically dominates $Q_{0,x}$ (a direct proof is given in Appendix~\ref{app:rlhf_technical} using
Chebyshev's covariance inequality).

Since $Q_R$ and $Q_0$ are mixtures of $\{Q_{R,x}\}_x$ and $\{Q_{0,x}\}_x$ under the same prompt
distribution $P(x)$, first-order stochastic dominance is preserved under mixing: for each threshold
$t\in[0,1]$,
\[
\Pr_{Q_R}(Q\ge t)=\int \Pr_{Q_{R,x}}(Q\ge t)\,dP(x)
\ \ge\
\int \Pr_{Q_{0,x}}(Q\ge t)\,dP(x)=\Pr_{Q_0}(Q\ge t).
\]
The mean inequality follows from FOSD. Strictness fails only if the tilt is $P$-a.s.\ trivial, i.e.,
$Q_{0,x}$ is degenerate for $P$-almost every $x$.
\end{proof}

\noindent Lemma~\ref{lem:tilting_properties} already answers the basic ``is RLHF preference embedding?''
question in the present framework. Compared to a separated pipeline that keeps $\pi_0$ fixed and then
selects among its draws, RLHF changes the \emph{generator} so that a particular reward-weighting is
implemented upstream. The remaining question is when this change is desirable, and when it creates
the same kinds of robustness losses highlighted by Theorem~\ref{thm:separation}.

\subsection{When does RLHF improve welfare?}
\label{subsec:rlhf_separation}

The main welfare effect of RLHF is not a Blackwell comparison holding means fixed (as in the main
theorems), because RLHF shifts the \emph{level} of acceptability. When the deployment objective is
stable and aligned with the reward, this mean shift is beneficial.

\begin{theorem}[RLHF with fixed objectives]
\label{thm:rlhf_fixed}
Maintain Assumption~\ref{ass:aligned_reward}. Suppose the deployment objective can be written as
\[
W(\pi)=\mathbb{E}[V(Q_\pi)]
\]
for some nondecreasing $V:[0,1]\to\mathbb{R}$. Then, for any $\lambda>0$,
\[
W(\pi_R)\ge W(\pi_0).
\]
\end{theorem}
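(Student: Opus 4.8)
The plan is to obtain the result as an immediate consequence of the first-order stochastic dominance (FOSD) already established in Lemma~\ref{lem:tilting_properties}, combined with the standard equivalence between FOSD and the ordering of expectations of all nondecreasing functions. Crucially, $W(\pi)=\mathbb{E}[V(Q_\pi)]$ is evaluated at the \emph{unconditional} posterior-quality law, and Lemma~\ref{lem:tilting_properties} already delivers FOSD at exactly that unconditional level ($Q_R$ dominates $Q_0$, the mixing over prompts having been handled there). So no further aggregation is required, and $\lambda$ needs no separate treatment since the dominance lemma holds for every $\lambda>0$.

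First I would record the dominance in the form I actually use: for every threshold $t\in[0,1]$,
\[
\Pr_{Q_R}(Q\ge t)\ \ge\ \Pr_{Q_0}(Q\ge t).
\]
Next I would establish the key implication: if $V:[0,1]\to\mathbb{R}$ is nondecreasing, then $\mathbb{E}[V(Q_R)]\ge\mathbb{E}[V(Q_0)]$. The cleanest self-contained route is a quantile coupling. Let $U\sim\mathrm{Unif}[0,1]$ and set $\tilde Q_i:=F_i^{-1}(U)$ for $i\in\{0,R\}$, using the generalised inverse cdf of $Q_i$. Then $\tilde Q_i\stackrel{d}{=}Q_i$, and the survival-function ordering above forces $F_R^{-1}(u)\ge F_0^{-1}(u)$ pointwise, so $\tilde Q_R\ge \tilde Q_0$ almost surely. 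Because $V$ is nondecreasing, $V(\tilde Q_R)\ge V(\tilde Q_0)$ almost surely; taking expectations and using equality of laws yields the claim.

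Finally I would chain the definitions:
\[
W(\pi_R)=\mathbb{E}[V(Q_R)]\ \ge\ \mathbb{E}[V(Q_0)]=W(\pi_0),
\]
completing the proof.

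There is essentially no substantive obstacle here, since the analytic content sits entirely in Lemma~\ref{lem:tilting_properties}; the only point requiring mild care is the FOSD$\,\Rightarrow\,$monotone-expectation step for a \emph{merely} nondecreasing (possibly discontinuous, non-differentiable) $V$, which is precisely why I would use the quantile coupling rather than assume smoothness. I would also note the trivial regularity fact that a real-valued nondecreasing function on the compact interval $[0,1]$ is bounded between $V(0)$ and $V(1)$, so both expectations are finite and the comparison is well-defined. As an alternative I would mention, but not develop, the layer-cake identity $\mathbb{E}[V(Q)]=V(0)+\int_0^1 \Pr(Q\ge t)\,dV(t)$, in which the sign of the difference is manifest because $dV\ge 0$ and the survival functions are ordered; the coupling is the primary argument.
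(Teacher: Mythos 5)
Your proposal is correct and follows essentially the same route as the paper: the paper's proof is a two-line appeal to Lemma~\ref{lem:tilting_properties} (FOSD of $Q_R$ over $Q_0$) followed by the standard fact that first-order dominance orders expectations of nondecreasing functions. Your quantile-coupling justification of that last step is a valid (and slightly more careful) elaboration of what the paper leaves implicit.
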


\begin{proof}
By Lemma~\ref{lem:tilting_properties}, $Q_R$ first-order stochastically dominates $Q_0$. Since $V$ is
nondecreasing, $\mathbb{E}[V(Q_R)]\ge \mathbb{E}[V(Q_0)]$.
\end{proof}

\noindent Theorem~\ref{thm:rlhf_fixed} clarifies why RLHF is often effective in practice: when the target is
well-defined and stable (``follow instructions under a fixed policy''), shifting the generator so
that higher-reward outputs occur more often directly improves expected performance.

\subsection{When does separation dominate? }

The separation logic becomes relevant when the relevant objective is not a single fixed scalar, but
a family of criteria whose weights can vary across contexts or over time (helpfulness, harmlessness,
truthfulness, style, domain constraints, regulatory rules, and so on). In that case, embedding one
fixed scalarisation into the generator can reduce flexibility under alternative scalarisations.

Formally, suppose each completion has a vector of criterion-wise acceptability posteriors
$q(x,z)\in[0,1]^K$ and that scalar objectives are linear functionals $r_w(x,z)=w\cdot q(x,z)$ for
weights $w\in\Delta^{K-1}$. RLHF trained on a single $w$ concentrates the generator on completions
that are good for that scalarisation; a separated design keeps a richer menu and chooses $w$-specific
actions ex post.

\begin{theorem}[Separation under objective uncertainty]
\label{thm:rlhf_separation}
Fix a prompt $x$ and suppose the baseline generator $\pi_0(\cdot\mid x)$ has finite support
$\mathcal{Z}(x)$. For each completion $z\in\mathcal{Z}(x)$ let $q(z)\in[0,1]^K$ and for any weight
vector $w\in\Delta^{K-1}$ define $r_w(z):=w\cdot q(z)$. Let $w$ denote the training weights used for
RLHF, and define the induced RLHF policy
\[
\pi_R^\lambda(z\mid x)=\frac{\pi_0(z\mid x)\exp(r_w(z)/\lambda)}
{\sum_{z'\in\mathcal{Z}(x)}\pi_0(z'\mid x)\exp(r_w(z')/\lambda)}.
\]
Let the deployment weights $W$ be random with support $\mathcal{W}\subset\Delta^{K-1}$ and suppose
$\mathbb{P}(W=w')>0$ for some $w'\neq w$. Assume there exist completions $z^w,z^{w'}\in\mathcal{Z}(x)$
such that $z^w$ is a strict maximiser of $r_w$ and $z^{w'}$ is a strict maximiser of $r_{w'}$, with
$r_{w'}(z^{w'})>r_{w'}(z^w)$.

Consider an idealised separation design that, after observing $W$, selects
$z^\star(W)\in\arg\max_{z\in\mathcal{Z}(x)} r_W(z)$. Then there exists $\bar\lambda>0$ such that for
all $\lambda<\bar\lambda$,
\[
\mathbb{E}_{W}\big[r_W(z^\star(W))\big]
>
\mathbb{E}_{W}\Big[\mathbb{E}_{Z\sim\pi_R^\lambda(\cdot\mid x)}\big[r_W(Z)\big]\Big].
\]
\end{theorem}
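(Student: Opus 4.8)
The plan is to take the low-temperature limit $\lambda \downarrow 0$, in which the RLHF policy concentrates on the $r_w$-maximiser, and to show that the (fixed) separation value strictly exceeds the limiting RLHF value; a continuity argument in $\lambda$ then yields the strict inequality for all small $\lambda$. The first thing I would record is that the separation value does not depend on $\lambda$: since the idealised design selects $z^\star(W)\in\arg\max_z r_W(z)$, its payoff is
\[
A := \mathbb{E}_W\big[r_W(z^\star(W))\big] = \mathbb{E}_W\big[\textstyle\max_{z\in\mathcal{Z}(x)} r_W(z)\big].
\]
This is the key simplification: only the RLHF side varies with $\lambda$, so it suffices to control its limit and compare against the fixed quantity $A$.

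Next I would establish Gibbs concentration. Writing $M:=r_w(z^w)=\max_z r_w(z)$ and factoring $\exp(M/\lambda)$ out of the numerator and denominator of $\pi_R^\lambda(z\mid x)$, every $z\neq z^w$ carries the factor $\exp\big((r_w(z)-M)/\lambda\big)\to 0$ as $\lambda\downarrow 0$, because $z^w$ is the \emph{strict} maximiser of $r_w$ and $\pi_0(z^w\mid x)>0$ (as $z^w\in\mathcal{Z}(x)$). Hence $\pi_R^\lambda(z^w\mid x)\to 1$ and $\pi_R^\lambda(z\mid x)\to 0$ for $z\neq z^w$. Because $r_W(z)=W\cdot q(z)\in[0,1]$ for every $W\in\Delta^{K-1}$ and every $z$,
\[
\big|\mathbb{E}_{Z\sim\pi_R^\lambda}[r_W(Z)]-r_W(z^w)\big|\;\le\;\sum_{z\neq z^w}\pi_R^\lambda(z\mid x)\;=\;1-\pi_R^\lambda(z^w\mid x),
\]
a bound that is independent of $W$. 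Thus $\mathbb{E}_{Z\sim\pi_R^\lambda}[r_W(Z)]\to r_W(z^w)$ uniformly in $W$, so the RLHF value $B_\lambda:=\mathbb{E}_W\big[\mathbb{E}_{Z\sim\pi_R^\lambda}[r_W(Z)]\big]$ converges to $B_0:=\mathbb{E}_W[r_W(z^w)]$.

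I would then show $A>B_0$ using the objective-uncertainty hypothesis. Pointwise in $W$ we have $\max_z r_W(z)\ge r_W(z^w)$, so $A\ge B_0$. On the event $\{W=w'\}$, which carries positive probability by assumption, $\max_z r_{w'}(z)=r_{w'}(z^{w'})>r_{w'}(z^w)$ because $z^{w'}$ is the strict maximiser of $r_{w'}$. Therefore
\[
A-B_0\;\ge\;\mathbb{P}(W=w')\,\big(r_{w'}(z^{w'})-r_{w'}(z^w)\big)\;>\;0.
\]
Since $A$ is fixed while $B_\lambda\to B_0<A$, taking $\varepsilon=(A-B_0)/2$ and choosing $\bar\lambda$ so that $|B_\lambda-B_0|<\varepsilon$ for all $\lambda<\bar\lambda$ gives $B_\lambda<A$ on that range, which is exactly the claimed inequality.

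The only step requiring genuine care is the Gibbs/softmax concentration in the second paragraph, but the strict-maximiser assumption on $r_w$ makes it routine, and the uniform-in-$W$ bound—available precisely because each $r_W$ is valued in $[0,1]$ and $\pi_R^\lambda$ is independent of $W$—lets me pass the limit through $\mathbb{E}_W$ without any dominated-convergence machinery over the (possibly continuous) distribution of $W$. I do not even need $\mathcal{W}$ to be finite; boundedness of $r_W$ does all the work.
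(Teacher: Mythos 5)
Your proposal is correct and follows essentially the same route as the paper's proof: Gibbs/softmax concentration of $\pi_R^\lambda$ on the strict $r_w$-maximiser $z^w$ as $\lambda\downarrow 0$, combined with the strict gap $r_{w'}(z^{w'})>r_{w'}(z^w)$ on the positive-probability event $\{W=w'\}$. The only (harmless) difference is bookkeeping: the paper notes the pointwise inequality $\max_z r_W(z)\ge \mathbb{E}_{Z\sim\pi_R^\lambda}[r_W(Z)]$ for every $W$ and integrates, while you compare against the limit $B_0$ via a uniform-in-$W$ bound; both are valid.
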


\begin{proof}
Because $\mathcal{Z}(x)$ is finite and $z^w$ is a strict maximiser of $r_w$, for every $z\neq z^w$,
\[
\frac{\pi_R^\lambda(z\mid x)}{\pi_R^\lambda(z^w\mid x)}
=\frac{\pi_0(z\mid x)}{\pi_0(z^w\mid x)}
\exp\!\left(\frac{r_w(z)-r_w(z^w)}{\lambda}\right)\to 0
\quad\text{as }\lambda\downarrow 0.
\]
Hence $\pi_R^\lambda(\cdot\mid x)\Rightarrow \delta_{z^w}$, so
$\mathbb{E}_{Z\sim\pi_R^\lambda(\cdot\mid x)}[r_{w'}(Z)] \to r_{w'}(z^w)$.
Since $z^{w'}$ is a strict maximiser of $r_{w'}$ and $r_{w'}(z^{w'})>r_{w'}(z^w)$, for $\lambda$
sufficiently small,
\[
\mathbb{E}_{Z\sim\pi_R^\lambda(\cdot\mid x)}[r_{w'}(Z)] < r_{w'}(z^\star(w')).
\]
Taking expectations over $W$ and using $\mathbb{P}(W=w')>0$ yields the strict inequality.
\end{proof}

\noindent Theorem~\ref{thm:rlhf_separation} formalises a simple idea: with objective uncertainty, embedding one
particular scalarisation into the generator, purchases performance on that scalarisation by giving up
menu flexibility under other scalarisations. This is the generator analogue of the welfare loss from
embedding a particular asymmetric loss into a predictor rather than learning calibrated probabilities
and choosing ex post.

Theorem~\ref{thm:rlhf_separation} shows that KL-regularised RLHF implements an
\emph{exponential tilt} of the base generator toward completions with higher training reward
$r_w$. When the regularisation parameter $\lambda$ is small, this tilt becomes sharply
concentrated: the generator behaves like an \emph{amortised optimiser} for $r_w$ and places
nearly all probability on (one of) the $r_w$-maximising completions.
This is beneficial when evaluation uses the same objective $r_w$, but it comes with a loss of
\emph{option value}: by concentrating on what is best for $w$, the generator becomes less
useful for alternative objectives $w'$ that may be relevant at deployment.

Theorem~\ref{thm:rlhf_separation} is stated for \emph{objective uncertainty} (random deployment
weights $W$) and compares an embedded RLHF generator to an idealised separated design that chooses the
best completion \emph{after} observing the objective. The following result isolates the
simplest and most common special case: deployment evaluates completions using a \emph{single} fixed
alternative objective $r_{w'}$ that differs from the training objective $r_w$. The quantity
\[
\sup_{z\in\mathcal{Z}(x)} r_{w'}(z)\;-\;\mathbb{E}_{Z\sim\pi_R^\lambda(\cdot\mid x)}[r_{w'}(Z)]
\]
can be read as an \emph{option-value loss} from embedding $w$ into the generator: the first term is
what a system could achieve if it retained a full ``menu'' of completions and then selected the best
one for $w'$ ex post, while the second term is what the RLHF-trained generator delivers when judged
by $w'$.

\begin{corollary}[Capability tax for a fixed alternative objective]
\label{cor:capability_tax}
In the setting of Theorem~\ref{thm:rlhf_separation}, fix $w'\neq w$ and suppose
$r_{w'}(z^{w'})>r_{w'}(z^w)$ for strict maximisers $z^w\in\arg\max r_w$ and
$z^{w'}\in\arg\max r_{w'}$. Then for all $\lambda$ sufficiently small,
\[
\sup_{z\in\mathcal{Z}(x)} r_{w'}(z)
\;-\;
\mathbb{E}_{Z\sim\pi_R^\lambda(\cdot\mid x)}[r_{w'}(Z)]
\;>\;0,
\]
and the gap vanishes only if $r_w$ and $r_{w'}$ share the same maximisers.
\end{corollary}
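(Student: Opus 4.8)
The plan is to read Corollary~\ref{cor:capability_tax} as the single-objective specialisation of Theorem~\ref{thm:rlhf_separation}, so that almost all the work has already been done in the proof of that theorem. The target inequality involves the fixed alternative objective $r_{w'}$, and the key observation is that $\sup_{z\in\mathcal{Z}(x)} r_{w'}(z)=r_{w'}(z^{w'})$ because $z^{w'}$ is a strict maximiser of $r_{w'}$ over the finite support $\mathcal{Z}(x)$. Thus the left-hand gap equals $r_{w'}(z^{w'})-\mathbb{E}_{Z\sim\pi_R^\lambda(\cdot\mid x)}[r_{w'}(Z)]$, and the task reduces to bounding the RLHF expectation away from $r_{w'}(z^{w'})$ for small $\lambda$.

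First I would invoke the concentration argument already established in the proof of Theorem~\ref{thm:rlhf_separation}: since $\mathcal{Z}(x)$ is finite and $z^w$ is a strict maximiser of $r_w$, the likelihood ratio $\pi_R^\lambda(z\mid x)/\pi_R^\lambda(z^w\mid x)=\tfrac{\pi_0(z\mid x)}{\pi_0(z^w\mid x)}\exp\!\big((r_w(z)-r_w(z^w))/\lambda\big)\to 0$ as $\lambda\downarrow 0$ for every $z\neq z^w$. Hence $\pi_R^\lambda(\cdot\mid x)\Rightarrow\delta_{z^w}$, and by finiteness of the support (so $r_{w'}$ is bounded) we get $\mathbb{E}_{Z\sim\pi_R^\lambda(\cdot\mid x)}[r_{w'}(Z)]\to r_{w'}(z^w)$. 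Combining, the gap converges to $r_{w'}(z^{w'})-r_{w'}(z^w)$, which is strictly positive by the maintained hypothesis $r_{w'}(z^{w'})>r_{w'}(z^w)$. Therefore, for all $\lambda$ sufficiently small, the gap is bounded below by, say, half this limiting value, giving the desired strict positivity.

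For the final clause, I would argue the contrapositive: the gap can fail to be bounded away from zero only if $r_{w'}(z^{w'})=r_{w'}(z^w)$, i.e.\ $z^w$ also maximises $r_{w'}$. Since $z^w$ is the unique $r_w$-maximiser and $z^{w'}$ the unique $r_{w'}$-maximiser, the equality $r_{w'}(z^{w'})=r_{w'}(z^w)$ forces $z^w$ to be an $r_{w'}$-maximiser, and by strictness $z^w=z^{w'}$; symmetrically the two objectives then share the same (singleton) maximiser. This is precisely the ``share the same maximisers'' condition, so the gap vanishes only in that degenerate case.

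The main obstacle is essentially bookkeeping rather than a genuine difficulty: I must be careful that the weak convergence $\pi_R^\lambda\Rightarrow\delta_{z^w}$ legitimately passes through the expectation of $r_{w'}$, which is immediate here because the support is finite and hence $r_{w'}$ is a bounded (indeed finitely-valued) function, so convergence in distribution gives convergence of the expectation directly. The only subtlety worth flagging is the strict-maximiser assumption on $z^w$: if $r_w$ had multiple maximisers the limiting generator would place mass on a maximiser set rather than a single point, and the identification of the limit with $r_{w'}(z^w)$ would need to account for how $\pi_0$ splits mass among the $r_w$-maximisers. Under the stated strict-maximiser hypothesis this complication does not arise, so the argument is clean.
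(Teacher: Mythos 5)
Your proposal is correct and follows essentially the same route as the paper: the paper's proof is simply ``apply Theorem~\ref{thm:rlhf_separation} with $W\equiv w'$,'' and your argument unpacks exactly the concentration step ($\pi_R^\lambda\Rightarrow\delta_{z^w}$, hence $\mathbb{E}_{\pi_R^\lambda}[r_{w'}(Z)]\to r_{w'}(z^w)$) that drives that theorem's proof. Your explicit treatment of the ``gap vanishes only if the maximisers coincide'' clause is a sound and slightly more careful elaboration than the paper provides.
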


\begin{proof}
Apply Theorem~\ref{thm:rlhf_separation} with $W\equiv w'$.
\end{proof}

\noindent For small $\lambda$, the KL-regularised RLHF policy $\pi_R^\lambda$ concentrates its mass on
$r_w$-maximising completions (here captured by the strict maximiser $z^w$). If $w'$ ranks a different
completion $z^{w'}$ strictly above $z^w$, i.e.\ $r_{w'}(z^{w'})>r_{w'}(z^w)$, then the best achievable
value under the alternative objective is $\sup_{z} r_{w'}(z)=r_{w'}(z^{w'})$, whereas the RLHF
generator's expected $r_{w'}$-value is pulled toward $r_{w'}(z^w)$ as $\lambda$ becomes small. This
creates a strictly positive gap for sufficiently small $\lambda$. The only way this ``capability tax''
can vanish (in the small-$\lambda$ regime) is if the objectives are \emph{aligned at the top}, meaning
$r_w$ and $r_{w'}$ share the same maximisers, so that concentrating on $r_w$-optimal completions does
not exclude any $r_{w'}$-optimal completion.-

In many RLHF deployments, ``reward'' is not a primitive scalar but a \emph{scalarisation} of several
criteria (helpfulness, harmlessness, truthfulness, style, domain constraints, etc.), represented
here by $q^1,\ldots,q^K$. Training chooses one set of weights $w$ to turn these criteria into a single
objective $r_w$. But at deployment, the relevant trade-offs can differ across users, contexts, or time,
corresponding to a different weighting $w'$.The next result translates the
capability-tax idea into this multi-criterion language: embedding one weighting into the generator
buys performance on that weighting at the cost of flexibility under alternative weightings.

\begin{corollary}[Capability tax across criteria]
\label{cor:capability_tax_weights}
Suppose $q(z)=(q^1(z),\ldots,q^K(z))$ and RLHF is trained on $r_w(z)=\sum_{k} w_k q^k(z)$ for weights
$w\in\Delta^{K-1}$. Fix an alternative weighting $w'\neq w$. If the maximisers of $r_w$ and $r_{w'}$
differ, then for $\lambda$ sufficiently small the RLHF generator is strictly suboptimal under the
alternative scalarisation $r_{w'}$ relative to a design that selects completions using $w'$ ex post.
\end{corollary}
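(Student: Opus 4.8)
The plan is to reduce Corollary~\ref{cor:capability_tax_weights} to Corollary~\ref{cor:capability_tax}, since the only genuinely new ingredient is the multi-criterion structure $r_w(z)=\sum_k w_k q^k(z)=w\cdot q(z)$, which is precisely the linear scalarisation already assumed in Theorem~\ref{thm:rlhf_separation}. First I would observe that both the training objective $r_w$ and the alternative objective $r_{w'}$ are linear functionals of the common criterion vector $q(z)\in[0,1]^K$, so they fit the template $r_w(z)=w\cdot q(z)$, $r_{w'}(z)=w'\cdot q(z)$ of the preceding results. No new tilting identity is required: Proposition~\ref{prop:rlhf_optimal} and the $\lambda\downarrow 0$ concentration argument depend only on $r_w$ being the scalar reward used for training, not on how $r_w$ is constructed from the underlying criteria.

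Second, I would translate the hypothesis ``the maximisers of $r_w$ and $r_{w'}$ differ'' into the precise condition invoked by Corollary~\ref{cor:capability_tax}: there exist strict maximisers $z^w\in\arg\max_{z\in\mathcal{Z}(x)} r_w(z)$ and $z^{w'}\in\arg\max_{z\in\mathcal{Z}(x)} r_{w'}(z)$ with $r_{w'}(z^{w'})>r_{w'}(z^w)$. Over the finite support $\mathcal{Z}(x)$, differing maximiser sets means the completion on which RLHF concentrates is not itself optimal for $w'$, so that $z^{w'}$ strictly beats $z^w$ under $r_{w'}$.

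Third, with that strict inequality in hand I would simply apply Corollary~\ref{cor:capability_tax} (equivalently Theorem~\ref{thm:rlhf_separation} with the degenerate deployment law $W\equiv w'$): for $\lambda$ sufficiently small $\pi_R^\lambda(\cdot\mid x)\Rightarrow \delta_{z^w}$, so $\mathbb{E}_{Z\sim\pi_R^\lambda(\cdot\mid x)}[r_{w'}(Z)]\to r_{w'}(z^w)<r_{w'}(z^{w'})=\sup_{z\in\mathcal{Z}(x)} r_{w'}(z)$. This yields the strictly positive gap relative to an ex post design that selects $\arg\max_z r_{w'}(z)$ after observing $w'$, which is exactly the asserted suboptimality.

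The main obstacle is the tie-breaking bookkeeping in the second step. I would flag explicitly that ``maximisers differ'' must be read as delivering the strict inequality $r_{w'}(z^{w'})>r_{w'}(z^w)$: if instead $\arg\max r_w\subseteq\arg\max r_{w'}$ (inclusion, not equality), then RLHF concentrates on completions that remain $r_{w'}$-optimal and the tax is zero, so the corollary's phrasing implicitly excludes that degenerate inclusion. When $r_w$ has several maximisers, the limit of $\pi_R^\lambda$ is $\pi_0$ restricted to $\arg\max r_w$ rather than a single point mass, and the gap is strictly positive precisely when $\pi_0$ places positive weight on some $r_w$-maximiser that is not $r_{w'}$-optimal. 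Handling this cleanly requires only the strict-maximiser assumption already carried over from Theorem~\ref{thm:rlhf_separation}, under which the reduction to Corollary~\ref{cor:capability_tax} is immediate.
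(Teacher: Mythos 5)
Your proposal is correct and follows essentially the same route as the paper, which proves the corollary by direct restatement of Corollary~\ref{cor:capability_tax} (equivalently Theorem~\ref{thm:rlhf_separation} with $W\equiv w'$) under the linear scalarisation $r_{w'}(z)=\sum_k w'_k q^k(z)$. Your additional care in translating ``maximisers differ'' into the strict inequality $r_{w'}(z^{w'})>r_{w'}(z^w)$ via the strict-maximiser hypothesis is a useful clarification, but it does not change the argument.
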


\begin{proof}
This is a direct restatement of Corollary~\ref{cor:capability_tax} with $r_{w'}(z)=\sum_k w'_k q^k(z)$.
\end{proof}

\noindent If the maximisers of $r_w$ and $r_{w'}$ differ, then there exists a completion that is optimal for the
alternative trade-off $w'$ but not selected by optimising for $w$. When $\lambda$ is small, the RLHF
generator concentrates on completions that are optimal for $w$, making it strictly suboptimal when
evaluated under $w'$. By contrast, a separated design that defers the scalarisation choice to
deployment can select a $w'$-maximiser ex post and therefore achieves $\sup_z r_{w'}(z)$. Formally, this
is exactly Corollary~\ref{cor:capability_tax} instantiated with the linear scalarisation
$r_{w'}(z)=\sum_k w'_k q^k(z)$.

\subsection{Reward misspecification}
\label{subsec:misspecification}

The analysis so far assumed the reward model is aligned with the true acceptability posterior
$q(x,z)$. In practice, $r(x,z)$ is estimated from finite preference data and is therefore an
imperfect proxy. When the proxy becomes the optimisation target, misspecification is amplified in
the familiar Goodhart (``when a measure becomes a target, it stops being a good measure") sense: selection pressure pushes the generator toward regions of the output space where the proxy is high, which need not coincide with regions where true acceptability is high.

A convenient reduced-form way to capture this is to decompose the reward into a true component and a
spurious component.

\begin{assumption}[Misspecified reward]
\label{ass:misspecified}
For each $(x,z)$,
\begin{equation}
r(x,z)=\alpha\, q(x,z) + (1-\alpha)\, s(x,z),
\label{eq:misspecified_reward}
\end{equation}
where $\alpha\in[0,1]$ measures alignment and $s(x,z)$ is a spurious feature that is rewarded in the
preference data but is not intrinsically tied to acceptability in deployment.
\end{assumption}

\begin{proposition}[Goodhart amplification under RLHF]
\label{prop:goodhart}
Under Assumption~\ref{ass:misspecified}, the RLHF policy satisfies
\[
\pi_R(z\mid x)\ \propto\ \pi_0(z\mid x)\exp\!\left(\frac{\alpha q(x,z)+(1-\alpha)s(x,z)}{\lambda}\right).
\]
Fix a prompt $x$. Let $Q_{0,x}$ denote the law of $q(x,Z)$ under $Z\sim \pi_0(\cdot\mid x)$ and let
$Q_{R,x}$ denote the law under $Z\sim \pi_R(\cdot\mid x)$. Then, for all bounded measurable $\varphi$,
\begin{equation}
\mathbb{E}_{Q_{R,x}}[\varphi(Q)]
=
\frac{\mathbb{E}_{Q_{0,x}}\!\left[\varphi(Q)\exp(\alpha Q/\lambda)\,m_{\lambda,x}(Q)\right]}
{\mathbb{E}_{Q_{0,x}}\!\left[\exp(\alpha Q/\lambda)\,m_{\lambda,x}(Q)\right]},
\qquad
m_{\lambda,x}(q):=\mathbb{E}\!\left[\exp((1-\alpha)S/\lambda)\mid Q=q\right],
\label{eq:goodhart_tilt}
\end{equation}
where the conditional expectation defining $m_{\lambda,x}$ is taken under $Z\sim \pi_0(\cdot\mid x)$,
with $Q=q(x,Z)$ and $S=s(x,Z)$.
\end{proposition}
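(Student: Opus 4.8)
The plan is to establish the two claims in sequence, reusing the Gibbs-measure structure already recorded in Proposition~\ref{prop:rlhf_optimal}. The first claim---the functional form of $\pi_R$---requires no new work: the objective in \eqref{eq:rlhf_objective} is exactly the KL-regularised expected-reward problem whose solution Proposition~\ref{prop:rlhf_optimal} derives, and that derivation uses Assumption~\ref{ass:aligned_reward} only to substitute a particular $r(x,z)$ into the exponent. Replacing $r(x,z)$ by the misspecified reward $\alpha q(x,z)+(1-\alpha)s(x,z)$ from \eqref{eq:misspecified_reward} therefore yields $\pi_R(z\mid x)\propto \pi_0(z\mid x)\exp\big((\alpha q(x,z)+(1-\alpha)s(x,z))/\lambda\big)$ directly, with normaliser $Z_\lambda(x)=\mathbb{E}_{Z\sim\pi_0(\cdot\mid x)}[\exp(r(x,Z)/\lambda)]$.

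For the tilting identity, I would first rewrite the pushforward expectation under $Q_{R,x}$ as an importance-weighted expectation under $\pi_0$. Writing $\mathbb{E}_0$ for expectation over $Z\sim\pi_0(\cdot\mid x)$ and setting $Q:=q(x,Z)$, $S:=s(x,Z)$, the policy form gives, for any bounded measurable $\varphi$,
\[
\mathbb{E}_{Q_{R,x}}[\varphi(Q)]
=\mathbb{E}_{Z\sim\pi_R(\cdot\mid x)}[\varphi(q(x,Z))]
=\frac{\mathbb{E}_0\!\left[\varphi(Q)\exp(\alpha Q/\lambda)\exp((1-\alpha)S/\lambda)\right]}
{\mathbb{E}_0\!\left[\exp(\alpha Q/\lambda)\exp((1-\alpha)S/\lambda)\right]},
\]
where the $Z_\lambda(x)$ factors cancel between numerator and denominator.

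The crucial step is then to collapse this two-dimensional tilt (in $Q$ and $S$) into a one-dimensional tilt in $Q$ alone carrying the correction factor $m_{\lambda,x}$. Here I would apply the tower property, conditioning on $Q$: since $\varphi(Q)\exp(\alpha Q/\lambda)$ is $\sigma(Q)$-measurable, it factors out of the inner conditional expectation over $S$, leaving $\mathbb{E}_0[\exp((1-\alpha)S/\lambda)\mid Q]=m_{\lambda,x}(Q)$ by definition. The numerator thus becomes $\mathbb{E}_{Q_{0,x}}[\varphi(Q)\exp(\alpha Q/\lambda)m_{\lambda,x}(Q)]$ and the denominator the same with $\varphi\equiv 1$, which is exactly \eqref{eq:goodhart_tilt}.

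The only real subtlety---and the step I would be most careful about---is the measure-theoretic legitimacy of conditioning on $Q$ when $Q$ and $S$ are both deterministic functions of the same completion $Z$. The joint law of $(Q,S)$ under $\pi_0(\cdot\mid x)$ is well-defined, $m_{\lambda,x}$ is its regular conditional expectation, and boundedness of $\varphi$ together with finiteness of the tilting exponents (e.g.\ $q,s$ bounded, or $\exp(r/\lambda)$ integrable under $\pi_0$) guarantees every expectation is finite so the ratio is well-posed. This correction factor is precisely what encodes Goodhart amplification: the induced tilt on the observable posterior is $\exp(\alpha Q/\lambda)m_{\lambda,x}(Q)$ rather than the clean $\exp(Q/\lambda)$ of Proposition~\ref{prop:rlhf_optimal}, so whenever $S$ co-varies with $Q$ the effective reweighting of posterior quality is distorted relative to the aligned benchmark.
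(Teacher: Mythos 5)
Your proposal is correct and follows essentially the same route as the paper's proof: derive the Gibbs form of $\pi_R$ by substituting the misspecified reward into the KL-regularised first-order conditions, cancel the normaliser in the ratio of tilted expectations under $\pi_0$, and then apply the tower property conditioning on $Q$ to collapse the joint tilt in $(Q,S)$ into the factor $\exp(\alpha Q/\lambda)\,m_{\lambda,x}(Q)$. Your added remarks on integrability and the legitimacy of the regular conditional expectation are sensible but not a departure from the paper's argument.
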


\noindent Proposition~\ref{prop:goodhart} makes the amplification channel explicit. Relative to the
aligned benchmark (where the tilt is solely $\exp(Q/\lambda)$), misspecification introduces an extra
tilt factor $m_{\lambda,x}(Q)$ that depends on how the spurious component varies conditional on true
acceptability (and, in general, on the prompt $x$). As optimisation becomes more aggressive (smaller $\lambda$), this extra factor can
dominate the selection. In particular, if the joint distribution of $(Q,S)$ differs between the
preference-data environment and the deployment environment (a standard distribution shift), then
optimising $r$ can move probability mass toward outputs with high $S$ that do not actually have high
acceptability in deployment. Appendix~\ref{app:rlhf_technical} gives a simple two-point example in
which RLHF lowers the true acceptance rate even though the reward appears predictive in training.

\subsection{Practical implications}
\label{subsec:rlhf_implications}

The RLHF results point to design guidance that mirrors the paper’s earlier embedding-versus-separation
diagnostics, but with the generator now playing the role of the object being embedded. A useful way to
interpret RLHF is as an amortisation choice: it moves what would otherwise be an inference-time
selection problem (sampling multiple candidates and choosing among them) into training. This can be
valuable precisely when selection at deployment is expensive or constrained by latency, compute
budgets, or a limited ability to draw many samples. In those regimes, embedding preferences into the
generator can outperform a nominally superior ex post rule that the user cannot practically execute.

At the same time, when objectives drift over time or vary across users and contexts, the separation
logic becomes more compelling. Theorem~\ref{thm:rlhf_separation} implies a capability tax from
hard-coding a single scalarisation into the generator: as optimisation becomes more aggressive, the
generator concentrates on what is best under the training objective and sacrifices option value under
alternative objectives. A modular architecture that keeps a broadly capable base generator and
implements objective-specific preferences through reranking, thresholds, filters, or routing preserves
that optionality and is therefore more robust to objective movement.

When embedding is unavoidable because post-processing is infeasible, the analysis suggests a more
conservative optimisation stance as a robustness choice. In particular, larger $\lambda$ limits the
degree of concentration of the RLHF policy and therefore limits both the capability tax under
objective mismatch and the amplification of reward misspecification. Finally, it is helpful to be
explicit about which component of the system is being trained. In LLM deployments, the strict
application discussed earlier concerns scorers and routing policies, whereas RLHF modifies the
generator itself. Conflating these levels can obscure available design degrees of freedom: one can
often train scorers in a preference-light or preference-free way while still enforcing stringent
deployment policies via thresholds and routing, reserving generator-level embedding for cases where
selection cannot be done reliably at inference time.

This perspective also clarifies how the RLHF analysis relates to the paper’s main embedding-versus-
separation results. The earlier sections compare predictors under learning frictions, whereas RLHF
changes the generator’s output distribution directly and is not a mean-preserving transformation.
Because RLHF intentionally shifts the distribution of outputs (and thus the mean acceptability level),
Theorem~\ref{thm:rlhf_fixed} naturally takes the form of a first-order dominance result rather than a
convex-order comparison. The analogue of post-processing frictions in the LLM setting is inference-
time search cost: when only a few samples can be drawn under tight latency constraints, embedding can
dominate because the optimal ex post selection rule is out of reach; when post-processing is feasible
and objectives may shift, the separation logic reasserts itself, favouring a rich base distribution
with ex post selection.

\section{Conclusion}\label{sec:conclusion}

This paper studies a simple but pervasive architectural choice in AI pipelines that support
economic decisions: whether to embed downstream preferences into the training objective or
to train a preference-free predictor and implement preferences only through post-processing
and decision rules. The central contribution is to make the “learn then choose” intuition
precise in a setting where learning is endogenous. Modelling training as a choice over
Bayes-plausible posterior distributions connects modern ML practice to information design.
Under a diminishing value of information property, preference embedding flattens the
marginal return to additional informativeness. With a mild comparability condition on
optimal posteriors, this yields a mean-preserving contraction: relative to preference-free
training, preference-embedded training induces a less dispersed distribution of posterior
beliefs. Because indirect utility in expected-utility decision problems is convex in beliefs,
that contraction translates into a robust welfare comparison: for a broad class of downstream
decision problems, the preference-free pipeline weakly dominates, and its welfare is achieved
by applying preferences ex post through the Bayes-optimal decision rule.

The results help reconcile two common pieces of practical advice that otherwise sit in
tension. On the one hand, practitioners often reach for cost-sensitive learning, class
reweighting, or “utility-weighted” objectives to reflect asymmetries in false positives and
false negatives. On the other hand, practitioners also emphasise calibration, proper scoring
rules, and thresholding as deployment tools. The analysis clarifies when the second instinct
should prevail: if one can implement the desired policy reliably at deployment, then training
to learn calibrated probabilities preserves option value across objectives that are uncertain,
contested, or changing. In such environments, embedding one particular set of trade-offs into the loss function can destroy information that cannot be recovered later, even when evaluation ultimately uses the embedded objective.

This separation logic has a natural connection to the algorithmic fairness literature \citep{rambachan2020economic, liang2022algorithmic, liang2024algorithmic}. Fairness criteria are quintessential examples of objectives that vary across jurisdictions,
organisations, and time, and that are frequently revised in response to new evidence or new
normative constraints. Many fairness interventions can be viewed as adjustments to the
decision rule given a score (for example, group-aware thresholds, constrained selection, or
procedural requirements), while the underlying prediction task remains “estimate risk” or
“estimate acceptability.” The paper’s results provide a clean economic rationale for
maintaining a calibrated, preference-free predictive core and implementing fairness as a
transparent post-processing layer that can be audited, updated, and debated without
retraining the predictor. At the same time, the analysis also clarifies why this advice is not
universal: when post-processing is costly or unreliable, upstream embedding can be
welfare-improving by compressing information and reducing the cognitive and institutional
burden of using probabilistic predictions. Understanding when fairness constraints should be
implemented as training-time embedding versus deployment-time policy remains an
important applied design question, and the framework here suggests that the answer should
hinge on the feasibility and cost of downstream implementation as much as on the fairness
criterion itself.

The same themes appear, with higher stakes, in AI alignment and safety. In large language
model deployment, many safety and quality controls already take the form of binary
acceptability scoring and routing: a system assigns a probability that a completion is
acceptable and then decides whether to show, refuse, regenerate, or escalate. For these
modules, the separation principle has a literal interpretation: train scorers with strictly
proper losses to estimate acceptability probabilities, and encode policy trade-offs in
thresholds and routing. More controversially, end-to-end alignment methods such as RLHF
can be interpreted as generator-side preference embedding. When objectives are stable and
the reward is well-specified, embedding can raise the level of acceptability by shifting the
output distribution. But when objectives are multi-dimensional or drift across contexts,
embedding one scalarisation into the generator can impose a capability tax by sacrificing
menu flexibility under alternative criteria, and aggressive optimisation can amplify reward
misspecification in a Goodhart sense. These observations suggest a promising research
direction at the boundary of economics and alignment: characterising the optimal division
of labour between (i) a broadly capable base model, (ii) modular post-processing layers that
encode policy and safety constraints, and (iii) any generator-level fine-tuning that is truly
necessary for latency or usability.

Several extensions would sharpen and broaden the conclusions. One direction is to connect
the framework more directly to fairness constraints by treating fairness objectives as a form
of objective uncertainty and studying which classes of fairness policies can be implemented
as ex post decision rules without sacrificing information. A second direction is to relax the
scalar-belief restriction more fully: multi-class and multi-criteria settings require richer
information orders and raise new selection issues, but they are also where fairness and
alignment concerns typically live (see Online Appendix \ref{app:mv}). A third direction is dynamic: objectives evolve, models are
updated, and governance interventions occur over time. Formalising how modular designs
preserve option value under such evolution, and when the costs of modularity justify
embedding, would make the separation principle operational for real institutions.

Taken together, the results support a disciplined design maxim: when feasible, build general
capabilities that preserve information and apply judgment at the point of choice; embed
preferences only when downstream implementation frictions make separation impractical.

\newpage

\bibliographystyle{aer}
\bibliography{references}

\newpage

\appendix

\section{Online Appendix: A multivariate convex-order extension}\label{app:mv}

This appendix sketches how the contraction and separation logic in Theorems~\ref{thm:contraction} and
\ref{thm:separation} extends beyond binary outcomes. The main text focuses on scalar posteriors
because convex order on $[0,1]$ is comparatively tractable and because many economic prediction tasks
are naturally scalar (risk scoring). In multi-class or multi-state environments, the natural belief
object is a \emph{vector} of posterior probabilities. The right analogue of scalar convex order is
the (vector) convex order on the simplex, which continues to deliver robust welfare comparisons for
\emph{all} expected-utility decision problems. The main limitation is comparability: in dimensions
greater than one, convex order is a partial order and ``selection'' issues are more severe.

\subsection{Beliefs and Bayes-plausible posterior distributions}

Let the outcome/state space be finite,
\[
\mathcal{Y} \coloneqq \{1,\dots,K\}, \qquad K\ge 2,
\]
and let $\Delta(\mathcal{Y})$ denote the $K$-simplex:
\[
\Delta(\mathcal{Y}) \coloneqq \left\{q\in\mathbb{R}^K_+:\sum_{y\in\mathcal{Y}} q_y = 1\right\}.
\]
A (vector) posterior belief is $q\in \Delta(\mathcal{Y})$, interpreted as $q_y=\mathbb{P}(Y=y\mid \cdot)$.
Let the prior be $\mu\in\Delta(\mathcal{Y})$.

A signal $S$ induces a random posterior vector
\[
Q \coloneqq \big(\mathbb{P}(Y=y\mid S)\big)_{y\in\mathcal{Y}}\in \Delta(\mathcal{Y}),
\]
and Bayes plausibility implies $\mathbb{E}[Q]=\mu$ (componentwise).

\begin{definition}[Bayes-plausible posterior distributions on the simplex]\label{def:Qmulti}
Let $\mathcal{Q}^{K}(\mu)$ denote the set of random vectors $Q$ taking values in $\Delta(\mathcal{Y})$
such that $\mathbb{E}[Q]=\mu$.
\end{definition}

\subsection{Multivariate convex order}

\begin{definition}[Convex order on $\Delta(\mathcal{Y})$]\label{def:cx_multi}
For $Q,Q'\in\mathcal{Q}^K(\mu)$, write $Q \succeq_{\mathrm{cx}} Q'$ if
\[
\mathbb{E}[\varphi(Q)] \ge \mathbb{E}[\varphi(Q')]
\quad \text{for all convex } \varphi:\Delta(\mathcal{Y})\to\mathbb{R}.
\]
Equivalently, $Q'$ is a mean-preserving contraction of $Q$ in the sense that there exists a coupling
$(\widetilde Q,\widetilde Q')$ with $\widetilde Q\stackrel{d}{=}Q$, $\widetilde Q'\stackrel{d}{=}Q'$,
and $\mathbb{E}[\widetilde Q\mid \widetilde Q']=\widetilde Q'$.
\end{definition}

\noindent As in the scalar case, $Q\succeq_{\mathrm{cx}} Q'$ means that $Q$ is more dispersed (hence
more informative) than $Q'$ while preserving the same mean $\mu$.

\subsection{Decision problems and the value of information with vector beliefs}

A decision problem is a set of feasible actions $\mathcal{A}$ and a payoff function
$u:\mathcal{A}\times\mathcal{Y}\to\mathbb{R}$. Given belief $q\in\Delta(\mathcal{Y})$, the decision
maker's indirect value is
\begin{equation}\label{eq:V_multi}
V(q) \coloneqq \sup_{a\in\mathcal{A}} \sum_{y\in\mathcal{Y}} q_y\,u(a,y).
\end{equation}

\begin{lemma}[Convexity of indirect value on the simplex]\label{lem:convexV_multi}
For any decision problem $(\mathcal{A},u)$, the function $V:\Delta(\mathcal{Y})\to\mathbb{R}$ defined
in \eqref{eq:V_multi} is convex.
\end{lemma}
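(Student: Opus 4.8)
The plan is to mimic exactly the scalar argument behind Lemma~\ref{lem:convexV}: exhibit $V$ as a pointwise supremum of affine functions on the simplex, and invoke the fact that such a supremum is convex. The only substantive change relative to the binary case is that the belief $q$ now ranges over the convex set $\Delta(\mathcal{Y})\subset\mathbb{R}^K$ rather than over $[0,1]$, so I would check convexity with respect to convex combinations that remain inside the simplex.

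First I would fix an action $a\in\mathcal{A}$ and consider the map $\ell_a(q):=\sum_{y\in\mathcal{Y}} q_y\,u(a,y)=\langle q,\,u(a,\cdot)\rangle$. This is a linear functional of $q$ (an inner product with the fixed payoff vector $(u(a,y))_{y\in\mathcal{Y}}$), hence affine and in particular convex on $\Delta(\mathcal{Y})$. Because $\Delta(\mathcal{Y})$ is itself convex, any convex combination $q_\theta:=\theta q_0+(1-\theta)q_1$ of beliefs $q_0,q_1\in\Delta(\mathcal{Y})$ with $\theta\in[0,1]$ again lies in the simplex, so $V(q_\theta)$ is well-defined and the convexity inequality is meaningful.

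The second step is the supremum argument. For each fixed $a$, linearity of $\ell_a$ gives $\ell_a(q_\theta)=\theta\,\ell_a(q_0)+(1-\theta)\,\ell_a(q_1)\le \theta\,V(q_0)+(1-\theta)\,V(q_1)$, where the inequality uses $\ell_a(q_i)\le \sup_{a'\in\mathcal{A}}\ell_{a'}(q_i)=V(q_i)$ for $i\in\{0,1\}$. Taking the supremum over $a$ on the left-hand side preserves the bound, yielding $V(q_\theta)\le \theta\,V(q_0)+(1-\theta)\,V(q_1)$, which is precisely convexity of $V$ on $\Delta(\mathcal{Y})$.

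There is essentially no obstacle here; the statement is the standard ``supremum of affine functions is convex'' fact, identical in substance to Lemma~\ref{lem:convexV}. The only point worth a remark is well-posedness: the lemma posits $V:\Delta(\mathcal{Y})\to\mathbb{R}$, i.e.\ the supremum is finite. If $\mathcal{A}$ is finite (or compact with $u(\cdot,y)$ continuous), finiteness is automatic; otherwise one either assumes it or observes that the convexity inequality continues to hold in the extended reals, so no genuine difficulty arises.
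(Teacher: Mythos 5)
Your proof is correct and follows exactly the same route as the paper: for each fixed action the expected payoff is linear in $q$, and $V$ is the pointwise supremum of these linear maps, hence convex. The extra details you supply (spelling out the convexity inequality and the remark on finiteness of the supremum) are fine but not needed beyond the paper's one-line argument.
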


\begin{proof}
For each fixed action $a$, the map $q\mapsto \sum_{y} q_y u(a,y)$ is linear in $q$. The pointwise
supremum of linear functions is convex.
\end{proof}

\begin{lemma}[More informative posterior distributions improve expected value]\label{lem:infoValue_multi}
If $Q\succeq_{\mathrm{cx}} Q'$, then for any decision problem $(\mathcal{A},u)$,
\[
\mathbb{E}[V(Q)] \ge \mathbb{E}[V(Q')].
\]
\end{lemma}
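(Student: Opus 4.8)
The plan is to follow the scalar template of Lemma~\ref{lem:infoValue} essentially verbatim, because the only two ingredients required are the convexity of the indirect value and the definition of convex order, both of which are already in place in the vector setting. First I would invoke Lemma~\ref{lem:convexV_multi}, which establishes that $V:\Delta(\mathcal{Y})\to\mathbb{R}$ is convex as the pointwise supremum of the linear maps $q\mapsto\sum_{y}q_y\,u(a,y)$. This is exactly the observation that identifies $V$ as an admissible test function in the class $\{\varphi:\Delta(\mathcal{Y})\to\mathbb{R}\ \text{convex}\}$ that appears in Definition~\ref{def:cx_multi}.

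Second, I would apply the definition of multivariate convex order directly with $\varphi=V$. By Definition~\ref{def:cx_multi}, $Q\succeq_{\mathrm{cx}}Q'$ means $\mathbb{E}[\varphi(Q)]\ge\mathbb{E}[\varphi(Q')]$ for every convex $\varphi$ on the simplex; specialising to $\varphi=V$ yields $\mathbb{E}[V(Q)]\ge\mathbb{E}[V(Q')]$, which is the claim. The lemma is therefore a one-line consequence of the two preceding results, exactly as in the binary case.

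The only point that requires a word of care — and the closest thing to an obstacle — is well-definedness of the two expectations, since a general convex function can take the value $+\infty$ where the supremum defining $V$ is not attained. For the decision problems of interest (for example, finite $\mathcal{A}$ with bounded payoffs, or compact $\mathcal{A}$ with continuous payoffs), $V$ is a finite, continuous, convex function on the compact set $\Delta(\mathcal{Y})$, hence bounded, so $\mathbb{E}[V(Q)]$ and $\mathbb{E}[V(Q')]$ exist and the inequality is between finite numbers. If one prefers a self-contained argument that bypasses the test-function formulation altogether, I would instead use the coupling characterisation recorded in Definition~\ref{def:cx_multi}: take $(\widetilde Q,\widetilde Q')$ with $\widetilde Q\stackrel{d}{=}Q$, $\widetilde Q'\stackrel{d}{=}Q'$, and $\mathbb{E}[\widetilde Q\mid\widetilde Q']=\widetilde Q'$, then apply conditional Jensen to the convex $V$ to obtain $V(\widetilde Q')=V(\mathbb{E}[\widetilde Q\mid\widetilde Q'])\le\mathbb{E}[V(\widetilde Q)\mid\widetilde Q']$; taking expectations and using the equalities in distribution recovers $\mathbb{E}[V(Q')]\le\mathbb{E}[V(Q)]$. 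This is the vector analogue of the conditional-Jensen step already used in the proof of Lemma~\ref{lem:garbling}.
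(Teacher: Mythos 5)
Your proposal is correct and follows exactly the paper's own two-step argument: convexity of $V$ via Lemma~\ref{lem:convexV_multi}, then the test-function definition of convex order applied with $\varphi=V$. The additional remarks on integrability and the coupling/conditional-Jensen alternative are sound but not needed beyond what the paper records.
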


\begin{proof}
By Lemma~\ref{lem:convexV_multi}, $V$ is convex on $\Delta(\mathcal{Y})$. By Definition~\ref{def:cx_multi},
convex order implies $\mathbb{E}[V(Q)]\ge \mathbb{E}[V(Q')]$.
\end{proof}

\subsection{Multivariate contraction and separation}

To keep the extension parallel to the main text, suppose that training with a $t$-indexed objective
induces a posterior distribution $Q_t\in\mathcal{Q}^K(\mu)$ that solves the reduced-form learning
problem
\begin{equation}\label{eq:learningProblem_multi}
Q_t \in \arg\min_{Q\in\mathcal{Q}^K(\mu)}\left\{\mathbb{E}[H_t(Q)] + C(Q)\right\},
\end{equation}
where $H_t:\Delta(\mathcal{Y})\to\mathbb{R}$ is the Bayes risk associated with the $t$-indexed training
objective and $C$ is a law-invariant learning-friction functional.

The multivariate analogue of Assumption~\ref{ass:DV} is the same ``increasing differences'' condition,
now formulated for convex order on the simplex.

\begin{assumption}[Diminishing value of information on $\Delta(\mathcal{Y})$]\label{ass:DV_multi}
For any $t_1>t_0$ and any $Q,Q'\in\mathcal{Q}^K(\mu)$ with $Q\succeq_{\mathrm{cx}} Q'$,
\begin{equation}\label{eq:increasingDiff_multi}
\mathbb{E}\!\left[ H_{t_1}(Q) - H_{t_0}(Q) \right]
\ge
\mathbb{E}\!\left[ H_{t_1}(Q') - H_{t_0}(Q') \right].
\end{equation}
Moreover, the inequality in \eqref{eq:increasingDiff_multi} is strict whenever
$Q\succeq_{\mathrm{cx}} Q'$ and $Q \not\stackrel{d}{=} Q'$.
\end{assumption}

\noindent A convenient sufficient condition is the same as in Proposition~\ref{prop:sufficientDV}:
if $H_t(q)=H_0(q)+t\,h(q)$ for some convex function $h:\Delta(\mathcal{Y})\to\mathbb{R}$, then
Assumption~\ref{ass:DV_multi} holds (and strictness holds if $h$ is strictly convex). When $H_t$ is
twice differentiable on the relative interior of the simplex, convexity of $h$ can be checked by
requiring its Hessian to be positive semidefinite on the tangent space
$\{v\in\mathbb{R}^K:\sum_{y} v_y=0\}$.

The multivariate analogue of Assumption~\ref{ass:comparable} is where the real pain lives.

\begin{assumption}[Comparability of optimal posteriors]\label{ass:comparable_multi}
For any $t_1>t_0$, and for any minimisers $Q_{t_1}$ and $Q_{t_0}$ of \eqref{eq:learningProblem_multi},
the pair $(Q_{t_0},Q_{t_1})$ is comparable in convex order: either $Q_{t_0}\succeq_{\mathrm{cx}}Q_{t_1}$
or $Q_{t_1}\succeq_{\mathrm{cx}}Q_{t_0}$.
\end{assumption}

\begin{theorem}[Multivariate contraction under preference embedding]\label{thm:contraction_multi}
Suppose Assumptions~\ref{ass:DV_multi} and~\ref{ass:comparable_multi} hold. Let $\{Q_t\}_{t\in[0,1]}$ be
any selection of minimisers of \eqref{eq:learningProblem_multi}. If $t_1>t_0$, then
\[
Q_{t_0} \succeq_{\mathrm{cx}} Q_{t_1}.
\]
\end{theorem}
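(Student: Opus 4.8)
The plan is to replicate the scalar argument of Theorem~\ref{thm:contraction} essentially verbatim, because nothing in that proof actually used one-dimensionality of the belief space. The three ingredients it relied on all survive on the simplex: (i) $Q_{t_0}$ and $Q_{t_1}$ both minimise objectives that share a common, $t$-independent friction $C$; (ii) the increasing-differences inequality for convex order (now Assumption~\ref{ass:DV_multi}); and (iii) comparability of the two optimisers (now Assumption~\ref{ass:comparable_multi}). So the structure is a revealed-preference argument that cancels costs, followed by a case split on comparability.

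First I would fix $t_1>t_0$ and take arbitrary minimisers $Q_{t_1}$ and $Q_{t_0}$ of \eqref{eq:learningProblem_multi}. Writing the two optimality inequalities — each minimiser weakly beats the other when evaluated under its own Bayes risk plus the \emph{same} friction $C$ — and adding them, the cost terms cancel because $C$ does not depend on $t$ (no convexity of $C$ is needed here). This yields the revealed-preference inequality
\[
\mathbb{E}\!\left[H_{t_1}(Q_{t_1}) - H_{t_0}(Q_{t_1})\right] \le \mathbb{E}\!\left[H_{t_1}(Q_{t_0}) - H_{t_0}(Q_{t_0})\right],
\]
which is the exact simplex analogue of \eqref{eq:keyIneq}.

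Next I would invoke Assumption~\ref{ass:comparable_multi} to reduce to two cases. If $Q_{t_0}\succeq_{\mathrm{cx}}Q_{t_1}$, the claim holds immediately. If instead $Q_{t_1}\succeq_{\mathrm{cx}}Q_{t_0}$, then Assumption~\ref{ass:DV_multi} delivers the reverse inequality, strict unless $Q_{t_1}\stackrel{d}{=}Q_{t_0}$. Combined with the displayed revealed-preference bound, this forces $Q_{t_1}\stackrel{d}{=}Q_{t_0}$, a knife-edge in which $Q_{t_0}\succeq_{\mathrm{cx}}Q_{t_1}$ holds trivially. Hence $Q_{t_0}\succeq_{\mathrm{cx}}Q_{t_1}$ in every case, completing the argument.

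The main obstacle is not in this argument at all: it lives entirely in justifying Assumption~\ref{ass:comparable_multi}, which the theorem simply posits. On $[0,1]$ convex order is ``almost total'' along a single-index learning frontier, so comparability is mild; on $\Delta(\mathcal{Y})$ it is a genuine partial order, and two $t$-indexed optimisers can fail to be convex-order comparable. The increasing-differences inequality alone only signs the \emph{direction} of any comparison that does hold — a higher-$t$ optimiser cannot be a strict mean-preserving spread of a lower-$t$ one — but it cannot manufacture comparability where none exists. I would therefore flag explicitly that the substantive content of the multivariate result is concentrated in the comparability hypothesis, and that the work genuinely absent from the scalar case is establishing sufficient conditions for it: for instance, unique minimisers together with a nested, Blackwell-ordered family of attainable signals, via the vector analogue of Lemma~\ref{lem:garbling}.
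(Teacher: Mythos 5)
Your proof is correct and follows essentially the same route as the paper's: the same revealed-preference cost-cancellation yielding the analogue of \eqref{eq:keyIneq}, the same case split on Assumption~\ref{ass:comparable_multi}, and the same contradiction-via-strictness step forcing the knife-edge $Q_{t_1}\stackrel{d}{=}Q_{t_0}$. Your closing observation that the substantive burden sits in the comparability hypothesis also matches the paper's own discussion of the multivariate extension's limitations.
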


\begin{proof}
Fix $t_1>t_0$ and let $Q_{t_1}$ and $Q_{t_0}$ be minimisers at $t_1$ and $t_0$, respectively.
Optimality implies
\[
\mathbb{E}[H_{t_1}(Q_{t_1})] + C(Q_{t_1})
\le
\mathbb{E}[H_{t_1}(Q_{t_0})] + C(Q_{t_0}),
\]
and
\[
\mathbb{E}[H_{t_0}(Q_{t_0})] + C(Q_{t_0})
\le
\mathbb{E}[H_{t_0}(Q_{t_1})] + C(Q_{t_1}).
\]
Adding the two inequalities cancels the cost terms and yields
\begin{equation}\label{eq:keyIneq_multi}
\mathbb{E}\!\left[H_{t_1}(Q_{t_1}) - H_{t_0}(Q_{t_1})\right]
\le
\mathbb{E}\!\left[H_{t_1}(Q_{t_0}) - H_{t_0}(Q_{t_0})\right].
\end{equation}
By Assumption~\ref{ass:comparable_multi}, either $Q_{t_0}\succeq_{\mathrm{cx}}Q_{t_1}$ (done) or
$Q_{t_1}\succeq_{\mathrm{cx}}Q_{t_0}$. In the latter case, Assumption~\ref{ass:DV_multi} implies the
reverse inequality to \eqref{eq:keyIneq_multi}, with strictness unless $Q_{t_1}\stackrel{d}{=}Q_{t_0}$.
Hence \eqref{eq:keyIneq_multi} can hold only if $Q_{t_1}\stackrel{d}{=}Q_{t_0}$, in which case
$Q_{t_0}\succeq_{\mathrm{cx}}Q_{t_1}$ holds as well. Therefore $Q_{t_0}\succeq_{\mathrm{cx}}Q_{t_1}$ in
all cases.
\end{proof}

\begin{corollary}[Multivariate robust separation principle]\label{cor:separation_multi}
Under the assumptions of Theorem~\ref{thm:contraction_multi}, let $Q_0$ and $Q_1$ denote the posterior
distributions induced by preference-free ($t=0$) and preference-embedded ($t=1$) training, respectively.
Then, for any decision problem $(\mathcal{A},u)$,
\[
\mathbb{E}[V(Q_0)] \ge \mathbb{E}[V(Q_1)].
\]
\end{corollary}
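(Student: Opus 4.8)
The plan is to prove this exactly as in the scalar separation principle (Theorem~\ref{thm:separation}), by composing the multivariate contraction result with the multivariate value-of-information lemma; no machinery beyond what the appendix has already established is needed. First I would invoke Theorem~\ref{thm:contraction_multi} with $t_0=0$ and $t_1=1$. Because $Q_0$ and $Q_1$ are by definition minimisers of the learning problem \eqref{eq:learningProblem_multi} at the preference-free and preference-embedded indices, and because the hypotheses of Theorem~\ref{thm:contraction_multi} (Assumptions~\ref{ass:DV_multi} and~\ref{ass:comparable_multi}) are in force, the theorem delivers the convex-order dominance $Q_0 \succeq_{\mathrm{cx}} Q_1$ on the simplex.

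Second I would convert this dominance into the welfare inequality. By Lemma~\ref{lem:infoValue_multi}, $Q_0 \succeq_{\mathrm{cx}} Q_1$ implies $\mathbb{E}[V(Q_0)] \ge \mathbb{E}[V(Q_1)]$ for every decision problem $(\mathcal{A},u)$, since $V$ is convex on $\Delta(\mathcal{Y})$ by Lemma~\ref{lem:convexV_multi} and convex order is defined precisely by the reversal of expectations under all convex test functions. That closes the argument in two lines, mirroring the scalar proof verbatim.

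The proof itself therefore carries no obstacle; the genuine difficulty has been pushed entirely into the hypothesis, specifically Assumption~\ref{ass:comparable_multi}. On the simplex convex order is only a partial order, so two optimisers of \eqref{eq:learningProblem_multi} need not be comparable at all, and the increasing-differences inequality in Assumption~\ref{ass:DV_multi} pins down only a \emph{direction} conditional on comparability rather than comparability itself. The hard part will be justifying that the preference-free and preference-embedded optimisers actually lie on a common convex-order chain---the multivariate analogue of the single-index ``informativeness frontier'' discussed after Assumption~\ref{ass:comparable}. In practice this would require additional structure on the learning technology $C(\cdot)$, for instance a nested Blackwell-ordered family of attainable signals that generalises Lemma~\ref{lem:garbling} to vector posteriors, so that feasible predictors are automatically nested by garbling and hence comparable.
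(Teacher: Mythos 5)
Your proof is correct and is exactly the paper's own argument: invoke Theorem~\ref{thm:contraction_multi} to obtain $Q_0 \succeq_{\mathrm{cx}} Q_1$, then apply Lemma~\ref{lem:infoValue_multi}. Your closing remarks on the burden carried by Assumption~\ref{ass:comparable_multi} also match the paper's own discussion of the extension's limitations.
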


\begin{proof}
Theorem~\ref{thm:contraction_multi} gives $Q_0\succeq_{\mathrm{cx}} Q_1$. Lemma~\ref{lem:infoValue_multi}
then implies $\mathbb{E}[V(Q_0)]\ge \mathbb{E}[V(Q_1)]$.
\end{proof}

\subsection{Limitations and what this extension does \emph{not} buy you}

This multivariate extension is mathematically clean but economically sobering.

\begin{enumerate}[label=(\roman*), leftmargin=2em]
\item \textbf{Convex order is only a partial order in $K>2$.} In higher dimensions, two Bayes-plausible
posterior distributions are often incomparable. Assumption~\ref{ass:comparable_multi} is therefore much
stronger than its scalar analogue and should be read as a single-index restriction on the feasible
informativeness frontier (optimisers move along a chain of garblings/refinements).

\item \textbf{Checking diminishing value of information is harder.} In one dimension,
Assumption~\ref{ass:DV} reduces to convexity of a scalar increment. On the simplex, it becomes a
matrix (Hessian) restriction along tangent directions. This is still checkable in principle, but it is
not a ``look at the second derivative and smile'' condition.

\item \textbf{Robustness comes at a price.} The conclusion in Corollary~\ref{cor:separation_multi} is
uniform over \emph{all} decision problems because it relies on full convex order. If one is willing to
restrict the class of downstream objectives, weaker information orders can be used. For example, if
downstream value depends only on a one-dimensional index $w\cdot q$ (a scalarisation of beliefs), then
the main-text scalar theory applies to $w\cdot Q$ even when $Q$ itself is not convex-order comparable.

\item \textbf{End-to-end sequence generation is still not ``covered''.} Even though the state space of
a language model can be represented as a huge simplex, the comparability requirements behind
Theorem~\ref{thm:contraction_multi} are typically implausible for such objects. Section~\ref{subsec:llm_binary} discusses the relevant LLM interpretation in the strict binary acceptability setting.
\end{enumerate}

\noindent In short: the multivariate result shows that the separation principle is not inherently
``one-dimensional,'' but it also clarifies why the binary/scalar restriction in the main text is doing
real work. The obstacle is not the convexity of indirect value (which generalises immediately); it is
the ability to order learning outcomes by a tractable informativeness order in dimensions greater than
one.

\newpage

\section{Online Appendix: Technical details for Section~\ref{sec:rlhf}}
\label{app:rlhf_technical}

This appendix collects derivations and proofs omitted from Section~\ref{sec:rlhf}. The main text
keeps the RLHF discussion focused on the design message (embed versus post-process); the results
below are standard but included for completeness and to make the mapping self-contained.

\subsection{A minimal acceptability microfoundation}

Fix a prompt $x$. Let $\theta\in\Theta$ denote a latent ``ground truth'' or user intent with prior
$p(\theta\mid x)$. For each completion $z$, define a binary acceptability indicator
\[
Y=\mathbf{1}\{z \text{ is acceptable given }\theta\}\in\{0,1\}.
\]
The (Bayes) acceptability posterior associated with completion $z$ is
\[
q(x,z)\equiv \mathbb{P}(Y=1\mid x,z)=\mathbb{E}[Y\mid x,z]\in[0,1].
\]
For any generator $\pi(\cdot\mid x)$, define the induced distribution of acceptability posteriors
$Q_{\pi,x}$ as the law of $q(x,Z)$ when $Z\sim \pi(\cdot\mid x)$. Under a prompt distribution $P(x)$,
the unconditional distribution $Q_\pi$ used in Section~\ref{sec:rlhf} is the mixture of these
conditional laws.

A simple downstream decision problem that matches LLM deployment is an accept/regenerate choice.
Let $a\in\{\text{accept},\text{regenerate}\}$, and let payoffs be
\[
u(a,Y)=
\begin{cases}
Y-c(1-Y) & \text{if } a=\text{accept},\\
-d & \text{if } a=\text{regenerate},
\end{cases}
\qquad c>0,\ d\ge 0.
\]
Given belief $q$, the indirect value is
\[
V(q)=\max\{q(1+c)-c,\,-d\},
\]
which is convex and nondecreasing on $[0,1]$. Therefore the expected value of a generator $\pi$ can be
written as
\[
W(\pi)=\mathbb{E}[V(Q_\pi)].
\]
This delivers one concrete microfoundation for the representation used in
Theorem~\ref{thm:rlhf_fixed}. Many other LLM deployment objectives (thresholding, routing, best-of-$n$)
have the same qualitative property: value is nondecreasing in the acceptability posterior.

\subsection{Proof of Proposition~\ref{prop:rlhf_optimal}}

Fix a prompt $x$ and suppress it. Under Assumption~\ref{ass:aligned_reward}, the RLHF problem
\eqref{eq:rlhf_objective} reduces pointwise in $x$ to
\[
\max_{\pi(\cdot)}\ \int q(z)\,\pi(z)\,dz\ -\ \lambda\int \pi(z)\log\!\left(\frac{\pi(z)}{\pi_0(z)}\right)dz
\quad \text{s.t.}\quad \int \pi(z)\,dz=1.
\]
Form the Lagrangian with multiplier $\eta$ on the probability constraint:
\[
\mathcal{L}(\pi,\eta)
=
\int \Big(q(z)\pi(z)-\lambda \pi(z)\log(\pi(z)/\pi_0(z))\Big)\,dz
+\eta\Big(\int \pi(z)\,dz-1\Big).
\]
The first-order condition with respect to $\pi(z)$ yields
\[
q(z)-\lambda\Big(1+\log(\pi(z)/\pi_0(z))\Big)+\eta=0,
\]
so
\[
\pi(z)=\pi_0(z)\exp\!\left(\frac{q(z)+\eta-\lambda}{\lambda}\right)\ \propto\ \pi_0(z)\exp(q(z)/\lambda).
\]
Normalising gives \eqref{eq:rlhf_solution}.

To obtain \eqref{eq:posterior_tilting}, fix any bounded measurable $\varphi:[0,1]\to\mathbb{R}$ and
write $Q=q(Z)$ under $Z\sim\pi_0(\cdot\mid x)$. Let $Q_{0,x}$ denote the law of $Q$ under $Z\sim\pi_0(\cdot\mid x)$.
Using \eqref{eq:rlhf_solution},
\[
\mathbb{E}_{Z\sim\pi_R(\cdot\mid x)}[\varphi(q(Z))]
=
\frac{\mathbb{E}_{Z\sim\pi_0(\cdot\mid x)}\!\left[\varphi(q(Z))\exp(q(Z)/\lambda)\right]}
{\mathbb{E}_{Z\sim\pi_0(\cdot\mid x)}\!\left[\exp(q(Z)/\lambda)\right]}
=
\frac{\mathbb{E}_{Q_{0,x}}\!\left[\varphi(Q)\exp(Q/\lambda)\right]}
{\mathbb{E}_{Q_{0,x}}\!\left[\exp(Q/\lambda)\right]}.
\]
This is \eqref{eq:posterior_tilting}. Integrating the conditional identity over $x\sim P(x)$ yields the
unconditional mixture representation stated in Proposition~\ref{prop:rlhf_optimal}.

\subsection{Proof of Lemma~\ref{lem:tilting_properties}}

Fix a prompt $x$ and suppress it. Let $Q_{0,x}$ and $Q_{R,x}$ denote the induced laws of $q(Z)$ under
$Z\sim \pi_0(\cdot\mid x)$ and $Z\sim \pi_R(\cdot\mid x)$, respectively.

Let $g(Q):=\exp(Q/\lambda)$, which is increasing. For any threshold $t\in[0,1]$, apply
\eqref{eq:posterior_tilting} with $\varphi(Q)=\mathbf{1}\{Q\ge t\}$:
\[
\mathbb{P}_{Q_{R,x}}(Q\ge t)
=
\frac{\mathbb{E}_{Q_{0,x}}\!\left[\mathbf{1}\{Q\ge t\}g(Q)\right]}{\mathbb{E}_{Q_{0,x}}[g(Q)]}.
\]
Because $\mathbf{1}\{Q\ge t\}$ and $g(Q)$ are comonotone, Chebyshev's covariance inequality implies
$\mathbb{E}[\mathbf{1}\{Q\ge t\}g(Q)]\ge \mathbb{E}[\mathbf{1}\{Q\ge t\}]\mathbb{E}[g(Q)]$, hence
$\mathbb{P}_{Q_{R,x}}(Q\ge t)\ge \mathbb{P}_{Q_{0,x}}(Q\ge t)$. This is first-order stochastic
dominance for each fixed $x$. Applying the same identity with $\varphi(Q)=Q$ yields
$\mathbb{E}[Q_{R,x}]\ge \mathbb{E}[Q_{0,x}]$, with strictness unless $Q_{0,x}$ is degenerate (since $g$
is strictly increasing).

The unconditional dominance claim in Lemma~\ref{lem:tilting_properties} follows by mixing over prompts
$x\sim P(x)$, as in the main-text proof.
\subsection{Proof of Proposition~\ref{prop:goodhart}}

Fix $x$ and suppress it. Under Assumption~\ref{ass:misspecified}, the pointwise RLHF solution is
\[
\pi_R(z\mid x)\ \propto\ \pi_0(z\mid x)\exp\!\left(\frac{\alpha q(z)+(1-\alpha)s(z)}{\lambda}\right).
\]
Let $(Q,S)$ denote the pair $(q(Z),s(Z))$ under $Z\sim\pi_0(\cdot\mid x)$ and write $Q_{0,x}$ for the
marginal law of $Q$. For any bounded measurable $\varphi$,
\begin{align*}
\mathbb{E}_{Q_{R,x}}[\varphi(Q)]
&=
\frac{\mathbb{E}_{Z\sim\pi_0(\cdot\mid x)}\!\left[\varphi(q(Z))\exp\!\left(\frac{\alpha q(Z)+(1-\alpha)s(Z)}{\lambda}\right)\right]}
{\mathbb{E}_{Z\sim\pi_0(\cdot\mid x)}\!\left[\exp\!\left(\frac{\alpha q(Z)+(1-\alpha)s(Z)}{\lambda}\right)\right]}\\
&=
\frac{\mathbb{E}_{Q_{0,x}}\!\left[\varphi(Q)\exp(\alpha Q/\lambda)\,\mathbb{E}\!\left[\exp((1-\alpha)S/\lambda)\mid Q\right]\right]}
{\mathbb{E}_{Q_{0,x}}\!\left[\exp(\alpha Q/\lambda)\,\mathbb{E}\!\left[\exp((1-\alpha)S/\lambda)\mid Q\right]\right]},
\end{align*}
which is \eqref{eq:goodhart_tilt} with $m_{\lambda,x}(q)=\mathbb{E}[\exp((1-\alpha)S/\lambda)\mid Q=q]$.
\subsection{A simple Goodhart example: optimisation selects a rare ``hack''}

This example illustrates the sense in which RLHF can amplify misspecification: even if the reward
model looks informative in the baseline distribution, aggressive optimisation can concentrate on
outputs where the proxy is high but true acceptability is low.

Fix a prompt $x$ and consider a baseline generator with finite support
$\mathcal{Z}(x)=\{z_1,z_2,z_3\}$ and probabilities
$\pi_0(z_1)=0.49$, $\pi_0(z_2)=0.49$, $\pi_0(z_3)=0.02$.
Let true acceptabilities and spurious scores be
\[
(q_1,q_2,q_3)=(0.9,\,0.6,\,0.4),
\qquad
(s_1,s_2,s_3)=(0.2,\,0.3,\,1.0),
\]
and let the (misspecified) reward be $r=\alpha q+(1-\alpha)s$ with $\alpha=0.6$. Then
\[
(r_1,r_2,r_3)=(0.62,\,0.48,\,0.64),
\]
so the highest reward is attained by $z_3$ even though $z_3$ has the lowest acceptability among the
two ``reasonable'' outputs $(z_1,z_2)$.

\paragraph{Baseline acceptability.}
Under $\pi_0$, expected acceptability is
\[
\mathbb{E}_{\pi_0}[q(Z)]
=0.49\cdot 0.9+0.49\cdot 0.6+0.02\cdot 0.4
=0.743.
\]
In the baseline distribution, the reward is nonetheless positively correlated with true acceptability:
$z_1$ has both higher $q$ and higher $r$ than $z_2$, and $z_3$ is rare. (Formally, $\text{Cov}(q(Z),r(Z))>0$
under $\pi_0$.)

\paragraph{RLHF selection.}
Under RLHF, $\pi_R^\lambda(z)\propto \pi_0(z)\exp(r(z)/\lambda)$. Since $r_3>r_1>r_2$, we have
$\pi_R^\lambda\Rightarrow \delta_{z_3}$ as $\lambda\downarrow 0$. Hence
\[
\lim_{\lambda\downarrow 0}\ \mathbb{E}_{\pi_R^\lambda}[q(Z)] = q_3=0.4 < 0.743=\mathbb{E}_{\pi_0}[q(Z)].
\]
By continuity, there exists $\bar\lambda>0$ such that $\mathbb{E}_{\pi_R^\lambda}[q(Z)]<
\mathbb{E}_{\pi_0}[q(Z)]$ for all $\lambda<\bar\lambda$.

This is a reduced-form version of a standard ``reward hacking'' logic. The proxy reward appears
useful on the baseline distribution, but the optimisation step makes the generator spend probability
mass on a completion ($z_3$) that is rare under $\pi_0$ and is favoured primarily because it triggers
the spurious component of the reward. The key comparative-static is the same as in the main text:
more aggressive embedding (smaller $\lambda$) increases the scope for such amplification.

\end{document}